\documentclass{amsart}
\usepackage{amssymb,graphicx,hyperref}
\usepackage[utf8]{inputenc}

\setcounter{tocdepth}{1}
\emergencystretch2em
\multlinegap0pt

\let\frontmatter\relax
\def\mainmatter{\def\baselinestretch{1.1}\normalfont}
\def\backmatter{\def\baselinestretch{1}\normalfont}
\let\Subsection\subsection
\let\backslash\smallsetminus
\let\setminus\smallsetminus

\let\leq\leqslant   
\let\geq\geqslant
\let\emptyset\varnothing
\let\tilde\widetilde
\let\hat\widehat

\let\ldots\dots
\def\to{\mathchoice{\longrightarrow}{\rightarrow}{\rightarrow}{\rightarrow}}
\newcommand{\mto}{\mathchoice{\longmapsto}{\mapsto}{\mapsto}{\mapsto}}
\newcommand{\eprint}[1]{\href{http://arxiv.org/abs/#1}{\texttt{arXiv\string:\allowbreak#1}}}
\newcommand{\initial}[1]{#1}
\newcommand{\firstname}[1]{#1}
\newcommand{\lastname}[1]{#1}
\newcommand{\sfrac}[2]{{#1}/{#2}}
\newcommand{\Psfrac}[2]{(\sfrac{#1}{#2})}
\newcommand{\psfrac}[2]{\sfrac{(#1)}{#2}}
\newcommand{\spfrac}[2]{\sfrac{#1}{(#2)}}
\newcommand{\ppfrac}[2]{\sfrac{(#1)}{(#2)}}
\newcommand{\bmt}{\boldsymbol{t}}
\newcommand{\bmz}{\boldsymbol{z}}
\newcommand{\bmD}{\boldsymbol{D}}
\newcommand{\ie}{i.e.,\ }
\let\div\divv
\DeclareMathOperator{\Div}{Div}
\DeclareMathOperator{\Res}{Res}
\DeclareMathOperator{\Sym}{Sym}

\theoremstyle{plain}
\newtheorem{thm}{Theorem}[section]
\newtheorem{cor}[thm]{Corollary}
\newtheorem{lem}[thm]{Lemma}

\theoremstyle{definition}
\newtheorem{defin}[thm]{Definition}
\newtheorem{rem}[thm]{Remark}
\newtheorem{exa}[thm]{Example}

\numberwithin{equation}{section}

\begin{document}
\frontmatter
\title{Reconstructing WKB from topological~recursion}

\author[\initial{V.} \lastname{Bouchard}]{\firstname{Vincent} \lastname{Bouchard}}
\address{Department of Mathematical \& Statistical Sciences,
University of Alberta, 632 CAB\\
Edmonton, Alberta, Canada T6G 2G1}
\email{vincent.bouchard@ualberta.ca}
\urladdr{https://sites.ualberta.ca/~vbouchar/}

\author[\initial{B.} \lastname{Eynard}]{\firstname{Bertrand} \lastname{Eynard}}
\address{Institut de Physique Théorique, CEA Saclay\\
91191 Gif-sur-Yvette cedex, France}
\email{bertrand.eynard@cea.fr}
\urladdr{http://ipht.cea.fr/Pisp/33/bertrand.eynard.html}

\thanks{V.B. acknowledges the support of the Natural Sciences and Engineering Research Council of Canada. B.E. acknowledges support from Centre de Recherches Mathématiques de Montréal, a FQRNT
grant from the Québec government, and Piotr Su\l kowski and the ERC starting grant Fields-Knots.\\
Preprint numbers: CRM-3354(2016) and IPHT:T16-056.}

\subjclass{14H70, 81Q20, 81S10, 30F30}

\keywords{Topological recursion, WKB, quantum curves, quantization}

\begin{abstract}
We prove that the topological recursion reconstructs the WKB expansion of a quantum curve for all spectral curves whose Newton polygons have no interior point (and that are smooth as affine curves). This includes nearly all previously known cases in the literature, and many more; in particular, it includes many quantum curves of order greater than two. We also explore the connection between the choice of ordering in the quantization of the spectral curve and the choice of integration divisor to reconstruct the WKB expansion.
\end{abstract}

\maketitle
\tableofcontents
\mainmatter

\section{Introduction}

The topological recursion originally introduced in \cite{E1,CEO,EO,EO2} is now understood as being a rather universal formalism that reconstructs generating functions for various enumerative invariants from the data of a spectral curve. While it originated in the context of matrix models, it has now been shown to be closely related to other fundamental structures in enumerative geometry, such as Virasoro constraints, Frobenius structures, Givental formalism and cohomological field theories \cite{ACNP, ACNP2,DBOSS,DBNOPS,Mi,Or,KZ}. This explains, in part, why the topological recursion appears in so many different algebro-geometric context.

Another connection between the topological recursion and fundamental mathematical structures has been studied in recent years. With the intuition coming from determinantal formulae in the matrix model realm \cite{BeE, BeE2}, it has been conjectured that the topological recursion reconstructs the WKB asymptotic solution of Schrödinger-like ordinary differential equations, known as \emph{quantum curves}. More precisely, the claim is that there exists a Schrödinger-like ordinary differential operator, which is a \emph{quantization} of the original spectral curve (which is why it is called a quantum curve), and whose WKB asymptotic solution is reconstructed by the topological recursion applied to this spectral curve. This claim \cite{EO, BeE, BeE2, BorE1} has been verified for a small number of genus zero spectral curves, in various algebro-geometric contexts \cite{Al,BeE,BSLM,DDM,DM,DN,DuM,DuM2,DBMNPS,LMS,MSS,MS,Safnuk}. In the context of knot theory, this claim provides a constructive approach to the well known AJ-conjecture \cite{Ga}, which has been studied in a number of papers \cite{DFM,BoE,DFM,FGS,FGS2,GKS,GS}. The quantum curve connection also appears in the context of mirror symmetry for toric Calabi-Yau threefolds, in which the topological recursion reconstructs the mirror B-model theory \cite{BKMP,EO3,FLZ,Ma}. In fact, very interesting recent work on quantum curves in this context has appeared in \cite{CGM,GHM, GKMR, KM, KMZ, KP, Ma2}.

\subsection{Topological recursion and wave-function}

Let us be a little more explicit on the connection between topological recursion and WKB. Let us start with the topological recursion. The starting point is a spectral curve. For the purpose of this paper, a spectral curve will mean a triple $(\Sigma, x,y)$ where $\Sigma$ is a Torelli marked compact Riemann surface and $x$ and $y$ are meromorphic functions on $\Sigma$, such that the zeroes of $dx$ do not coincide with the zeroes of $dy$. Then $x$ and $y$ must satisfy an irreducible polynomial equation
\begin{equation}
P(x,y) = 0.
\end{equation}
For most of the paper, we will restrict ourselves to the case where the affine curve defined by $\{(x,y)\, | \, P(x,y) = 0\} \subset \mathbb{C}^2$ is such that its Newton polygon has no interior point, and that it is smooth as an affine curve (its projectivization may not be smooth though) -- we call such curves \emph{admissible}. In particular, admissible curves all have genus zero, and the Torelli marking is irrelevant.

Out of this spectral data, the topological recursion produces an infinite tower of meromorphic differentials $W_{g,n}(z_1,\ldots,z_n)$ on $\Sigma^n$.
In \cite{EO}, it was proposed to construct a ``wave-function'' as
\begin{multline}\label{eq:wkb}
\psi(z) =\exp\biggl(\sum_{g=0}^\infty \sum_{n=1}^\infty \frac{\hbar^{2g+n-2}}{n!}\\
\int_a^z \cdots \int_a^z \Bigl(W_{g,n}(z_1,\ldots,z_n) - \delta_{g,0}\delta_{n,2}\, \frac{dx(z_1) dx(z_2)}{(x(z_1) - x(z_2))^2} \Bigr) \biggr),
\end{multline}
where $a \in \Sigma$ is a choice of base point for integration of the meromorphic differentials. In \cite{EO,BorE1} $a$ was chosen as a pole of $x$, and if $x$ is of some degree $d$, there can be $d$ choices for $a$.\footnote{In this paper, we will assume that $\Sigma$ has genus zero, so integration is unambiguously defined for all $W_{g,n}$ with $2g-2+n \geq 0$ since they are residueless. The integral of $W_{0,1}$ may need to be regularized, but this will play no role in the following.} In fact, we will generalize the definition of the wave-function slightly by allowing more general integration divisors.
In \cite{EO,BorE1,EM1} it was argued that when the spectral curve has genus $>0$, the definition \eqref{eq:wkb} should be completed with some appropriate theta functions, which are necessary for instance to match knot polynomials \cite{BoE}.

\subsection{Quantum curve}

Then the question is whether there exists a \emph{quantum curve}, that is, a quantization $\hat{P}(\hat{x}, \hat{y}; \hbar)$ of the spectral curve $P(x,y) = 0$ that kills the wave-function:
\begin{equation}\label{eq:ode}
\hat{P}(\hat{x}, \hat{y}; \hbar) \psi = 0.
\end{equation}
What do we mean by quantum curve? To quantize the spectral curve with defining equation $P(x,y) = 0$, we map the commutative variables $(x,y)$ to non-commutative operators
\begin{equation}
\hat{x} = x, \quad \hat{y} = \hbar \frac{d}{dx},
\end{equation}
satisfying the commutation relation $[ \hat{y},\hat{x} ] = \hbar$. This turns the polynomial $P(x,y)$ into a rank $d$ linear differential operator. Of course, the process is not unique, because of ordering ambiguities. In fact, ordering is a rather involved issue here; we could add any term of the form $p(x,y) (yx-xy)^n$, with $p(x,y)$ and arbitrary polynomial of $x$ and $y$, to the polynomial $P(x,y)$ without modifying it, since $yx=xy$. However, after quantization, these terms give rise to corrections to the differential operator of the form $p(\hat{x}, \hat{y}) \hbar^n$. Hence we need to be a bit more general in our definition of quantum curves.

\begin{defin}\label{d:QC}
A \emph{quantum curve} $\hat{P}$ of a spectral curve $C$ is a rank $d$ linear differential operator in $x$, such that, after normal ordering (that is bringing all the $\hat{x}$'s to the left of the $\hat{y}$'s), it takes the form
\begin{equation}
\hat{P}(\hat{x}, \hat{y}; \hbar) = P(\hat{x}, \hat{y}) + \sum_{n \geq 1} \hbar^n P_n(\hat{x}, \hat{y}),
\end{equation}
where the leading order term $P(\hat{x}, \hat{y})$ recovers the polynomial equation of the original spectral curve (normal ordered), and the $P_n(\hat{x}, \hat{y})$ are differential operators (normal ordered) in $x$ of rank at most $d-1$.

We say that a spectral curve is \emph{simple} if there is only a finite number of $\hbar$ corrections. That is, it is simple if there exists a positive integer $N$ such that $P_n(\hat{x}, \hat{y}) = 0$ for all $n>N$.
\end{defin}

As noted above, the quantization process $P \mto \hat{P}$ is certainly not unique. However, the reverse process $\hat{P} \mto P$ is unique. Given a quantum curve $\hat{P}(\hat{x}, \hat{y}; \hbar)$, it uniquely defines an irreducible polynomial equation $P(\hat{x}, \hat{y})$ at leading order in $\hbar$, hence an associated spectral curve.

With this definition under our belt, we can now ask whether there exists a quantum curve $\hat{P}(\hat{x}, \hat{y}; \hbar)$ that kills the wave-function $\psi$. In other words, the question is whether the asymptotic series in $\hbar$ given by \eqref{eq:wkb} reconstructs the WKB expansion of some ordinary differential equation $\hat{P}(\hat{x}, \hat{y}; \hbar) \psi = 0$, where $\hat{P}(\hat{x}, \hat{y}; \hbar)$ is a quantization of the original spectral curve according to the definition above. It is clear that $\int_a^z W_{0,1}$ is the leading order term of the WKB asymptotic solution for any quantum curve associated to a given spectral curve; that is because in the topological recursion formalism $W_{0,1} = y dx$, hence it is straightforward to show that
\begin{equation}
\left[ e^{-\frac{1}{\hbar} \int_{a}^{z} W_{0,1}} \hat{P} e^{\frac{1}{\hbar} \int_{a}^{z} W_{0,1}} \right]_{\hbar = 0} = 0
\end{equation}
for any quantization of the spectral curve. Thus the question is whether the higher order terms in the $\hbar$ series in \eqref{eq:wkb} provide the full WKB asymptotic solution to a quantum curve.

The motivation for asking this question comes from matrix models. The topological recursion was originally introduced to solve Hermitian matrix models. However, it now lives a life of its own, beyond matrix models; it can be applied to any spectral curve to compute a sequence of $W_{g,n}$, and a corresponding $\psi$. It is thus natural to ask whether the mathematical structures known to be present in the context of matrix models can also be generalized to the broader context of applicability of the topological recursion.

For instance, it is well known that the partition function of a Hermitian matrix model is a particular example of a tau-function \cite{Mehta}. It can then be argued that the appropriate Schlesinger transform constructs the wave-function \eqref{eq:wkb} (for genus zero spectral curves), and that this wave-function should be a Baker-Akhiezer function for some isomonodromic integrable system. This implies that it should satisfy some Sato and Hirota equations, and that it should be annihilated by a quantum curve in the sense above.

The question then is whether this web of interconnections remains valid in the broader context of the topological recursion. As explained in \cite{BorE1}, from the topological recursion there is a natural candidate for a tau-function. For genus zero spectral curves, the wave-function \eqref{eq:wkb} is the Schlesinger transform of this conjectural tau-function. It then follows that, conjecturally, it should be annihilated by a quantum curve. This claim was proved to order $O(\hbar^3)$ in \cite{BorE1} for arbitrary spectral curves of any genus (more precisely, as mentioned above, the wave-function \eqref{eq:wkb} is only appropriate for genus zero curves; for higher genus spectral curves it must be appropriately completed with theta functions \cite{BorE1,EM1}). Our aim is to study whether this claim is true at all orders in $\hbar$.

\subsection{Our main result}

The goal of this paper is to answer this question affirmatively for a large class of spectral curves. More precisely, we prove that there exists a quantum curve for all admissible spectral curves, that is, for all spectral curves whose Newton polygons have no interior point and that are smooth as affine curves. Moreover, these quantum curves are all simple.

The class of admissible spectral curves considered in this paper includes all of the genus zero quantum curves that have already been studied in the literature (to our knowledge), and many more. It includes many quantum curves of rank greater than two.

We also study the question of whether the quantization is unique. The answer to this question turns out to be very interesting; we find (as explained in section 2.3.1 of \cite{BorE1}) a very explicit dependence between the form of the quantum curve and the choice of integration divisor to reconstruct the asymptotic expansion \eqref{eq:wkb} from the meromorphic differentials produced by the topological recursion. Different choices of integration divisors, for the same spectral curve, give rise to different quantum curves that are all quantizations of the original spectral curve; they generally differ by some choice of ordering in the quantization. We study this explicitly in many examples.

While the class of spectral curves that we study in this paper is quite large, it would be interesting to investigate whether our proof can be generalized to even more spectral curves: for instance, genus zero curves whose Newton polygons have interior points, or spectral curves in $\mathbb{C} \times \mathbb{C}^*$ or $\mathbb{C}^* \times \mathbb{C}^*$, or higher genus spectral curves. We hope to report on this in the near future.

\subsection{Outline and strategy}

To prove the existence of quantum curves for such a large class of spectral curves requires quite a few steps. We start in Section 2 by reviewing the geometry of spectral curves and their corresponding Newton polygons, and we define what we mean by admissible spectral curves.
Then, in Section 3 we reformulate the topological recursion in a ``global way'', which involves summing over sheets instead of local deck transformations near the ramification points. Such a global formulation of the topological recursion was first introduced in \cite{BHLMR, BE:2012}. But here, we push the calculations further and reformulate it in a different way, which is, for our purpose, more useful. Our main result in this section is Theorem \ref{t:ref}, which provides a neat and simple formulation of the topological recursion.

We would like to remark here that for the topological recursion to reconstruct the WKB expansion in general, we need to evaluate residues in the topological recursion at \emph{all ramification points} of the $x$-projection, not only zeros of $dx$. For most spectral curves, ramification points that are not zeros of $dx$ (\ie poles of $x$ of order $2$ or more) do not contribute to the residues, but this is not true for all curves. This is an important point that had been missed in the previous literature on topological recursion.

The next step in the program is to evaluate the residues in the topological recursion of Theorem \ref{t:ref}, which we do in Section 4. To achieve this, we propose a detailed pole analysis of the integrand. Our main result is Theorem \ref{t:pole}, which gives an explicit expression for some objects $\sfrac{p_0(z) Q^{(m)}_{g,n+1}(z; \bmz)}{dx(z)^m}$.
This is in fact perhaps the most important theorem in the paper. In practice, what it does is reconstruct a sort of loop equation from the topological recursion, from which we will be able to reconstruct the quantum curves.

Finally in section 5 we reconstruct the quantum curves. We start from the expressions in Theorem \ref{t:pole}. We define a procedure to integrate them for arbitrary integration divisors on $\Sigma$, and use it to obtain a partial differential system. We sum over $g$ and $n$ with appropriate powers of $\hbar$, and then we ``principal specialize'', meaning that we set all variables to be equal in an appropriate way. The system then becomes a system of non-linear first-order ordinary differential equations. We finally use a ``Riccati trick'' to transform this system into a system of linear first-order differential equations for some objects that we call $\psi_k$, $k=1,\ldots, r$, where $r$ is the degree of $P(x,y)$ in $y$. Those $\psi_k$ are constructed out of the wave-function introduced in \eqref{eq:wkb}. The main result of this section is Theorem \ref{thm:eqpsiD}, which presents this system of linear first-order differential equations.
This method is a generalization for rank $\geq 2$ of the method introduced in \cite{BeE, BeE2,DuM,DuM2}.

In section 5.3 we study special choices of integration divisors. If, in \eqref{eq:wkb}, we integrate from $a$ to $z$, with $a$ a pole of the function $x$, then the system simplifies very nicely. In Lemma \ref{l:QC} we then show that it can be rewritten as an order $r$ ordinary differential equation for the wave-function $\psi$ of \eqref{eq:wkb}: the quantum curve!

In section 6 we study many examples explicitly. We reproduce all genus zero quantum curves obtained in the literature (to our knowledge), and construct many more, to show that our result is not only general but also concretely applicable. All the examples presented in section 6 have also been checked numerically to a few non--trivial orders in $\hbar$ in Mathematica.

In section 7 we study the case of the $r$-Airy curve, $y^r-x= 0$, in a bit more detail, focusing on its enumerative meaning. This section is somewhat independent from the rest of the paper. We explain how the meromorphic differentials constructed by the topological recursion for the $r$-Airy curve are generating functions for $r$-spin intersection numbers; however, we postpone an explicit proof from matrix models to a future publication \cite{BEf}. This result was first announced in \cite{BEAIM}, and has now also been proved using a different approach in \cite{DBNOPS}. We give explicit calculations from the topological recursion that reproduce known $r$-spin intersection numbers.

\subsubsection*{Acknowledgments}
We would like to thank N.~Do, O.~Dumitrescu, O.~Marchal, M.~Mari\~no, M.~Mulase, and N.~Orantin for interesting discussions. We would also like to thank the referees for insightful comments. We would like to thank the \emph{Centre de recherches mathématiques} at Université de Montréal, and the thematic 2012-2013 semester on ``Moduli Spaces and their Invariants in Mathematical Physics'' where this work was initiated, and the \emph{American Institute of Mathematics}, where parts of this work was completed.

\section{The geometry}

In this section we introduce the geometric context for the topological recursion.

\Subsection{Spectral curves}

\begin{defin}\label{d:sc}
A \emph{spectral curve} is a triple $(\Sigma,x,y)$ where $\Sigma$ is a Torelli marked genus $\hat{g}$ compact Riemann surface\footnote{A Torelli marked compact Riemann surface $\Sigma$ is a genus $\hat{g}$ Riemann surface $\Sigma$ with a choice of symplectic basis of cycles $(A_1, \ldots, A_{\hat{g}}, B_1, \ldots, B_{\hat{g}}) \in H_1(\Sigma,\mathbb{Z})$.} and $x$ and $y$ are meromorphic functions on $\Sigma$, such that the zeroes of $dx$ do not coincide with the zeroes of $dy$.
\end{defin}

\begin{rem}
The definition of spectral curves can (and for many applications must) be generalized, but this restricted definition is sufficient for the purpose of this paper.
\end{rem}

Since we assume that $x$ and $y$ are meromorphic functions on $\Sigma$, this means that they must satisfy an absolutely irreducible equation of the form
\begin{align}\label{eq:sc}
P(x,y) = p_0(x) y^r + p_1(x) y^{r-1} +\cdots + p_{r-1}(x) y + p_r(x) = \sum_{i=0}^r p_{r-i}(x) y^{i} =0,
\end{align}
where the $p_i(x)$ are polynomials of $x$. Therefore, we can also see our spectral curve as being given by an irreducible affine algebraic curve \eqref{eq:sc} in $\mathbb{C}^2$, which we will call~$\Sigma_0$: in this case $\Sigma$ is the normalization of $\Sigma_0$. We will call the \emph{punctures} the poles of $x(z)$ and $y(z)$.

We will be interested in the branched covering $\pi : \Sigma \to \mathbb{P}^1$ given by the meromorphic function $x$. This branched covering agrees with the projection $\pi_0: \Sigma_0 \to \mathbb{C}$ on the $x$-axis away from the singularities and the points over $x=\infty$. We denote by $R$ the set of ramification points of $\pi$. The ramification points of $\pi$ are either at zeros of~$dx$ or at poles of $x$ of order $\geq 2$.

\subsection{Newton polygons}

Let us rewrite the defining equation \eqref{eq:sc} of the spectral curve $\Sigma$ as
\begin{equation}
P(x,y) = \sum_{(i,j) \in A} \alpha_{i,j} x^i y^j = 0,
\end{equation}
where $A \subset \mathbb{N}^2$ is the set of pairs of indices $(i,j)$ such that $\alpha_{i,j} \neq 0$ for $(i,j) \in A$.

\begin{defin}
The \emph{Newton polygon} $\triangle$ of is the convex hull of the set $A$.
\end{defin}

An example of a Newton polygon is given in Figure \ref{f:newton1}.

\begin{figure}[htb]
\begin{center}
\includegraphics[width=0.07\textwidth]{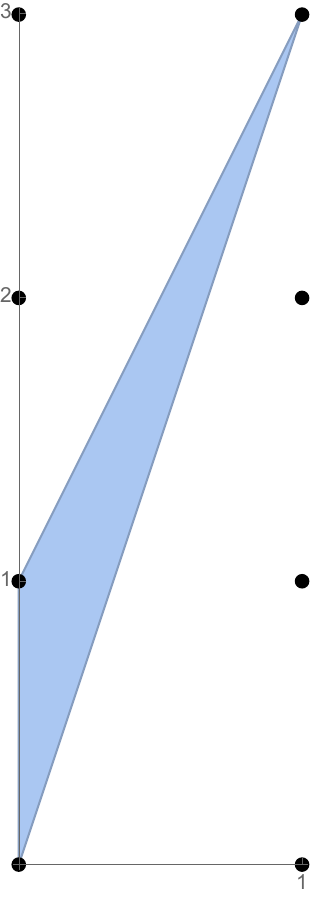}
\caption{The Newton polygon of the curve $x y^3 + y + 1 = 0$.}
\label{f:newton1}
\end{center}
\end{figure}

For $m=0,\ldots,r$, we define:
\begin{equation}\label{eq:sup}
\alpha_m = \inf\{a~|~(a,m) \in \triangle \}, \quad \beta_m = \sup\{a ~|~(a,m) \in \triangle \}.
\end{equation}
Clearly, the number of integral points of the Newton polygon $\triangle$ is then given by
\begin{equation}
\text{\# interior integral points of $\triangle$} = \sum_{i=1}^{r-1} \left(\lceil \beta_i \rceil - \lfloor \alpha_i \rfloor - 1 \right).
\end{equation}

An important result is Baker's formula \cite{Baker}:
\begin{thm}\label{t:baker}
The genus $\hat{g}$ of $\Sigma$ satisfies the inequality
\begin{equation}
\hat{g} \leq \sum_{i=1}^{r-1} \left(\lceil \beta_i \rceil - \lfloor \alpha_i \rfloor - 1 \right).
\end{equation}
The right-hand-side is of course equal to the number of interior points of the Newton polygon $\triangle$.
\end{thm}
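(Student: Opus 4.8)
The plan is to realize $\Sigma$ as the normalization of the projective closure of the affine curve $\Sigma_0$ inside the toric surface attached to the Newton polygon $\triangle$, and then to compare the geometric genus of this closure with its arithmetic genus, which toric adjunction identifies with the number of interior lattice points of $\triangle$. The inequality then comes for free from the general fact that normalization cannot raise the genus.

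First I would build the complete toric surface $X_\triangle$ from the (inner) normal fan of $\triangle$. The polygon determines a line bundle $L_\triangle$ on $X_\triangle$ whose global sections are spanned by the monomials $x^i y^j$ with $(i,j) \in \triangle \cap \mathbb{Z}^2$; the polynomial $P$ of \eqref{eq:sc} is one such section, and its zero locus is a projective curve $\bar\Sigma_0 \subset X_\triangle$ whose restriction to the dense torus is $\Sigma_0$. Since $P$ is absolutely irreducible, $\bar\Sigma_0$ is irreducible and its normalization is exactly $\Sigma$, so the genus $g$ in the statement is the geometric genus $p_g(\bar\Sigma_0)$. Next I would compute the arithmetic genus by adjunction,
\begin{equation}
p_a(\bar\Sigma_0) = 1 + \tfrac12\, L_\triangle \cdot (L_\triangle + K_{X_\triangle}),
\end{equation}
and invoke the toric dictionary: the canonical class of a toric variety is $K_{X_\triangle} = -\sum_\rho D_\rho$, so sections of $L_\triangle + K_{X_\triangle}$ correspond to lattice points lying strictly inside $\triangle$. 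A short intersection-theoretic computation then yields $p_a(\bar\Sigma_0) = \#(\mathrm{int}\,\triangle \cap \mathbb{Z}^2) = \sum_{i=1}^{r-1}(\lceil \beta_i \rceil - \lfloor \alpha_i \rfloor - 1)$, matching the right-hand side. Finally, because $p_g(\bar\Sigma_0) = p_a(\bar\Sigma_0) - \sum_{P \in \mathrm{Sing}} \delta_P$ with all $\delta_P \geq 0$, I conclude $g \leq \#$ interior points.

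The hard part will be that for a general $\triangle$ the surface $X_\triangle$ is singular, so the naive adjunction formula needs justification; I would handle this by passing to a smooth toric resolution $\widetilde X \to X_\triangle$ and tracking the pullbacks of $L_\triangle$ and $K$, verifying that the resolution changes neither the interior-point count nor the geometric genus. Equivalently — and this is the route I would actually favour for its concreteness — one can dispense with compactification and argue directly with the adjoint differentials $\eta_{a,b} = x^{a-1} y^{b-1}\, dx / \frac{\partial P}{\partial y}$: on the smooth affine locus $dx / \frac{\partial P}{\partial y}$ is holomorphic and nowhere vanishing, so every holomorphic differential has the form $R(x,y)\, dx / \frac{\partial P}{\partial y}$ with $R$ regular on $\Sigma_0$, and a valuation analysis at the punctures over $x = \infty$ and $y = \infty$ shows that holomorphicity forces the monomials of $R$ to be supported on interior points of $\triangle$. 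This exhibits an injection of $H^0(\Sigma, \Omega^1)$ into the span of the $\eta_{a,b}$ indexed by interior lattice points, again giving $g \leq \#$ interior points. In either route the delicate step is precisely this boundary bookkeeping — computing the orders of $x$, $y$ and $dx / \frac{\partial P}{\partial y}$ along the toric boundary divisors and at the singular points of $\bar\Sigma_0$ — and that is where essentially all the content of Baker's formula resides.
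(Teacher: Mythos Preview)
The paper does not actually prove this theorem; it states Baker's formula as a known result, attributes it to Baker's original 1893 paper, and refers the reader to Beelen \cite{Be} for a short proof. So there is no ``paper's own proof'' to compare against, and your proposal is doing real work rather than reproducing an argument the authors gave.

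Both of your outlines are standard and, at the level of a sketch, correct. The second route --- writing every holomorphic differential on the normalization as $q(x,y)\,dx/\tfrac{\partial P}{\partial y}$ with $\deg_y q \le r-1$ and then showing by valuation analysis at the places over $x=\infty$ and $y=\infty$ that holomorphicity forces the monomials of $q$ to lie strictly inside $\triangle$ --- is essentially the argument in Beelen's paper, which the authors cite. One small wording issue: you say ``every holomorphic differential has the form $R(x,y)\,dx/\tfrac{\partial P}{\partial y}$ with $R$ regular on $\Sigma_0$''; what you actually need (and what the valuation analysis gives) is that $R$ is a \emph{polynomial} of $y$-degree at most $r-1$, obtained by reducing modulo $P$. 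Regularity on the affine locus alone doesn't bound the $x$-degree; that bound is exactly what the analysis at the toric boundary supplies. Your toric-adjunction route is also fine and is the modern packaging of the same computation; as you note, the only genuine content is the boundary bookkeeping, and you correctly flag the singular-$X_\triangle$ issue and how to sidestep it by resolving.
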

See for instance \cite{Be} for a short proof of this result.

Another interesting result is the following. First, some notation. Given a meromorphic function $f$ on $\Sigma$, we denote by $\div(f)$ the divisor of $f$, by $\div_0(f)$ the divisor of zeros, and by $\div_\infty(f)$ the divisor of poles.

\begin{defin}\label{d:Pm}
For $m=2,\ldots,r$, we define the following meromorphic functions on $\Sigma$:
\begin{equation}\label{eq:Pm}
P_m(x,y) = \sum_{k=1}^{m-1} p_{m-1-k}(x)\,y^{k}.
\end{equation}
\end{defin}

Then, as shown in \cite{Be}, we get:
\begin{lem}[\cite{Be}]\label{l:ab}
For $m=2,\ldots,r$,
\begin{equation}
\div(P_m) \geq \alpha_{r-m+1} \div_0(x) - \beta_{r-m+1} \div_\infty(x).
\end{equation}
\end{lem}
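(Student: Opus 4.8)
The plan is to check the divisor inequality pointwise. Fix $p\in\Sigma$ and write $u=\text{ord}_p(x)$, $v=\text{ord}_p(y)$; set $c:=r-m+1$, so that $\alpha_{r-m+1}=\alpha_c$ and $\beta_{r-m+1}=\beta_c$. The coefficient of $p$ in the right-hand divisor is $\alpha_c u$ at a zero of $x$ ($u>0$), is $\beta_c u$ at a pole of $x$ ($u<0$), and is $0$ otherwise; since $\alpha_c\le\beta_c$, all three are the single expression $\min_{a\in[\alpha_c,\beta_c]}(au)$, the minimum of the linear functional $(i,j)\mapsto iu+jv$ over the horizontal slice $\triangle\cap\{j=c\}$ after subtracting $cv$. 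So the whole statement reduces to the one inequality $\text{ord}_p(P_m)+cv\ge\nu_c$, where $\nu_c:=\min_{a\in[\alpha_c,\beta_c]}(au+cv)$.

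The device I would use is to write $P_m$ in two different ways. Directly from the definition, $P_m=\sum_{l=0}^{m-2}p_l(x)\,y^{m-1-l}=y^{-c}P^{>}$, where $P^{>}:=\sum_{(i,j)\in A,\,j>c}\alpha_{i,j}x^iy^j$ gathers the rows of $P$ above height $c$. Using the curve equation $P=0$, this upper part equals $-P^{\le}$ with $P^{\le}:=\sum_{(i,j)\in A,\,j\le c}\alpha_{i,j}x^iy^j$, so that $P_m=-y^{-c}P^{\le}=-\sum_{l=m-1}^{r}p_l(x)\,y^{m-1-l}$ is simultaneously expressed through the lower rows, with nonpositive powers of $y$. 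Bounding the valuation of each sum by the minimum over its monomials yields two lower bounds $\text{ord}_p(P_m)+cv\ge\mu_{>}:=\min_{(i,j)\in A,\,j>c}(iu+jv)$ and $\text{ord}_p(P_m)+cv\ge\mu_{\le}:=\min_{(i,j)\in A,\,j\le c}(iu+jv)$. It therefore suffices to show $\max(\mu_{>},\mu_{\le})\ge\nu_c$.

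Here the geometry of the Newton polygon does the work. Put $b_j:=\alpha_j$ if $u\ge0$ and $b_j:=\beta_j$ if $u<0$, and let $\phi(j):=b_ju+jv$. Since the left boundary $j\mapsto\alpha_j$ of $\triangle$ is convex and the right boundary $j\mapsto\beta_j$ is concave, in either sign regime $\phi$ is a convex function of $j$, and by construction $\phi(c)=\nu_c$. Every support point satisfies $iu+jv\ge\phi(j)$ (from $i\ge\alpha_j$ together with $u\ge0$, or $i\le\beta_j$ together with $u<0$). Letting $j^{\ast}$ be a minimizer of the convex function $\phi$: if $j^{\ast}\le c$ then $\phi(j)\ge\phi(c)=\nu_c$ for every integer $j>c$, so $\mu_{>}\ge\nu_c$; if $j^{\ast}\ge c$ then $\phi(j)\ge\nu_c$ for every integer $j\le c$, so $\mu_{\le}\ge\nu_c$. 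In both cases $\max(\mu_{>},\mu_{\le})\ge\nu_c$, which is exactly the reduced inequality.

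The crux, and the step I expect to be the real obstacle, is the second paragraph. The defining expression for $P_m$ only sees the rows $j>c$, and the bound it gives is too weak precisely when the dominant balance of the curve at $p$ lives in the lower rows; symmetrically, the negative-power expression alone fails in the opposite regime. It is the interplay of the two representations—made possible by $P=0$—with the convexity of $\alpha_j$ and the concavity of $\beta_j$ that pins the valuation down to $\nu_c$ uniformly in the signs of $u$ and $v$. A harmless degenerate case to dispatch is when $P^{>}$ or $P^{\le}$ vanishes identically, which forces $y\mid P$ and is excluded by irreducibility (for $r\ge2$, $c\ge1$); there $P_m$ equals $0$ or $y^{-c}P$ and the inequality is immediate.
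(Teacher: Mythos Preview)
Your proof is correct. The paper does not give its own argument for this lemma; it simply cites Beelen \cite{Be}, whose proof proceeds along the same lines as yours: one writes $P_m$ both via the rows $j>c$ of the Newton polygon (the defining expression) and via the rows $j\le c$ (using $P=0$ on $\Sigma$), and then compares the resulting valuation bounds against the linear functional minimized on the slice $j=c$. Your convexity argument with $\phi(j)=b_ju+jv$ is a clean way to organize the comparison and matches the standard reasoning. The degenerate case you flag is handled correctly: $P^{>}$ is never identically zero since $p_0\ne 0$, and $P^{\le}\equiv 0$ would force $y\mid P$, contradicting irreducibility.
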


This particular lemma will be very useful for us.

\Subsection{Admissible spectral curves}

\begin{defin}\label{d:admissible}
We say that a spectral curve is \emph{admissible} if:
\begin{enumerate}
\item
its Newton polygon $\triangle$ has no interior point;
\item if the origin $(x,y) = (0,0) \in \mathbb{C}^2$ is on the curve $\{P(x,y)=0 \} \subset \mathbb{C}^2$, then the curve is smooth at this point.
\end{enumerate}
The first condition is equivalent to
\begin{equation}
\lceil \beta_i \rceil - \lfloor \alpha_i \rfloor = 1, \quad i=1,\ldots,r-1.
\end{equation}
From Baker's formula, Theorem \ref{t:baker}, it follows that admissible spectral curves have genus zero.
\end{defin}

\begin{rem}
It follows that an admissible curve must be smooth as an affine curve. In fact, its projectivization can only have singularities at $(1{:}0{:}0)$ and $(0{:}1{:}0)$. This is because, as explained in \cite{BP}, the genus of a curve is exactly equal to the number of interior points if and only if the singularities of its projectivization are all among $(0{:}0{:}1)$, $(0{:}1{:}0)$ and $(1{:}0{:}0)$, and a certain non-degeneracy condition is satisfied. Since we also impose that the curve is smooth at the origin, it follows that it cannot have singularities anywhere else in $\mathbb{C}^2$.
\end{rem}

\begin{exa}
As an example of admissible curves, we note that all curves that are linear in $x$, \ie of the form $P(x,y) = A(y)+xB(y)=0$, with $A(y)$ and $B(y)$ polynomials in $y$, are admissible. Indeed, it is easy to see that for any such curve, the Newton polygon has no interior point, and all curves of that form are smooth as affine curves. Therefore they are admissible.
\end{exa}

We will see many interesting examples of curves of that form in Section \ref{s:examples}. But admissible curves certainly do not have to be linear in $x$; we will also study many examples that do not fall into this class.\footnote{We note that admissible curves are not too difficult to classify. They are either:
\begin{enumerate}
\item linear in $x$;
\item with Newton polygon given by the convex hull of $\{(0,0), (2,0), (0,2)\}$;
\item such that they can be obtained from the previous cases by a transformation of the form
\begin{equation}
(x,y) \mto (x^a y^b, x^c y^d), \quad \text{with} \quad ad-bc = 1,
\end{equation}
combined with overall rescaling by powers of $x$ and $y$ to get an irreducible polynomial equation.
\end{enumerate}
}
\subsection{More definitions}

Let us now go back to general spectral curves.
We introduce the following notation.

\begin{defin}\label{d:tauz}
We introduce
\begin{equation}
\tau(z) = \pi^{-1}(\pi(z)), \quad \tau'(z) = \tau(z) \setminus \{z\}.
\end{equation}
$\tau(z): \Sigma \to \Sym^r(\Sigma)$ is an analytic map that takes a point $p \in \Sigma$ to the set of preimages (with multiplicity) of the inverse image of its projection with respect to the branched covering $\pi : \Sigma \to \mathbb{P}^1$. Similarly, $\tau'(z): \Sigma \to \Sym^{r-1}(\Sigma)$ is also an analytic map that takes a point $p \in \Sigma$ to the set of preimages of the inverse image of its projection, minus the original point itself (with multiplicity 1).
\end{defin}

We now introduce two objects that are canonically defined on a compact Riemann surface $\Sigma$ with a symplectic basis of cycles for $H_1(\Sigma,\mathbb{Z}$).

\begin{defin}\label{d:w}
Let $a,b \in \Sigma$. The \emph{canonical differential of the third kind} $\omega^{a-b}(z)$ is a meromorphic one-form on $\Sigma$ such that:
\begin{itemize}
\item it is holomorphic away from $z=a$ and $z=b$;
\item it has a simple pole at $z=a$ with residue $+1$;
\item it has a simple pole at $z=b$ with residue $-1$;
\item it is normalized on $A$-cycles:
\begin{equation}
\oint_{z \in A_i} \omega^{a-b}(z) = 0, \quad \text{for $i=1,\ldots,\hat{g}$.}
\end{equation}
\end{itemize}
\end{defin}

\begin{defin}\label{d:B}
The \emph{canonical bilinear differential of the second kind} $B(z_1,z_2)$ is the unique bilinear differential on $\Sigma^2$ satisfying the conditions:
\begin{itemize}
\item it is symmetric, $B(z_1, z_2) = B(z_2,z_1)$;
\item it has its only pole, which is double, along the diagonal $z_1 = z_2$, with leading order term (in any local coordinate $z$)
\begin{equation}
B(z_1,z_2) \underset{z_1 \to z_2}{\to} \frac{d z_1 d z_2}{(z_1-z_2)^2} + \cdots;
\end{equation}
\item it is normalized on $A$-cycles:
\begin{equation}
\oint_{z_1 \in A_i} B(z_1, z_2) = 0, \quad \text{for $i=1,\ldots,\hat{g}$}.
\end{equation}
\end{itemize}
\end{defin}

\begin{rem}
It follows from the definition that
\begin{equation}
B(z_1, z_2) = d_1 \omega^{z_1-b}(z_2).
\end{equation}
Equivalently,
\begin{equation}
\omega^{a - b}(z) = \int_{z_1=b}^a B(z_1, z),
\end{equation}
where the integral is taken over the unique homology chain with boundary $[a]-[b]$ that doesn't intersect the homology basis.
\end{rem}

When $\Sigma$ has genus $0$, both objects have very explicit expressions:
\begin{align}
\omega^{a-b}(z) &= d z \Bigl(\frac{1}{z-a} - \frac{1}{z-b} \Bigr),\\
B(z_1,z_2) &= \frac{d z_1 d z_2}{(z_1-z_2)^2}.
\end{align}

\section{Topological recursion}

Let us now introduce the topological recursion formalism, which was first proposed in \cite{E1,CEO,EO,EO2}.

\subsection{Definition of the topological recursion}

Let $(\Sigma,x,y)$ be a spectral curve. The topological recursion constructs an infinite tower of symmetric meromorphic differentials $W_{g,n}(z_1,\ldots,z_n)$ on $\Sigma^n$.
To this end, we need to define the recursive structure that appears in the topological recursion.

Let us first introduce some notation:
\begin{defin}
Let $A \subseteq_k B$ if $A \subseteq B$ and $|A| = k$.
\end{defin}

\begin{defin}
Let $B$ be a set and $J_i$'s be subsets of $B$. The symbol $\uplus$ means disjoint union, \ie the notation
$J_1 \uplus J_2 \uplus \dots \uplus J_k=B$ means that the $J_i$s are all pairwise disjoint and their union is $B$.
\end{defin}

\begin{defin}
Let $\mathcal{S}(\bmt)$ be the set of set partitions of an ensemble $\bmt$.
\end{defin}

We now define the recursive structure:

\begin{defin}\label{d:RW}
Let $\{W_{g,n+1} \}$ be an arbitrary collection of symmetric meromorphic differentials on $\Sigma^n$, with $g \geq 0$, $n \geq 0$. Let $k \geq 1$. Define $\bmt= \{t_1, \ldots, t_k\}$ and $\bmz = \{z_1, \ldots, z_n\}$, where the $t_i$'s and $z_i$'s are copies of the coordinate on the Riemann surface $\Sigma$.

Then we define:
\begin{equation}\label{e:RW}
\mathcal{R}^{(k)}W_{g,n+1}(\bmt; \bmz) = \sum_{\mu \in \mathcal{S}(\bmt)} \sum_{\uplus_{i=1}^{\ell(\mu)} J_i = \bmz} \sum'_{\sum_{i=1}^{\ell(\mu)}g_i = g + \ell(\mu) - k} \biggl(\prod_{i=1}^{\ell(\mu)} W_{g_i, |\mu_i| + |J_i|} (\mu_i, J_i) \biggr).
\end{equation}
The first summation is over set partitions of $\bmt$; $\ell(\mu)$ is the number of subsets in the set partition $\mu$. The third summation is over all $\ell(\mu)$-tuple of non-negative integers $(g_1, \ldots, g_{\ell(\mu)})$ such that $g_1 +\cdots + g_{\ell(\mu)} = g + \ell(\mu) - k $. The prime over the third summation indicates that we exclude all terms that include contributions from $W_{0,1}$; more precisely, we exclude the cases with $(g_i, |\mu_i| + |J_i|) = (0,1)$ for some $i$. We also define
\begin{equation}
\mathcal{R}^{(0)} W_{g,n+1}(\bmz) = \delta_{g,0} \delta_{n,0},
\end{equation}
where $\delta_{i,j}$ is the Kronecker delta symbol.
\end{defin}

\begin{rem}
This recursive structure appeared in \cite{BHLMR, BE:2012}. It can be understood pictorially as encoding all possible ways of removing a sphere with $k+1$ marked boundaries from a genus $g$ Riemann surface with $n+1$ marked boundaries. We refer the reader to \cite{BHLMR, BE:2012} for a discussion of the geometric interpretation of the recursive structure in terms of degenerations of Riemann surfaces.
\end{rem}

\begin{rem}
Note that $\mathcal{R}^{(k)}W_{g,n+1}(\bmt; \bmz)$ is symmetric under internal permutations of both the $t$-variables and the $z$-variables.
\end{rem}

\begin{exa}\label{ex:R}
Given the ubiquity of the recursive structure defined in Definition~\ref{d:RW}, let us give a few examples to make it more explicit.

We first consider the case $k=1$. For $g=0$ and $n=0$, we simply get
\begin{equation}
\mathcal{R}^{(1)} W_{0,1}(t; \emptyset) = 0
\end{equation}
because of the prime in the summation on the right-hand-side of \eqref{e:RW}. For $g \geq 0$, $n \geq 0$ and $(g,n) \neq (0,0)$, we get
\begin{equation}
\mathcal{R}^{(1)} W_{g,n+1} (t; \bmz) = W_{g,n+1}(t, \bmz).
\end{equation}

Consider now the case $k=2$. For $g \geq 0$, $n \geq 0$, we get that
\begin{multline}
\mathcal{R}^{(2)} W_{g,n+1}(t_1, t_2; \bmz)\\[-5pt]
= W_{g-1,n+2}(t_1, t_2, \bmz) + \sum'_{\substack{J_1 \uplus J_2 = \bmz\\ g_1 + g_2 = g}} W_{g_1, |J_1|+1}(t_1, J_1) W_{g_2, |J_2|+1}(t_2, J_2),
\end{multline}
where it is understood that $W_{g,n}$'s with negative $g$ vanish. This is the original recursive structure considered by Eynard and Orantin.
\end{exa}

With this under our belt, we are ready to define the topological recursion.

\begin{defin}\label{d:TR}
Let $(\Sigma,x,y)$ be a spectral curve, with $\pi:\Sigma \to \mathbb{P}^1$ a degree~$r$ branched covering given by the meromorphic function $x$, and $R \subset \Sigma$ the set of ramification points of $\pi$.

We first define
\begin{equation}
W_{0,1}(z) = y(z) dx(z), \quad W_{0,2}(z_1,z_2) = B(z_1,z_2),
\end{equation}
with $B(z_1,z_2)$ defined in Definition \ref{d:B}.

Let $\bmz = \{z_1,\ldots,z_n\}\in \Sigma^n$. Recall the set $\tau'(z)$ defined in Definition \ref{d:tauz}.
For $n \geq 0$, $g \geq 0$ and $2g-2+n \geq 0$, we uniquely construct symmetric meromorphic differentials $W_{g,n}$ on $\Sigma^n$ with poles along $R$ via the \emph{topological recursion}:
\begin{multline}\label{eq:TR}
W_{g,n+1}(z_0, \bmz)\\[-5pt]
= \sum_{a \in R} \underset{z = a}{\Res} \biggl(\sum_{k=1}^{r-1} \sum_{\beta(z) \subseteq_k \tau'(z)} (-1)^{k+1} \frac{\omega^{z-\alpha}(z_0)}{E^{(k)}(z; \beta(z))}\, \mathcal{R}^{(k+1)} W_{g,n+1}(z,\beta(z); \bmz) \biggr),
\end{multline}
where
\begin{equation}
E^{(k)}(z; t_1, \ldots, t_k) = \prod_{i=1}^k (W_{0,1}(z) - W_{0,1}(t_i)),
\end{equation}
with the $t_i$'s copies of the coordinate on the Riemann surface $\Sigma$.
The first summation in \eqref{eq:TR} is over all ramification points in $R$, and the second and third summations together mean that we are summing over all subsets of $\tau'(z)$. $\alpha$ is an arbitrary base point on $\Sigma$ which is not in $R$.
\end{defin}

\begin{rem}
Note that this recursion was called ``global topological recursion'' in~\cite{BE:2012}; from now on we will simply refer to it as ``topological recursion''. It was shown in \cite{BE:2012} that it is indeed equivalent to the usual local formulation of the topological recursion \cite{EO,EO2} when the ramification points are all simple. But here we do not need to assume simplicity of the ramification points.
\end{rem}

\begin{rem}
It is important to note here that it is not clear \emph{a priori} that Definition~\ref{d:TR} even makes sense. Indeed, for the recursive structure introduced in Definition~\ref{d:RW} to be well defined, the differentials $W_{g,n}$ must be symmetric. Hence for the topological recursion proposed in \eqref{eq:TR} to make sense, we must show by induction that the $W_{g,n}$ thus constructed are indeed symmetric. This was proven in \cite{EO} (Theorem~4.6) for the original topological recursion, and it was shown in \cite{BE:2012} (see Section 4) that symmetry also holds for the $W_{g,n}$ constructed from the global topological recursion presented above. In fact, one could also formulate a proof of symmetry directly from the global topological recursion above along the same lines as the proof in \cite{EO}.

It is also important that \eqref{eq:TR} is independent of the choice of base point $\alpha$. This is easy to see by induction. Let $W_{g,n}^{(\alpha)}$ be the differentials constructed with base point~$\alpha$, and $W_{g,n}^{(\gamma)}$ the differentials constructed with base point $\gamma \neq \alpha$. For both cases the initial conditions of the recursion ($W_{0,1}$ and $W_{0,2}$) are the same. Now assume that $W_{g',n'}^{(\alpha)} = W_{g',n'}^{(\gamma)}$ for all $g',n'$ such that $2g'-2+n' < 2g-2+n$.
Then
\begin{multline}
W_{g,n+1}^{(\alpha)} (z_0, \bmz)- W_{g,n+1}^{(\gamma)} (z_0, \bmz)\\
=\omega^{\gamma-\alpha}(z_0) \sum_{a \in R} \underset{z = a}{\Res} \biggl(\sum_{k=1}^{r-1} \sum_{\beta(z) \subseteq_k \tau'(z)}\frac{(-1)^{k+1}}{E^{(k)}(z; \beta(z))}\, \mathcal{R}^{(k+1)} W_{g,n+1}(z,\beta(z); \bmz) \biggr).\label{eq:proofind}
\end{multline}
The right-hand-side has simple poles at $z_0 = \gamma$ and $z_0 = \alpha$, which is a contradiction, since the left-hand-side can only have poles at $z_0 = a$ for $a \in R$ (see below). Thus both the left-hand-side and the right-hand-side must be zero, and we conclude that $W_{g,n+1}^{(\alpha)}= W_{g,n+1}^{(\gamma)} $, that is, \eqref{eq:TR} is independent of the choice of base point $\alpha$.

The $W_{g,n}$ also satisfy various other properties. For instance, it can be shown that the $W_{g,n}$ only have poles along $R$, with no residues, and that they are normalized over $A$-cycles:
\begin{equation}
\oint_{z_0 \in A_k} W_{g,n+1}(z_0, \bmz) = 0, \quad k=1, \ldots, \hat{g}.
\end{equation}
Again, this was proved in \cite{EO} (see Theorems 4.2 and 4.3) for the original topological recursion, and shown to hold for the global version in \cite{BE:2012} (Section 4). Other properties of the $W_{g,n}$ are also discussed there.
\end{rem}

\begin{rem}
In the standard formulation of the topological recursion \cite{EO,EO2,BE:2012}, the $W_{g,n}$ are constructed by summing only over residues at the zeros of $dx$, instead of all ramification points of the branched covering $\pi: \Sigma \to \mathbb{P}^1$ given by the meromorphic function $x$. In other words, the poles of order $\geq 2$ of $x$ are generally not included in the sum. In most cases, this does not matter, since these poles would yield zero residues hence would not change the $W_{g,n}$. However, for some curves, they do contribute, and in fact they are necessary in order to obtain the quantum curve later on. More precisely, the fundamental result established below in Lemma \ref{l:PP} only holds when all ramification points are included in $R$ --- otherwise the meromorphic one-forms studied in Lemma \ref{l:PP} can have extra poles at the poles of $x$ of order $\geq 2$. Therefore, \emph{we must sum over residues at all ramification points in $R$ to construct the $W_{g,n}$, not only the zeros of $dx$}.
\end{rem}

\subsection{Rewriting the topological recursion}

In this subsection we rewrite the topological recursion in a different, and for our purposes nicer, way. First, we need to introduce a few more objects.

\begin{defin}
Using the same notation as in Definition \ref{d:RW}, for $g,n \geq 0$ and $k \geq 1$ we define
\begin{equation}\label{eq:EW}
\mathcal{E}^{(k)}W_{g,n+1} (\bmt;\bmz) = \!\!\sum_{\mu \in \mathcal{S}(\bmt)} \sum_{\uplus_{i=1}^{\ell(\mu)} J_i = \bmz} \sum_{\sum_{i=1}^{\ell(\mu)}g_i = g + \ell(\mu) - k} \!\!\biggl(\prod_{i=1}^{\ell(\mu)} W_{g_i, |\mu_i| + |J_i|} (\mu_i, J_i) \biggr).
\end{equation}
The main difference with Definition \ref{d:RW} is that we have removed the prime in the third summation. Therefore, the summation includes contributions from $W_{0,1}$. We also define
\begin{equation}
\mathcal{E}^{(0)} W_{g,n+1}(\bmz) = \delta_{g,0} \delta_{n,0}.
\end{equation}
\end{defin}

\begin{rem}
Similarly to $\mathcal{R}^{(k)}W_{g,n+1}(\bmt; \bmz)$, it can be understood pictorially as encoding all possible ways of removing a sphere with $k+1$ marked boundaries from a genus $g$ Riemann surface with $n+1$ marked boundaries, but for $\mathcal{E}^{(k)}W_{g,n+1}(\bmt; \bmz)$ we may also cut out discs.
\end{rem}

\begin{rem}
Note that $\mathcal{E}^{(k)}W_{g,n+1}(\bmt; \bmz)$ is also symmetric under internal permutations of both the $t$-variables and the $z$-variables.
\end{rem}

\begin{exa}
Let us give a few examples of this structure. First, for $k=1$, $g \geq 0$, $n \geq 0$ we get as in Example \ref{ex:R}:
\begin{equation}
\mathcal{E}^{(1)} W_{g,n+1} (t; \bmz) = W_{g,n+1}(t, \bmz),
\end{equation}
but now this includes the case $(g,n) = (0,0)$ as well since there is no prime in the summation on the right-hand-side of \eqref{eq:EW}. Similarly, for $k=2$, $g \geq 0$, $n \geq 0$, we get
\begin{multline}
\mathcal{E}^{(2)} W_{g,n+1}(t_1, t_2; \bmz)\\[-3pt]
= W_{g-1,n+2}(t_1, t_2, \bmz) + \sum_{\substack{J_1 \uplus J_2 = \bmz\\ g_1 + g_2 = g}} W_{g_1, |J_1|+1}(t_1, J_1) W_{g_2, |J_2|+1}(t_2, J_2),
\end{multline}
without the prime in the summation.

Another special case of interest is for $g=0$, $n=0$ and arbitrary $k \geq 1$. Then we get
\begin{equation}
\mathcal{E}^{(k)} W_{0,1}(\bmt; \emptyset) = \prod_{i=1}^k W_{0,1}(t_i).
\end{equation}
Similarly, for $g=0$, $n=1$ and arbitrary $k \geq 1$, we get
\begin{equation}
\mathcal{E}^{(k)} W_{0,2}(\bmt; z) = \sum_{j=1}^k \biggl(W_{0,2}(t_j, z) \prod_{\substack{i=1\\ i \neq j}}^k W_{0,1}(t_i) \biggr).
\end{equation}
\end{exa}

Let us now study a few properties of the $\mathcal{E}^{(k)}W_{g,n+1} (\bmt;\bmz)$ that will be useful later on.
We first recall Lemma 1 of \cite{BE:2012}, whose proof was purely combinatorial (note that $\mathcal{R}^{(k)} W_{g,n+1}$ was denoted by $\mathcal{W}^g_{k,n}$ there):

\begin{lem}[{\cite[Lem.\,1]{BE:2012}}]\label{l:curly1} For all $g,n \geq0$ and $k \geq 1$,
\begin{multline}
\mathcal{R}^{(k)} W_{g,n+1}(\bmt; \bmz) = \mathcal{R}^{(k-1)} W_{g-1,n+2} (\bmt \backslash \{t_k\}; \bmz, t_k)\\[-3pt]
+\sum_{J_1 \uplus J_2 = \bmz} \sum'_{g_1 + g_2 = g} \bigl(\mathcal{R}^{(k-1)} W_{g_1, |J_1|+1}(\bmt \backslash \{t_k\}; J_1) \bigr)W_{g_2,|J_2|+1}(t_k, J_2).
\end{multline}
The prime over the summation means that we do not include the case $(g_2, J_2) = (0, \emptyset)$.
\end{lem}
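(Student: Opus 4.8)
The plan is to prove this purely combinatorially, by sorting the terms that define $\mathcal{R}^{(k)} W_{g,n+1}(\mathbf{t};\mathbf{z})$ according to the unique block of the set partition $\mu \in \mathcal{S}(\mathbf{t})$ that contains the distinguished variable $t_k$. Call this block $B$. Every set partition of $\mathbf{t}$ falls into exactly one of two mutually exclusive cases: either $B = \{t_k\}$ is a singleton, or $|B| \geq 2$. I expect these two cases to correspond precisely to the two terms on the right-hand side, so that the whole proof reduces to exhibiting the two matchings together with careful bookkeeping of the genus constraint and of the primed summations.

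First I would treat the case where $B$ is a singleton. Removing the block $\{t_k\}$ from $\mu$ leaves a set partition $\mu'$ of $\mathbf{t} \setminus \{t_k\}$ with $\ell(\mu') = \ell(\mu) - 1$ blocks. The factor attached to $B$ is $W_{g_B, |J_B|+1}(t_k, J_B)$, where $J_B \subseteq \mathbf{z}$ is the piece of $\mathbf{z}$ distributed to this block. Setting $g_2 = g_B$, $J_2 = J_B$, and letting $J_1 = \mathbf{z} \setminus J_2$ collect the remaining $z$-variables, the product over the blocks of $\mu'$, summed over their genus assignments and over the distribution of $J_1$, is exactly $\mathcal{R}^{(k-1)} W_{g_1, |J_1|+1}(\mathbf{t}\setminus\{t_k\}; J_1)$ for a suitable $g_1$. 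The genus constraint $\sum_i g_i = g + \ell(\mu) - k$, after isolating $g_2$ and using $\ell(\mu) = \ell(\mu')+1$, becomes the constraint defining $\mathcal{R}^{(k-1)}$ precisely when $g_1 + g_2 = g$. This reproduces the second term on the right-hand side.

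Next I would treat the case $|B| \geq 2$. Here I would move $t_k$ out of the $t$-arguments and into the $z$-arguments: deleting $t_k$ from $B$ leaves a nonempty block $B' = B \setminus \{t_k\}$, so the induced set partition $\mu'$ of $\mathbf{t}\setminus\{t_k\}$ has the same number of blocks, $\ell(\mu') = \ell(\mu)$. The factor attached to $B$ is $W_{g_B, |B|+|J_B|}(B, J_B)$; since $W$ is a symmetric differential, this equals $W_{g_B, |B'|+(|J_B|+1)}(B', t_k, J_B)$, which is precisely the factor produced in $\mathcal{R}^{(k-1)}$ when $t_k$ is appended to the $z$-piece $J_B$. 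With the genus unchanged on each block, the constraint $\sum_i g_i = g + \ell(\mu) - k = (g-1) + \ell(\mu') - (k-1)$ is exactly the one defining $\mathcal{R}^{(k-1)} W_{g-1,n+2}$ evaluated on $(\mathbf{t}\setminus\{t_k\}; \mathbf{z}, t_k)$. This reproduces the first term.

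The step I expect to require the most care is matching the primed summations, i.e. the exclusion of $W_{0,1}$ factors. A factor is $W_{0,1}$ only if it comes from a genus-zero singleton $t$-block carrying no $z$-variables, that is, a factor $W_{0,1}(t_j)$. In the singleton case, such a factor for the block $\{t_k\}$ is exactly the $(g_2,J_2)=(0,\emptyset)$ term removed by the prime on $g_1+g_2=g$, while $W_{0,1}$ factors among the other blocks are removed by the prime already present in $\mathcal{R}^{(k-1)}$; the two primes together reproduce the single prime on the left. In the case $|B|\geq 2$, the block containing $t_k$ can never be $W_{0,1}$, since its image satisfies $|B'|+|J'| \geq 2$ once $t_k$ lands in $J'$, so no additional prime is needed and the prime of $\mathcal{R}^{(k-1)} W_{g-1,n+2}$ accounts for all excluded terms. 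Verifying that these exclusions match term-by-term, with no term double-counted or dropped, is the crux of the argument; once it is checked, the two cases assemble into the claimed identity.
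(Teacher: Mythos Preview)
Your proposal is correct and follows essentially the same approach as the paper. The paper does not reprove this lemma (it is cited from \cite{BE:2012}), but its proof of the analogous Lemma~\ref{l:Ek} proceeds by exactly the same dichotomy: partitions in which $\{t_k\}$ is a singleton block give the second RHS term, partitions in which $t_k$ lies in a block of size $\geq 2$ give the first RHS term, with the prime bookkeeping handled just as you describe.
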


Similarly, we can prove the following lemma:

\begin{lem}\label{l:Ek} For all $g,n,k \geq0$,
\begin{multline}
\mathcal{E}^{(k)} W_{g,n+1}(\bmt; \bmz) = \mathcal{E}^{(k-1)} W_{g-1,n+2} (\bmt \backslash \{t_k\}; \bmz, t_k)\\+\sum_{J_1 \uplus J_2 = \bmz} \sum_{g_1 + g_2 = g} \bigl(\mathcal{E}^{(k-1)} W_{g_1, |J_1|+1}(\bmt \backslash \{t_k\}; J_1) \bigr) W_{g_2,|J_2|+1}(t_k, J_2).
\end{multline}
Note that unlike in Lemma \ref{l:curly1}, the summation is unprimed, that is, it includes the case $(g_2, J_2) = (0,\emptyset)$.
\end{lem}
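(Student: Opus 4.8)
The plan is to prove this combinatorially, in direct parallel with the proof of Lemma \ref{l:curly1} in \cite{BE:2012}, tracking carefully where the unprimed summation comes from. The essential idea is to classify the terms in the definition of $\mathcal{E}^{(k)} W_{g,n+1}(\mathbf{t}; \mathbf{z})$ according to the fate of the distinguished marked point $t_k$ in the set partition $\mu \in \mathcal{S}(\mathbf{t})$. Every set partition of $\mathbf{t} = \{t_1, \ldots, t_k\}$ either places $t_k$ in a block together with other elements of $\mathbf{t} \backslash \{t_k\}$, or places $t_k$ in a block by itself. These two cases will correspond exactly to the two terms on the right-hand side.

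First I would peel off the block $\mu_{i_0}$ containing $t_k$. Write $\mu_{i_0} = \mu_{i_0}' \cup \{t_k\}$ with $\mu_{i_0}' \subseteq \mathbf{t} \backslash \{t_k\}$ (possibly empty), and let $J_{i_0}$ be the portion of $\mathbf{z}$ attached to this block, with genus $g_{i_0}$. The factor contributed by this block is $W_{g_{i_0}, |\mu_{i_0}'| + 1 + |J_{i_0}|}(\mu_{i_0}', t_k, J_{i_0})$. Now I distinguish the two cases. If $\mu_{i_0}' \neq \emptyset$, then removing $t_k$ from its block leaves a genuine set partition of $\mathbf{t} \backslash \{t_k\}$ with the \emph{same} number of blocks $\ell(\mu)$; relabelling $t_k$ as an element of $\mathbf{z}$ (appended to the block $\mu_{i_0}'$ via $J_{i_0}$), one recognizes the remaining data as a term of $\mathcal{E}^{(k-1)} W_{g-1, n+2}(\mathbf{t} \backslash \{t_k\}; \mathbf{z}, t_k)$. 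The genus bookkeeping is the key check: the constraint $\sum g_i = g + \ell(\mu) - k$ becomes $\sum g_i = (g-1) + \ell(\mu) - (k-1)$, which is precisely the constraint for $\mathcal{E}^{(k-1)}$ at genus $g-1$, stage $k-1$. This yields the first term on the right-hand side.

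If instead $\mu_{i_0}' = \emptyset$, so that $t_k$ sits in a singleton block, then deleting this block leaves a set partition of $\mathbf{t} \backslash \{t_k\}$ with $\ell(\mu) - 1$ blocks. Splitting $\mathbf{z} = J_1 \uplus J_2$, where $J_2 = J_{i_0}$ is the part attached to the singleton block and $J_1$ is the rest, and splitting the genus as $g = g_1 + g_2$ with $g_2 = g_{i_0}$, the singleton block contributes the factor $W_{g_2, |J_2|+1}(t_k, J_2)$ while the remaining blocks assemble into $\mathcal{E}^{(k-1)} W_{g_1, |J_1|+1}(\mathbf{t} \backslash \{t_k\}; J_1)$. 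Again the genus constraint matches after shifting $\ell \mapsto \ell - 1$ and $k \mapsto k-1$. This produces the second term. The main point to emphasize — and the one place this proof genuinely differs from Lemma \ref{l:curly1} — is that since $\mathcal{E}$ carries no prime, the factor $W_{g_2, |J_2|+1}(t_k, J_2)$ is \emph{allowed} to be $W_{0,1}(t_k)$, i.e. the case $(g_2, J_2) = (0, \emptyset)$ is included; correspondingly the summation on the right-hand side is unprimed. There is no real obstacle here, only care: I would verify the $k=0$ and $n=0$ edge cases against the convention $\mathcal{E}^{(0)} W_{g,n+1}(\mathbf{z}) = \delta_{g,0}\delta_{n,0}$ separately, and check that the correspondence between terms is a genuine bijection (no term is double-counted or omitted), which is immediate once the dichotomy on $t_k$ is set up cleanly.
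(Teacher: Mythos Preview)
Your proposal is correct and follows essentially the same approach as the paper: both argue by splitting the set partitions of $\mathbf{t}$ according to whether $t_k$ lies in a singleton block or not, identifying the two cases with the two terms on the right-hand side, and noting that the unprimed sum arises because $\mathcal{E}^{(k)}$ allows $W_{0,1}$ factors. Your version is simply more explicit about the genus bookkeeping and the bijection, which the paper leaves implicit by referring back to the proof of Lemma~\ref{l:curly1}.
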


\begin{proof}
The proof is exactly the same as for Lemma 1 in \cite{BE:2012}. By definition, the LHS is given by
\begin{equation}
\mathcal{E}^{(k)} W_{g,n+1}(\bmt; \bmz) = \sum_{\mu \in \mathcal{S}(\bmt)} \sum_{\uplus_{i=1}^{\ell(\mu)} J_i =\!\! \bmz} \sum_{\sum_{i=1}^{\ell(\mu)}g_i = g + \ell(\mu) - k} \!\!\biggl(\prod_{i=1}^{\ell(\mu)} W_{g_i, |\mu_i| + |J_i|} (\mu_i, J_i) \biggr).
\end{equation}
The first term on the RHS takes care of all partitions that do not include a subset of cardinality one of $\bmt$ containing only $t_k$. These are then taken care by the second term on the RHS. The summation in the second term on the RHS is unprimed, because terms with $W_{0,1}(t_k)$ (that is $(g_2,J_2) = (0,\emptyset)$) are included in the definition of $\mathcal{E}^{(k)} W_{g,n+1}(\bmt; \bmz) $.
\end{proof}

The relation between $\mathcal{R}^{(k)}W_{g,n+1}(\bmt; \bmz)$ and $\mathcal{E}^{(k)}W_{g,n+1}(\bmt; \bmz)$ can be expressed explicitly. We obtain
\begin{lem}\label{l:RE} For all $g,n,k \geq 0$,
\begin{equation}
\begin{aligned}
&\hspace*{-1.5cm}\mathcal{E}^{(k)}W_{g,n+1} (\bmt;\bmz)\\[-5pt]
&= \mathcal{R}^{(k)}W_{g,n+1}(\bmt; \bmz) +
\sum_{i=1}^{k} \sum_{\beta \subseteq_i \bmt} \mathcal{E}^{(i)} W_{0,1}(\beta) \mathcal{R}^{(k-i)}W_{g,n+1}(\bmt \setminus \beta; \bmz)\\[-5pt]
&= \sum_{i=0}^{k} \sum_{\beta \subseteq_i \bmt} \mathcal{E}^{(i)} W_{0,1}(\beta) \mathcal{R}^{(k-i)}W_{g,n+1}(\bmt \setminus \beta; \bmz),
\end{aligned}
\end{equation}
where in the second line we used the fact that $\mathcal{E}^{(0)} W_{0,1} = 1$.
\end{lem}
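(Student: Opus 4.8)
The plan is to prove the identity by a direct combinatorial reorganization of the terms defining $\mathcal{E}^{(k)}W_{g,n+1}(\mathbf{t};\mathbf{z})$, sorting them according to which of the $t$-variables get ``used up'' as $W_{0,1}$ factors. Recall from Definition \ref{d:RW} that the only difference between $\mathcal{E}^{(k)}$ and $\mathcal{R}^{(k)}$ is the prime, i.e.\ the inclusion of terms carrying one or more $W_{0,1}$ factors. Since every block $\mu_i$ of a set partition of $\mathbf{t}$ is a nonempty subset of $\mathbf{t}$, a factor $W_{g_i,|\mu_i|+|J_i|}$ can equal $W_{0,1}$ only if $g_i=0$, $\mu_i=\{t_j\}$ is a singleton and $J_i=\emptyset$; in particular every $W_{0,1}$ factor is a singleton drawn from $\mathbf{t}$, never from $\mathbf{z}$.

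First I would fix a term of $\mathcal{E}^{(k)}$, i.e.\ a choice of $(\mu,\{J_i\},\{g_i\})$, and let $\beta\subseteq\mathbf{t}$ be the union of all those singleton blocks appearing as $W_{0,1}(t_j)$. Writing $i=|\beta|$, the corresponding monomial factorizes as $\big(\prod_{t\in\beta}W_{0,1}(t)\big)$ times the product over the remaining blocks, which by construction contains no $W_{0,1}$ factor. The key bookkeeping step is to check that, for $\beta$ fixed, these remaining data range over exactly the terms of $\mathcal{R}^{(k-i)}W_{g,n+1}(\mathbf{t}\setminus\beta;\mathbf{z})$: the remaining blocks partition $\mathbf{t}\setminus\beta$, a set of cardinality $k-i$ matching the superscript; the set $\mathbf{z}$ is distributed among them exactly as in $\mathcal{R}^{(k-i)}$ (the $W_{0,1}$ blocks carry no $z$-variables); the exclusion of $W_{0,1}$ is precisely the prime; and the genus constraint transforms correctly, since writing $\ell(\mu)=\ell(\mu')+i$ for the reduced partition $\mu'$ turns $\sum g_i = g+\ell(\mu)-k$ into $\sum_{\text{remaining}} g_j = g+\ell(\mu')-(k-i)$, which is exactly the constraint defining $\mathcal{R}^{(k-i)}$.

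Second I would identify the prefactor $\prod_{t\in\beta}W_{0,1}(t)$ with $\mathcal{E}^{(i)}W_{0,1}(\beta)$. This is a short self-contained computation: evaluating $\mathcal{E}^{(i)}$ on the base object $W_{0,1}$ (so $g=0$, $\mathbf{z}=\emptyset$) over a set $\beta$ of cardinality $i$, the constraint $\sum g_j=\ell(\mu)-i$ together with $g_j\geq0$ and $\ell(\mu)\leq|\beta|=i$ forces $\ell(\mu)=i$ and all $g_j=0$, so only the all-singletons partition survives and yields $\prod_{t\in\beta}W_{0,1}(t)$; for $i=0$ this gives $\mathcal{E}^{(0)}W_{0,1}=1$. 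Summing over all $\beta\subseteq\mathbf{t}$, grouped by their cardinality $i$, and noting that $i$ runs automatically only from $0$ to $|\mathbf{t}|=k$, then yields the second displayed equality; the first is simply the $i=0$ term, which reproduces $\mathcal{R}^{(k)}W_{g,n+1}(\mathbf{t};\mathbf{z})$, separated from the $i\geq1$ contributions.

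I expect the only real obstacle to be making this reorganization rigorous as a bijection: verifying that the map (term of $\mathcal{E}^{(k)}$) $\mapsto$ (subset $\beta$, term of $\mathcal{R}^{(k-i)}$) is well defined and invertible, and in particular that the genus/valence bookkeeping above is airtight and that no $W_{0,1}$ survives on the $\mathcal{R}$ side. An alternative to the direct argument would be induction on $k$, peeling off $t_k$ via Lemma \ref{l:Ek} on the left and Lemma \ref{l:curly1} on the right and invoking the induction hypothesis; but the direct decomposition is cleaner and avoids separately matching the primed and unprimed recursions.
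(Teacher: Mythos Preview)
Your proposal is correct and follows essentially the same approach as the paper: both proofs sort the terms of $\mathcal{E}^{(k)}W_{g,n+1}(\mathbf{t};\mathbf{z})$ according to the subset $\beta\subseteq\mathbf{t}$ of variables that appear as $W_{0,1}$ factors, identify the remaining factor as $\mathcal{R}^{(k-|\beta|)}W_{g,n+1}(\mathbf{t}\setminus\beta;\mathbf{z})$, and use $\mathcal{E}^{(i)}W_{0,1}(\beta)=\prod_{t\in\beta}W_{0,1}(t)$. Your write-up is more explicit about the bijection and the genus bookkeeping than the paper's, but the underlying argument is the same.
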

\begin{proof}
This is just a combinatorial rewriting, taking the $W_{0,1}(t)$ contributions out:
\begin{multline}
\mathcal{E}^{(k)}W_{g,n+1} (\bmt;\bmz)\\
= \mathcal{R}^{(k)}W_{g,n+1}(\bmt; \bmz) + \sum_{i=1}^k W_{0,1}(t_i) \mathcal{R}^{(k-1)} W_{g,n+1} (\bmt \backslash \{t_i\}; \bmz)\\
\shoveright{+ \sum_{1=i_1< i_2 \leq k} W_{0,1}(t_{i_1}) W_{0,1}(t_{i_2}) \mathcal{R}^{(k-2)} W_{g,n+1} (\bmt \backslash \{t_{i_1}, t_{i_2}\}; \bmz) +\cdots}\\
\shoveright{+\sum_{1=i_1< i_2< \ldots < i_{k-2} \leq k} W_{0,1}(t_{i_1}) \cdots W_{0,1}(t_{i_{k-2}}) \mathcal{R}^{(2)} W_{g,n+1} (\bmt \backslash \{t_{i_1}, \ldots, t_{i_{k-2}}\}; \bmz)}\\
+ \sum_{1=i_1< i_2< \ldots < i_{k-1} \leq k} W_{0,1}(t_{i_1}) \cdots W_{0,1}(t_{i_{k-1}}) W_{g,n+1} (\bmt \backslash \{t_{i_1}, \ldots, t_{i_{k-1}}\}, \bmz).
\end{multline}
Noting that $\mathcal{E}^{(i)} W_{0,1}(t_1, \ldots, t_i) = W_{0,1}(t_1) \cdots W_{0,1}(t_i)$, and that $\mathcal{R}^{(1)} W_{g,n+1} (t; \bmz) = W_{g,n+1}(t, \bmz)$, we get the statement of the lemma.
\end{proof}

We also define:

\begin{defin}\label{d:Pg}
Using the notation of Definition \ref{d:tauz}, we define, for $g,n,k \geq 0$:
\begin{equation}\label{eq:Pk}
Q^{(k)}_{g,n+1} (z; \bmz) = \sum_{\beta(z) \subseteq_k \tau(z)} \mathcal{E}^{(k)} W_{g,n+1}(\beta(z); \bmz).
\end{equation}
\end{defin}

\begin{rem}
Since \eqref{eq:Pk} is invariant under permutations of the preimages in~$\tau(z)$, $Q^{(k)}_{g,n+1} (z; \bmz) $ is in fact the pullback (in the variable $z$) of a globally defined meromorphic $k$-differential on the base of the branched covering $\pi: \Sigma \to \mathbb{P}^1$. In other words, we can write
\begin{equation}
Q^{(k)}_{g,n+1} (z; \bmz) = f(x(z), z_1, \ldots, z_n) dx(z)^k dz_1 \cdots dz_2,
\end{equation}
for some meromorphic function $f(x, z_1, \ldots, z_n)$.
\end{rem}

\begin{rem}
Note that for any spectral curve of degree $r$, and for all $(g,n)$,
\begin{equation}
Q^{(k)}_{g,n+1}(z; \bmz) = 0, \quad \text{for all $k > r$.}
\end{equation}
This is because we are summing over subsets of $\tau(z)$, hence there is no subset of $\tau(z)$ of cardinality $k > r$ since $\tau(z)$ has $r$ elements.
\end{rem}

\begin{rem}
As for the $k=0$ case, note that from the definition we obtain:
\begin{equation}
Q^{(0)}_{g,n+1}(\bmz) = \delta_{g,0} \delta_{n,0},
\end{equation}
\end{rem}

\begin{exa}\label{e:Q00}
The $Q^{(k)}_{0,1}(z)$ are particularly simple. From the definition, we have
\begin{equation}
Q^{(k)}_{0,1}(z) = \sum_{\beta(z) \subseteq_k \tau(z)} \prod_{i=1}^k W_{0,1}(\beta_{i}(z)),
\end{equation}
where the product on the RHS is over all elements of a given subset $\beta(z)$. It then follows that
\begin{equation}
Q^{(k)}_{0,1}(z) = (-1)^k \,\frac{p_k(z)}{p_0(z)} \,dx(z)^k,
\end{equation}
where the $p_k$ are the coefficients in the defining equation for the spectral curve in Definition \ref{d:sc}. Here and henceforth, we will abuse notation slightly and write $p_i(z) := p_i(x(z))$ to unclutter equations.
\end{exa}

\begin{exa}\label{ex:Q1}
The $Q^{(1)}_{g,n+1}$ are also easy to understand. From the definition, they are simply the pullback of the pushforward of the correlation functions $W_{g,n+1}$ with respect to the branched covering $\pi: \Sigma \to \mathbb{P}^1$ (in the $z$-variable):
\begin{equation}
Q^{(1)}_{g,n+1}(z;\bmz) = \pi^* \pi_* W_{g,n+1} (z, \bmz) = \sum_{i=1}^r W_{g,n+1}(\tau_i(z), \bmz).
\end{equation}
where we wrote $\tau(z)=\{\tau_1(z),\dots,\tau_r(z)\}$ (the labeling doesn't matter).
Indeed, the pushforward $\pi_*$ means that we are summing over all preimages in $\tau(z)$ to get a well defined meromorphic differential on the base $\mathbb{P}^1$, and then we pull it back to $\Sigma$.
\end{exa}

It then follows from Theorem 4.4 of \cite{EO} and Example \ref{e:Q00} above that:
\begin{lem}[\cite{EO}]\label{l:P1}
For $2g-2+n\geq 0$,
\begin{equation}
Q^{(1)}_{g,n+1}(z; \bmz) = 0.
\end{equation}
For the unstable cases, we have
\begin{equation}
Q^{(1)}_{0,1}(z) = - \frac{p_1(z)}{p_0(z)}\, dx(z),
\end{equation}
where $p_1(x)$ is the coefficient of $y^{r-1}$ in the defining equation \eqref{eq:sc} for the spectral curve, and
\begin{equation}
Q^{(1)}_{0,2}(z; z_1) = \pi^* B(x, x_1) = \frac{dx(z) dx (z_1)}{(x(z) - x(z_1))^2},
\end{equation}
by which we mean that we are pulling back to $\Sigma$ (in both variables) the canonical bilinear differential $B(x,x_1)$ on the base $\mathbb{P}^1$ of the branched covering $\pi : \Sigma \to \mathbb{P}^1$, which has the expression above since $\mathbb{P}^1$ has genus $0$.
\end{lem}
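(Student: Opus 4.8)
The plan is to use the identity from the example immediately preceding the statement, namely $Q^{(1)}_{g,n+1}(z;\mathbf{z}) = \pi^*\pi_* W_{g,n+1}(z,\mathbf{z}) = \sum_{i=1}^r W_{g,n+1}(\tau_i(z),\mathbf{z})$, and to view this as a meromorphic $1$-form in the base coordinate $x=x(z)\in\mathbb{P}^1$, treating $\mathbf{z}$ as spectator variables. The structural input I would lean on is the fact (recorded in the remark following Definition~\ref{d:TR}) that for $2g-2+n\geq 0$ the $W_{g,n+1}$ are meromorphic with poles only along the ramification locus $R$ in the first variable, with no residues and, crucially, no diagonal poles. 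Since $\mathbb{P}^1$ carries no nonzero global holomorphic $1$-form, it suffices to show that $\pi_* W_{g,n+1}$ has no poles in $x$ whatsoever; vanishing of the form then follows immediately.

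For the stable range the only candidate poles of $\pi_* W_{g,n+1}$ are at the branch points $x(a)$, $a\in R$, and at $x=\infty$. I would establish regularity at each of these by the \emph{linear loop equation} of \cite{EO}: near a ramification point $a$, the recursion kernel $\omega^{z-\alpha}/E^{(k)}$ is built so that the sum of $W_{g,n+1}$ over the sheets of $\pi$ colliding at $a$ is holomorphic there, and summing over the full fibre then produces no singularity in $x$. I expect this to be the main obstacle, and it carries two subtleties. First, regularity must be checked at \emph{every} ramification point, including poles of $x$ of order $\geq 2$ and the point $x=\infty$ -- precisely the contributions highlighted in the remark after Definition~\ref{d:TR}, which generically vanish but need not always. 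Second, the cleanest self-contained route is an induction on $2g-2+n$ that substitutes the recursion \eqref{eq:TR} into itself; alternatively one simply invokes the loop equations proved in \cite{EO}. Once $\pi_* W_{g,n+1}$ is seen to be holomorphic everywhere on $\mathbb{P}^1$, it vanishes, giving $Q^{(1)}_{g,n+1}=0$.

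The unstable cases are direct. Since $\mathcal{E}^{(1)} W_{0,1}(t)=W_{0,1}(t)=y(t)\,dx(t)$, we have $Q^{(1)}_{0,1}(z)=\big(\sum_{i=1}^r y(\tau_i(z))\big)\,dx(z)$; as the $y(\tau_i(z))$ are exactly the $r$ roots in $y$ of $P(x(z),y)=0$, Vieta's relation $\sum_i y(\tau_i(z))=-p_1(z)/p_0(z)$ gives the stated formula (this is the $k=1$ case of the symmetric-function computation in Example~\ref{e:Q00}). For $Q^{(1)}_{0,2}(z;z_1)=\sum_{i=1}^r B(\tau_i(z),z_1)$ I would compare with $\frac{dx(z)\,dx(z_1)}{(x(z)-x(z_1))^2}$ directly: as $1$-forms in $z_1$ on the genus-zero curve both are residue-free with double poles exactly at $z_1\in\tau(z)$, and a local expansion near $z_1=\tau_j(z)$, using $x(z_1)-x(z)\sim x'(\tau_j)(z_1-\tau_j)$ together with $d\tau_j/dx=1/x'(\tau_j)$, shows their leading coefficients coincide. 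Their difference is then a holomorphic $1$-form in $z_1$, hence zero, identifying the sum with $\pi^* B(x,x_1)$; the genus-zero formula $B(z_1,z_2)=dz_1\,dz_2/(z_1-z_2)^2$ closes the computation.
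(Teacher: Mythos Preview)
The paper does not actually supply its own proof of this lemma: the statement is attributed to \cite{EO} and simply quoted (``It then follows from \cite{EO} that:\ldots''). Your outline is therefore not being compared against a proof in this paper but against the classical argument in \cite{EO}, and it is essentially that argument. The key mechanism---pushing $W_{g,n+1}$ forward along $\pi$, invoking the linear loop equation to kill the candidate poles at branch points, and then using that $\mathbb{P}^1$ has no holomorphic $1$-forms---is exactly how the vanishing of $Q^{(1)}_{g,n+1}$ is obtained in \cite{EO} (there for simple ramification; the global recursion of \cite{BHLMR,BE:2012} extends it to the higher-order setting relevant here). Your treatment of the unstable cases via Vieta and a principal-part comparison for $\pi_*B$ is likewise standard and correct.

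Two small remarks. First, your regularity discussion should also rule out poles of $\pi_* W_{g,n+1}$ at $x=x(z_j)$; you do this implicitly by noting the absence of diagonal poles in the stable range, but it is worth stating, since that is precisely what fails for $(0,2)$. Second, the lemma in Section~3 is stated before the restriction to admissible (hence genus-zero) curves, so strictly speaking your $(0,2)$ argument (``holomorphic $1$-form in $z_1$, hence zero'') needs one extra sentence in positive genus: both $B$ and the pulled-back base kernel are $A$-normalized in $z_1$ (the latter because it is exact, $d_{z_1}\!\big(\tfrac{-dx(z)}{x(z)-x(z_1)}\big)$), so the holomorphic difference has vanishing $A$-periods and is therefore zero. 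With that caveat, your proposal is correct and matches the approach the paper is citing.
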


\begin{proof}
The $(g,n) = (0,0)$ case follows directly from the definition of spectral curves, see Example \ref{e:Q00} above. The $(g,n) = (0,1)$ case is a well known property of the canonical bilinear differential (see for instance \cite[Eq.\,(A-1)]{EO}):
\begin{equation}\label{eq:bsh}
Q^{(1)}_{0,2}(z; z_1) = \sum_{i=1}^r B(\tau_i(z), z_1) = \frac{dx(z) dx (z_1)}{(x(z) - x(z_1))^2}.
\end{equation}
As for $2g-2+n \geq 0$, the statement was proven in Theorem 4.4 of \cite{EO}. To be precise, there the statement was proven for the original topological recursion, but it is easy to see, following arguments similar to those in Section 4 of \cite{BE:2012}, that it also holds for the global topological recursion.
\end{proof}

We are now ready to rewrite the topological recursion in a different way.

\begin{thm}\label{t:ref}
The topological recursion in Definition \ref{d:TR} is equivalent to the following equation (for $2g-2+n \geq 0$) :
\begin{equation}
0 = \sum_{a \in R} \underset{z = a}{\Res} \left(\omega^{z-\alpha}(z_0) Q_{g,n+1}(z; \bmz) \right),
\end{equation}
where the $1$-form $Q_{g,n+1}(z; \bmz)$ is defined in terms of the $Q^{(k)}_{g,n+1}(z; \bmz)$ from Definition~\ref{d:Pg} as follows:
\begin{equation}
Q_{g,n+1}(z; \bmz) := \frac{dx(z)}{\sfrac{\partial P}{\partial y}(z)} \biggl(p_0(z) \sum_{k=1}^r (-1)^k y(z)^{r-k}\, \frac{Q^{(k)}_{g,n+1}(z; \bmz)}{dx(z)^k}\biggr),
\end{equation}
with $P(x,y)=0$ the equation of the spectral curve introduced in \eqref{eq:sc}.

\end{thm}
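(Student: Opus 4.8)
The plan is to reduce the claimed equivalence to a single pointwise algebraic identity between the one-form $Q_{g,n+1}(z;\mathbf{z})$ and the integrand of the topological recursion of Definition~\ref{d:TR}, and then to dispatch the residue bookkeeping with the global residue theorem. Concretely, writing the recursion integrand stripped of the kernel $\omega^{z-\alpha}(z_0)$ as
\begin{equation}
T(z;\mathbf{z}) := \sum_{k=1}^{r-1}\sum_{\beta \subseteq_k \tau'(z)} (-1)^{k+1}\frac{\mathcal{R}^{(k+1)}W_{g,n+1}(z,\beta;\mathbf{z})}{E^{(k)}(z;\beta)},
\end{equation}
I would aim to prove the exact identity
\begin{equation}\label{eq:planid}
Q_{g,n+1}(z;\mathbf{z}) = T(z;\mathbf{z}) - W_{g,n+1}(z,\mathbf{z}).
\end{equation}
Once \eqref{eq:planid} is in hand, applying $\sum_{a\in R}\mathrm{Res}_{z=a}\,\omega^{z-\alpha}(z_0)(\cdot)$ to both sides yields the theorem.

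The main work is establishing \eqref{eq:planid}, i.e. the symmetric-function computation of $\sum_{k=1}^r (-1)^k y(z)^{r-k} Q^{(k)}_{g,n+1}(z;\mathbf{z})/dx(z)^k$. First I would substitute $Q^{(k)}_{g,n+1} = \sum_{\beta\subseteq_k\tau(z)}\mathcal{E}^{(k)}W_{g,n+1}(\beta;\mathbf{z})$ from Definition~\ref{d:Pg} and expand each $\mathcal{E}^{(k)}$ via Lemma~\ref{l:RE}, pulling out the $W_{0,1}$-factors as $\mathcal{E}^{(i)}W_{0,1}(\gamma)=\prod_{t\in\gamma}W_{0,1}(t)$. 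Reorganizing the nested sums over $(k,\beta,i,\gamma)$ into a sum over a pair of disjoint subsets $\gamma,\delta\subseteq\tau(z)$ (the $W_{0,1}$-part and the $\mathcal{R}$-part), the inner $\gamma$-summation factorizes as $\prod_{t\in\tau(z)\setminus\delta}(1-y(t)/y(z))$, and after clearing powers of $y(z)$ this collapses to
\begin{equation}
\sum_{k=1}^r (-1)^k y(z)^{r-k}\frac{Q^{(k)}_{g,n+1}(z;\mathbf{z})}{dx(z)^k} = \sum_{\delta\subseteq\tau(z)}(-1)^{|\delta|}\frac{\mathcal{R}^{(|\delta|)}W_{g,n+1}(\delta;\mathbf{z})}{dx(z)^{|\delta|}}\prod_{t\in\tau(z)\setminus\delta}(y(z)-y(t)).
\end{equation}
Here the two facts that do the real work come in: since $z\in\tau(z)$ and $y(z)-y(z)=0$, the product vanishes unless $z\in\delta$, so only subsets $\delta=\{z\}\cup\beta$ with $\beta\subseteq\tau'(z)$ survive; and the surviving products rebuild $\partial_y P/p_0$ through $\prod_{t\in\tau'(z)}(y(z)-y(t)) = \partial_y P(z)/p_0(z)$ (the factorization of $P$ over its roots $\{y(t):t\in\tau(z)\}$, differentiated), together with $E^{(k)}(z;\beta)=dx(z)^k\prod_{t\in\beta}(y(z)-y(t))$. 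The $\beta=\emptyset$ term reproduces $-W_{g,n+1}(z,\mathbf{z})$ via $\mathcal{R}^{(1)}W_{g,n+1}(z;\mathbf{z})=W_{g,n+1}(z,\mathbf{z})$, the remaining terms assemble $T(z;\mathbf{z})$, and the prefactor $\frac{dx}{\partial_y P}p_0$ in $Q_{g,n+1}$ exactly cancels the emergent $\partial_y P/(p_0\,dx)$, giving \eqref{eq:planid}. The hypothesis $2g-2+n\ge 0$ enters only to discard the $k=0$ term $y(z)^r Q^{(0)}_{g,n+1}=\delta_{g,0}\delta_{n,0}\,y(z)^r$.

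The remaining step is routine. The $T$-term gives, by definition, exactly the right-hand side of the topological recursion, while for the $W_{g,n+1}(z,\mathbf{z})$-term I would invoke the global residue theorem on $\Sigma$: as a function of $z$, $\omega^{z-\alpha}(z_0)$ is holomorphic at $z=\alpha$ and has its only pole, a simple one, at $z=z_0$, whereas $W_{g,n+1}(z,\mathbf{z})$ is holomorphic off $R$ and residue-free. Hence the integrand $\omega^{z-\alpha}(z_0)W_{g,n+1}(z,\mathbf{z})$ has $z$-poles only at $R$ and at $z=z_0$, so $\sum_{a\in R}\mathrm{Res}_{z=a} = -\mathrm{Res}_{z=z_0} = W_{g,n+1}(z_0,\mathbf{z})$. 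Therefore $\sum_{a\in R}\mathrm{Res}_{z=a}\,\omega^{z-\alpha}(z_0)Q_{g,n+1}(z;\mathbf{z})$ equals (the recursion right-hand side)$\,-\,W_{g,n+1}(z_0,\mathbf{z})$, which vanishes precisely when the topological recursion holds; reading the equality in both directions gives the equivalence.

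The hard part will be the bookkeeping in the second paragraph: correctly reindexing the quadruple sum into disjoint pairs $(\gamma,\delta)$, checking that the $\gamma$-summation factorizes exactly into the stated product, and verifying that the vanishing of $y(z)-y(z)$ really eliminates all $\delta\not\ni z$ without silently dropping boundary terms. Everything else is mechanical, and I expect no genuine difficulty from the fact that $R$ may include poles of $x$ of order $\ge 2$: identity \eqref{eq:planid} is purely pointwise and algebraic, and the residue argument uses only the pole structure of $\omega^{z-\alpha}(z_0)$ and the absence of residues of $W_{g,n+1}$.
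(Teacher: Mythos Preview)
Your proposal is correct and follows essentially the same route as the paper's proof, just run in the opposite direction: the paper starts from the recursion integrand, puts everything over the common denominator $\partial_y P(z)\,dx(z)^{r-1}/p_0(z)$, expands $E^{(|\alpha|)}(z;\alpha)$ in powers of $W_{0,1}(z)$, and then invokes Lemma~\ref{l:RE} to recombine $\mathcal{E}^{(|\gamma|)}W_{0,1}\cdot\mathcal{R}^{(|\beta|)}W_{g,n+1}$ into $\mathcal{E}^{(k)}W_{g,n+1}$ and hence $Q^{(k)}_{g,n+1}$; you start from $Q^{(k)}_{g,n+1}$, use Lemma~\ref{l:RE} to split off the $W_{0,1}$-parts, factorize the resulting $\gamma$-sum into $\prod_{t\in\tau(z)\setminus\delta}(y(z)-y(t))$, and observe that $z\in\delta$ is forced. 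These are the same algebraic identity read left-to-right versus right-to-left, and your isolation of the pointwise equality $Q_{g,n+1}=T-W_{g,n+1}$ is a clean way to package it.

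One point to tighten: your residue step is phrased for genus zero. In higher genus, $z\mapsto\omega^{z-\alpha}(z_0)$ is not a single-valued meromorphic function on $\Sigma$, so the naive global residue theorem does not apply directly. The paper handles this by invoking Riemann's bilinear identity on the fundamental domain together with the $A$-cycle normalization of both $B(z,z_0)$ and the $W_{g,n}$; the boundary contributions from the cut cycles then vanish, and one recovers exactly $\sum_{a\in R}\mathrm{Res}_{z=a}\,\omega^{z-\alpha}(z_0)W_{g,n+1}(z,\mathbf{z})=W_{g,n+1}(z_0,\mathbf{z})$. You should insert this step explicitly if you want the argument to cover the general spectral curves for which the theorem is stated.
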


\begin{rem}
Before we prove the theorem, let us remark that the beauty of this formulation of the topological recursion is the following.
First notice that $Q_{g,n+1}(z; \bmz)$ is a meromorphic $1$-form of $z\in\Sigma$, with poles possibly on $R$, at coinciding points, and possibly at the poles of $W_{0,1}$, and/or at the zeros of $\partial P/\partial y$ (which are $R$ for a smooth affine curve).

If $\Sigma$ is genus zero, $\omega^{z-\alpha}(z_0)$ is also meromorphic and the integrand is a globally defined meromorphic differential on $\Sigma$ in $z$. If $\Sigma$ is higher genus, it is not quite globally defined (since $\omega^{z-\alpha}(z_0)$ is not a well defined meromorphic function of $z$), but it is a meromorphic differential on the fundamental domain. Therefore, in both cases we can replace the sum over residues by a single contour integral surrounding all ramification points in $R$. This point of view is quite powerful to study various global properties of the topological recursion.
\end{rem}

\begin{proof}
We start with the topological recursion in Definition \ref{d:TR}. For $2g-2+n \geq 0$,
\begin{multline}
W_{g,n+1}(z_0, \bmz)\\
= \sum_{a \in R} \underset{z = a}{\Res} \biggl(\sum_{k=1}^{r-1} \sum_{\beta(z) \subseteq_k \tau'(z)}(-1)^{k+1} \frac{\omega^{z-\alpha}(z_0)}{E^{(k)}(z; \beta(z))} \mathcal{R}^{(k+1)} W_{g,n+1}(z,\beta(z); \bmz) \biggr).
\end{multline}
We put all terms on a common denominator to get
\begin{multline}
W_{g,n+1}(z_0, \bmz) = \sum_{a \in R} \underset{z = a}{\Res} \biggl(\frac{p_0(z)\omega^{z-\alpha}(z_0)}{\sfrac{\partial P}{\partial y}(z)\, dx(z)^{r-1}}\\[-5pt]
\times \sum_{\rho(z) \uplus \beta(z) = \tau'(z)} (-1)^{|\beta|+1} E^{(|\rho |)}(z; \rho(z)) \mathcal{R}^{(|\beta|+1)} W_{g,n+1}(z,\beta(z); \bmz) \biggr).
\end{multline}

Now we can in fact replace the second sum by a sum over all non-empty disjoint subsets $\rho(z), \beta(z) \subset \tau(z)$ such that $\rho(z) \uplus \beta(z) = \tau(z)$, instead of $\tau'(z)$. This is because in the equation above, $z$ is already an entry in $ \mathcal{R}^{(|\beta|+1)} W_{g,n+1}$, so all subsets $\beta(z) \subset\nobreak \tau(z)$ that include $z$ are already taken into account. As for the subsets $\rho(z) \subset\nobreak \tau(z)$ that include $z$, they will not contribute to the sum, since for those $E^{(|\rho |)}(z; \rho(z))$ will vanish. However, since in the previous sum the subsets $\beta(z) \subset \tau'(z)$ had to be non-empty, in our new sum the subsets $\beta(z) \subset \tau(z)$ must have cardinality at least two. Therefore we get:
\begin{multline}
W_{g,n+1}(z_0, \bmz) = \sum_{a \in R} \underset{z = a}{\Res} \biggl(\frac{p_0(z)\omega^{z-\alpha}(z_0)}{\sfrac{\partial P}{\partial y}(z) dx(z)^{r-1}}\\[-5pt]
\times \sum_{\substack{\rho(z) \uplus \beta(z) = \tau(z)\\ |\beta(z)| \geq 2}} (-1)^{|\beta|} E^{(|\rho |)}(z;\rho(z)) \mathcal{R}^{(|\beta|)} W_{g,n+1}(\beta(z); \bmz) \biggr).
\end{multline}
Now we would like the terms with $|\beta(z)| = 1$ to be included in the sum as well. But this is exactly what happens if we bring the term on the LHS to the RHS. More precisely, for $2g-2+n \geq 0$, we can write
\begin{multline}
W_{g,n+1} (z_0, \bmz) = - \underset{z = z_0}{\Res}\ \omega^{z-\alpha}(z_0) W_{g,n+1}(z, \bmz)\\
\hspace*{-1cm}\shoveleft{= \sum_{a \in R} \underset{z = a}{\Res}\ \omega^{z-\alpha}(z_0) W_{g,n+1}(z, \bmz)}\\[-5pt]
\hspace*{-5mm}\shoveleft{+ \frac{1}{2\pi i}\sum_{i=1}^{\hat{g}} \biggl(\oint_{z \in A_i} B(z,z_0) \oint_{z \in B_i} W_{g,n+1}(z,\bmz)
 - \oint_{z \in B_i} B(z,z_0) \oint_{z \in A_i} W_{g,n+1}(z,\bmz) \biggr)}\\
\hspace*{-1cm}\shoveleft{= \sum_{a \in R} \underset{z = a}{\Res} \left(\omega^{z-\alpha}(z_0) W_{g,n+1}(z, \bmz) \right)}\\
\hspace*{-1cm}\shoveleft{= \sum_{a \in R} \underset{z = a}{\Res} \biggl(\frac{p_0(z)\omega^{z-\alpha}(z_0)}{\sfrac{\partial P}{\partial y}(z) dx(z)^{r-1}} E^{(r-1)}(z; \tau'(z))W_{g,n+1}(z,\bmz) \biggr)}\\
\hspace*{-1cm}\shoveleft{= \sum_{a \in R} \underset{z = a}{\Res} \biggl(\frac{p_0(z)\omega^{z-\alpha}(z_0)}{\sfrac{\partial P}{\partial y}(z) dx(z)^{r-1}} \sum_{\beta(z) \subset_1 \tau(z)} E^{(r-1)}(z; \tau(z) \setminus \beta(z))W_{g,n+1}(\beta(z),\bmz) \biggr),}
\end{multline}
where for the second equality we used Riemann's bilinear identity to pick up residues at the other poles ($a \in R$) of the integrand. Then we used the fact that $B(z,z_0)$ and the $W_{g,n}$ are normalized on $A$-cycles to show that the contour integrals vanish.

We then move this term to the RHS, and we end up with the equation:
\begin{multline}
0= \sum_{a \in R} \underset{z = a}{\Res} \biggl(\frac{p_0(z) \omega^{z-\alpha}(z_0)}{\sfrac{\partial P}{\partial y}(z) dx(z)^{r-1}}\\[-5pt]
\times \sum_{\rho(z) \uplus \beta(z) = \tau(z)} (-1)^{|\beta|} E^{(|\rho |)}(z;\rho(z)) \mathcal{R}^{(|\beta|)} W_{g,n+1}(\beta(z); \bmz) \biggr),
\label{eq:paa}
\end{multline}
where the summation now includes subsets $\beta(z) \subset \tau(z)$ of cardinality one.

Then, recall that
\begin{equation}
\begin{aligned}
E^{(|\rho |)}(z; \rho(z)) &= \prod_{i=1}^{|\rho|} \left(W_{0,1}(z) - W_{0,1}(\rho_i(z)) \right)\\
&=
\sum_{j=0}^{|\rho|} (-1)^j\, W_{0,1}(z)^{|\rho|-j}\,\, \sum_{\gamma(z)\subset_j \rho(z)} {\mathcal E}^{(j)} W_{0,1}(\gamma(z)),
\end{aligned}
\end{equation}
where we wrote the elements of $\rho(z)=\{\rho_1(z),\dots,\rho_{|\rho|}(z)\}$.

What we need to do now is collect terms in the second summation of \eqref{eq:paa} order by order in $W_{0,1}(z)$.
Writing $j=|\gamma|$ and $k=|\beta|+|\gamma|$, we have $|\rho|-j=r-|\beta|-|\gamma|=r-k$, that gives
\begin{equation}
\begin{aligned}
&\hspace*{-10mm}\sum_{\rho(z) \uplus \beta(z) = \tau(z)}\hspace*{-5mm} (-1)^{|\beta|} E^{(|\rho |)}(z;\rho(z)) \mathcal{R}^{(|\beta|)} W_{g,n+1}(\beta(z); \bmz)\\
&\hspace*{5mm}= \sum_k (-1)^k\, W_{0,1}(z)^{r-k}\hspace*{-5mm} \sum_{\gamma(z) \uplus \beta(z) \subset_k \tau(z)} \hspace*{-5mm}\mathcal E^{(|\gamma |)}W_{0,1}(\gamma(z)) \mathcal{R}^{(|\beta|)} W_{g,n+1}(\beta(z); \bmz)\\
&\hspace*{5mm}= \sum_k (-1)^k\, W_{0,1}(z)^{r-k}\hspace*{-5mm} \sum_{\rho(z)\subset_k \tau(z)}\hspace*{-1mm} \mathcal E^{(k)} W_{g,n+1}(\rho(z); \bmz)\\
&\hspace*{5mm}= \sum_k (-1)^k\, W_{0,1}(z)^{r-k}\, Q^{(k)}_{g,n+1}(z; \bmz)\\
&\hspace*{5mm}= \frac{\sfrac{\partial P}{\partial y}(z) dx(z)^{r-1}}{p_0(z)}\,Q_{g,n+1}(z; \bmz),
\end{aligned}
\end{equation}
where the second equality is lemma \ref{l:RE}.
We get the Theorem.
\end{proof}

\section{Pole analysis}
\label{s:pole}

For the next few sections, we now assume that our spectral curve $(\Sigma,x,y)$ is admissible, according to Definition \ref{d:admissible}.

In this section what we do is get rid of the residue in the topological recursion as presented in Theorem \ref{t:ref}. More precisely, what we will show is that Theorem \ref{t:ref} implies a nice formula for the expressions
\begin{equation}
\frac{p_0(z) Q^{(m)}_{g,n+1}(z; \bmz)}{dx(z)^m}.
\end{equation}
This result is what will give rise to the quantum curve in the next section.

\subsection{The $U^{(k)}_{g,n}$}

To proceed further we introduce the following objects:
\begin{defin}\label{d:Ug}
Using the notation of Definition \ref{d:tauz}, we define:
\begin{equation}\label{eq:Uk}
U^{(k)}_{g,n+1} (z; \bmz) = \sum_{\beta(z) \subseteq_k \tau'(z)} \mathcal{E}^{(k)} W_{g,n+1}(\beta(z); \bmz).
\end{equation}
For $k=0$, $g \geq 0$ and $n \geq 0$, we define
\begin{equation}
U^{(0)}_{g,n+1}(\bmz) = \delta_{g,0} \delta_{n,0}.
\end{equation}
\end{defin}

\begin{rem}
The difference between the $U^{(k)}_{g,n}$ and the $Q^{(k)}_{g,n}$ is that the latter sums over all preimages $\tau(z)$, while in the former we are only summing over the preimages in $\tau'(z) = \tau(z) \setminus \{z\}$. Thus, while $Q^{(k)}_{g,n}$ is the pullback (in $z$) of a $k$-differential on the base, $U^{(k)}_{g,n}$ is an honest $k$-differential in $z$ on $\Sigma$.
\end{rem}

\begin{rem}\label{r:Ur}
Note that for any spectral curve of degree $r$, and for all $(g,n)$,
\begin{equation}
U^{(k)}_{g,n+1}(z; \bmz) = 0, \quad \text{for all $k \geq r$.}
\end{equation}
This is because we are summing over subsets of $\tau'(z) = \tau(z) \setminus \{z\}$, hence there is no subset of $\tau'(z)$ of cardinality $k \geq r$ since $\tau'(z)$ has $r-1$ elements.
\end{rem}

\begin{rem}\label{rem:44}
Notice that for any $Y$ one has
\begin{equation}
\begin{aligned}
\frac{P(x(z),Y)}{Y-y(z)} &= p_0(z)\prod_{q\in \tau'(z)} (Y-y(q))=p_0(z) \prod_{i=1}^{r-1} (Y-y(\tau_i(z)))\\
&=p_0(z) \sum_{k=0}^{r-1} (-1)^k Y^{r-1-k} \sum_{\beta\subset_k \tau'(z)} \prod_{q\in \beta} y(q)\\
&=p_0(z) \sum_{k=0}^{r-1} (-1)^k Y^{r-1-k}\, \frac{U^{(k)}_{0,1}(z)}{dx(z)^k}.
\end{aligned}
\end{equation}
In particular if we choose $Y=y(z)=y(\tau_0(z))$ we get
\begin{equation}
\frac{\partial P}{\partial y}(z)= p_0(z) \sum_{k=0}^{r-1} (-1)^k y(z)^{r-1-k}\, \frac{U^{(k)}_{0,1}(z)}{dx(z)^k},
\end{equation}
while for $Y=y(\tau_i(z))$ with $i\neq 0$, we get
\begin{equation}
0 = p_0(z) \sum_{k=0}^{r-1}(-1)^k y(\tau_i(z))^{r-1-k}\, \frac{U^{(k)}_{0,1}(z)}{dx(z)^k}.
\end{equation}
Conversely,
\begin{equation}
\begin{aligned}\label{eq:up}
p_0(z) U_{0,1}^{(m)}(z)
&=(-1)^m dx(z)^m\sum_{k=0}^m p_{m-k}(z)\,y(z)^{k}\\
&= (-1)^m dx(z)^m \left(P_{m+1}(x(z), y(z)) + p_m(z) \right),
\end{aligned}
\end{equation}
where $P_{m+1}(x,y)$ was defined in Definition \ref{d:Pm}.
\end{rem}

The $U^{(k)}_{g,n}$ are closely related to the $Q^{(k)}_{g,n}$. We find:

\begin{lem}\label{l:PU} For all $g,n,k \geq 0$,
\begin{multline}
Q^{(k)}_{g,n+1}(z; \bmz) =U^{(k)}_{g,n+1}(z; \bmz)+U^{(k-1)}_{g-1,n+2}(z;\bmz,z)\\
+ \sum_{J_1 \uplus J_2 = \bmz} \sum_{g_1 + g_2 = g} U^{(k-1)}_{g_1, |J_1|+1}(z; J_1) W_{g_2,|J_2|+1}(z, J_2).
\end{multline}
\end{lem}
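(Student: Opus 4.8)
The plan is to compare the two sums head-on. By definition $Q^{(k)}_{g,n+1}(z;\mathbf{z})$ sums $\mathcal{E}^{(k)}W_{g,n+1}(\beta(z);\mathbf{z})$ over all $k$-element subsets $\beta(z)\subseteq_k\tau(z)$, whereas $U^{(k)}_{g,n+1}(z;\mathbf{z})$ runs over the $k$-element subsets of $\tau'(z)=\tau(z)\setminus\{z\}$. Since the only difference between $\tau(z)$ and $\tau'(z)$ is the point $z$ itself, I would split the sum defining $Q^{(k)}_{g,n+1}$ according to whether the chosen subset $\beta(z)$ contains $z$ or not. The subsets that avoid $z$ are precisely the $k$-element subsets of $\tau'(z)$, so their total contribution is by definition $U^{(k)}_{g,n+1}(z;\mathbf{z})$. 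This produces the first term on the right-hand side.

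For the subsets that do contain $z$, I would write $\beta(z)=\{z\}\cup\gamma(z)$ with $\gamma(z)\subseteq_{k-1}\tau'(z)$, so that their contribution is $\sum_{\gamma(z)\subseteq_{k-1}\tau'(z)}\mathcal{E}^{(k)}W_{g,n+1}(\{z\}\cup\gamma(z);\mathbf{z})$. The key step is then to apply Lemma \ref{l:Ek} to each summand, choosing $z$ to play the role of the distinguished element $t_k$ that is peeled off. This rewrites $\mathcal{E}^{(k)}W_{g,n+1}(\{z\}\cup\gamma(z);\mathbf{z})$ as the sum of $\mathcal{E}^{(k-1)}W_{g-1,n+2}(\gamma(z);\mathbf{z},z)$ and the convolution $\sum_{J_1\uplus J_2=\mathbf{z}}\sum_{g_1+g_2=g}\mathcal{E}^{(k-1)}W_{g_1,|J_1|+1}(\gamma(z);J_1)\,W_{g_2,|J_2|+1}(z,J_2)$.

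Finally, I would bring the outer sum over $\gamma(z)\subseteq_{k-1}\tau'(z)$ inside. Because the factor $W_{g_2,|J_2|+1}(z,J_2)$ in the convolution does not depend on $\gamma(z)$, the $\gamma(z)$-sum acts only on the $\mathcal{E}^{(k-1)}$ factors, and by the very definition of $U^{(k-1)}$ it collapses to $U^{(k-1)}_{g-1,n+2}(z;\mathbf{z},z)$ for the first piece and to $\sum_{J_1\uplus J_2=\mathbf{z}}\sum_{g_1+g_2=g}U^{(k-1)}_{g_1,|J_1|+1}(z;J_1)\,W_{g_2,|J_2|+1}(z,J_2)$ for the convolution piece. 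Collecting the three contributions yields exactly the claimed identity.

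The argument is essentially bookkeeping, so there is no serious analytic difficulty; the one point requiring care is the application of Lemma \ref{l:Ek}, which is legitimate only because the objects $\mathcal{E}^{(k)}W$ are symmetric in their $t$-variables, and here one must consistently single out $z$ as the peeled-off element. I would also note the degenerate case $k=0$, where $Q^{(0)}_{g,n+1}=U^{(0)}_{g,n+1}=\delta_{g,0}\delta_{n,0}$ and the $U^{(-1)}$ terms are understood to vanish, so that the identity holds trivially and the main argument only needs $k\geq 1$.
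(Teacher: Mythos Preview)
Your proof is correct and follows essentially the same approach as the paper: split the sum defining $Q^{(k)}_{g,n+1}$ according to whether the subset $\beta(z)\subseteq_k\tau(z)$ contains $z$, identify the $z$-free part as $U^{(k)}_{g,n+1}$, and for the remaining part apply Lemma~\ref{l:Ek} with $z$ as the distinguished element before summing over the residual $(k-1)$-subsets of $\tau'(z)$ to recover the $U^{(k-1)}$ pieces. Your explicit remarks on the symmetry of $\mathcal{E}^{(k)}W$ in the $t$-variables and on the degenerate case $k=0$ are good touches that the paper handles only implicitly or in one line.
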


\begin{proof}
The $k=0$ case is obvious by definition. Let us focus on $k \geq 1$.
By definition, we have:
\begin{equation}
\begin{aligned}
Q^{(k)}_{g,n+1}(z; \bmz) &=\sum_{\beta(z) \subseteq_k \tau(z)} \mathcal{E}^{(k)} W_{g,n+1}(\beta(z); \bmz)\\
&= U^{(k)}_{g,n+1}(z; \bmz)+ \sum_{\rho(z) \subseteq_{k-1} \tau'(z)} \mathcal{E}^{(k)} W_{g,n+1}(z,\rho(z); \bmz).
\end{aligned}
\end{equation}
By Lemma \ref{l:Ek}, we have:
\begin{multline}
\mathcal{E}^{(k)} W_{g,n+1}(\bmt; \bmz) = \mathcal{E}^{(k-1)} W_{g-1,n+2} (\bmt \backslash \{t_k\}; \bmz, t_k)\\+\sum_{J_1 \uplus J_2 = \bmz} \sum_{g_1 + g_2 = g} \mathcal{E}^{(k-1)} W_{g_1, |J_1|+1}(\bmt \backslash \{t_k\}; J_1) W_{g_2,|J_2|+1}(t_k, J_2).
\end{multline}
Therefore
\begin{multline}
\sum_{\rho(z) \subseteq_{k-1} \tau'(z)} \mathcal{E}^{(k)} W_{g,n+1}(z,\rho(z); \bmz) = U^{(k-1)}_{g-1,n+2}(z; \bmz,z)\\[-5pt]
+ \sum_{J_1 \uplus J_2 = \bmz} \sum_{g_1 + g_2 = g} U^{(k-1)}_{g_1, |J_1|+1}(z; J_1) W_{g_2,|J_2|+1}(z, J_2),
\end{multline}
and the lemma is proved.
\end{proof}

\begin{cor}\label{c:Pk} For all $g,n,k \geq 0$,
\begin{multline}
Q^{(k)}_{g,n+1}(z; \bmz) =U^{(k)}_{g,n+1}(z; \bmz)+U^{(k-1)}_{g-1,n+2}(z;\bmz,z)\\
\shoveright{- \sum_{J_1 \uplus J_2 = \bmz} \sum_{g_1 + g_2 = g} U^{(k-1)}_{g_1, |J_1|+1}(z; J_1) U^{(1)}_{g_2,|J_2|+1}(z, J_2)
- \frac{p_1(z)}{p_0(z)}\, dx(z) U^{(k-1)}_{g, n+1}(z; \bmz)}\\[-5pt]
+ \sum_{i=1}^n \frac{dx(z) dx(z_i)}{(x(z)-x(z_i))^2}\, U^{(k-1)}_{g,n} (z; \bmz \setminus \{z_i\}).
\end{multline}
\end{cor}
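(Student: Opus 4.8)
The plan is to deduce the Corollary directly from Lemma \ref{l:PU} by rewriting the factor $W_{g_2,|J_2|+1}(z, J_2)$ that appears in the last sum of that lemma. The starting point is the elementary identity
\begin{equation}
W_{g,n+1}(z,\mathbf{z}) = Q^{(1)}_{g,n+1}(z;\mathbf{z}) - U^{(1)}_{g,n+1}(z;\mathbf{z}),
\end{equation}
which follows at once from the definitions: $Q^{(1)}_{g,n+1}(z;\mathbf{z}) = \sum_{w \in \tau(z)} W_{g,n+1}(w,\mathbf{z})$ runs over all $r$ preimages in $\tau(z)$, whereas $U^{(1)}_{g,n+1}(z;\mathbf{z}) = \sum_{w \in \tau'(z)} W_{g,n+1}(w,\mathbf{z})$ runs over the $r-1$ preimages in $\tau'(z) = \tau(z)\setminus\{z\}$, so the difference is exactly the missing term $W_{g,n+1}(z,\mathbf{z})$.

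First I would substitute this identity into the double sum of Lemma \ref{l:PU}, splitting it into a $U^{(1)}$ contribution and a $Q^{(1)}$ contribution. The $U^{(1)}$ piece is precisely the first of the new terms in the Corollary (carrying its minus sign), so nothing more is required there. The work is entirely concentrated in the $Q^{(1)}$ piece.

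The heart of the argument is to evaluate
\begin{equation}
\sum_{J_1 \uplus J_2 = \mathbf{z}} \sum_{g_1 + g_2 = g} U^{(k-1)}_{g_1,|J_1|+1}(z;J_1)\, Q^{(1)}_{g_2,|J_2|+1}(z;J_2)
\end{equation}
using Lemma \ref{l:P1}. Since $Q^{(1)}_{g_2,|J_2|+1}$ vanishes in every stable case $2g_2-2+|J_2|\geq 0$, only the two unstable factors survive. The term $(g_2,|J_2|)=(0,0)$ gives $Q^{(1)}_{0,1}(z) = -\tfrac{p_1(z)}{p_0(z)}dx(z)$ paired with $U^{(k-1)}_{g,n+1}(z;\mathbf{z})$, reproducing the second new term; the terms $(g_2,|J_2|)=(0,1)$ with $J_2=\{z_i\}$ give $Q^{(1)}_{0,2}(z;z_i) = \tfrac{dx(z)dx(z_i)}{(x(z)-x(z_i))^2}$ paired with $U^{(k-1)}_{g,n}(z;\mathbf{z}\setminus\{z_i\})$, which after summing over $i=1,\ldots,n$ yields the third new term.

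There is no genuine obstacle here: the proof is a bookkeeping exercise, and the only subtlety is to enumerate correctly the unstable contributions of $Q^{(1)}$ in the convolution and to check that the accompanying $U^{(k-1)}$ factors carry the right indices, namely $(g,n+1)$ in the first case and $(g,n)$ in the second. Inserting the three assembled pieces back into Lemma \ref{l:PU} then produces the stated identity.
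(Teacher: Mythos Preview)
Your proposal is correct and follows exactly the route the paper takes: start from Lemma~\ref{l:PU}, replace $W_{g_2,|J_2|+1}(z,J_2)$ by $Q^{(1)}_{g_2,|J_2|+1}(z;J_2)-U^{(1)}_{g_2,|J_2|+1}(z;J_2)$, and then use Lemma~\ref{l:P1} to reduce the $Q^{(1)}$ contribution to its two unstable terms. Your identification of which unstable factors pair with which $U^{(k-1)}$ is correct.
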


\begin{proof}
This follows directly from Lemma \ref{l:PU}, Lemma \ref{l:P1} and the fact that
\begin{equation}
Q^{(1)}_{g,n+1}(z; \bmz) = W_{g,n+1}(z, \bmz) + U^{(1)}_{g,n+1}(z; \bmz).\qedhere
\end{equation}
\end{proof}

\subsection{Pole analysis}

We now prove the following lemma:

\begin{lem}\label{l:PP}
Consider the topological recursion presented in Theorem \ref{t:ref}. Then, for $2g-2+n \geq 0$, the meromorphic one-forms (in $z$)
\begin{equation}\label{eq:PP}
Q_{g,n+1}(z;\bmz) = \frac{dx(z)}{\sfrac{\partial P}{\partial y}(z)} \biggl(p_0(z)\sum_{k=1}^r (-1)^k y(z)^{r-k}\, \frac{Q^{(k)}_{g,n+1}(z;\bmz)}{dx(z)^k} \biggr)
\end{equation}
can only have poles at coinciding points, that is, at $z \in \tau(z_i)$, for $i=1,\ldots,n$.
\end{lem}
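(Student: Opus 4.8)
The plan is to eliminate, one by one, the possible poles of $Q_{g,n+1}(z;\mathbf{z})$ listed in the remark following Theorem \ref{t:ref}: the ramification points $R$; the zeros of $\partial P/\partial y$, which for an admissible (hence smooth affine) curve coincide with $R$; the poles of $W_{0,1}=y\,dx$; and the coinciding points $z\in\tau(z_i)$, which are the ones we want to keep, since they enter through the $W_{0,2}$-factors in the $\mathcal{E}^{(k)}$-expansion and survive. I would split the argument into regularity at $R$, which follows from the recursion alone and uses only that $\Sigma$ has genus zero, and regularity at the poles of $W_{0,1}$, which is where admissibility must be used.

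For the ramification points I would exploit Theorem \ref{t:ref} directly. It asserts
\begin{equation*}
0=\sum_{a\in R}\underset{z=a}{\text{Res}}\,\bigl(\omega^{z-\alpha}(z_0)\,Q_{g,n+1}(z;\mathbf{z})\bigr).
\end{equation*}
Since $\Sigma$ has genus zero I may take $\omega^{z-\alpha}(z_0)=dz_0\bigl(\tfrac{1}{z_0-z}-\tfrac{1}{z_0-\alpha}\bigr)$ with base point $\alpha\notin R$; as a function of $z$ it is holomorphic at each $a\in R$, with expansion $\omega^{z-\alpha}(z_0)=dz_0\bigl(\sum_{j\ge0}\frac{(z-a)^j}{(z_0-a)^{j+1}}-\frac{1}{z_0-\alpha}\bigr)$. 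Writing the Laurent tail of $Q_{g,n+1}$ at $a$ as $\bigl(\sum_l q^{(a)}_l(z-a)^l\bigr)dz$, a direct computation gives $\underset{z=a}{\text{Res}}\bigl(\omega^{z-\alpha}(z_0)Q_{g,n+1}\bigr)=dz_0\bigl(\sum_{j\ge0}\frac{q^{(a)}_{-1-j}}{(z_0-a)^{j+1}}-\frac{q^{(a)}_{-1}}{z_0-\alpha}\bigr)$. Theorem \ref{t:ref} says the sum of these over $a\in R$ vanishes for every $z_0$; but this is a meromorphic one-form in $z_0$ whose principal parts at the distinct points $\{a\in R\}\cup\{\alpha\}$ are linearly independent, so every coefficient $q^{(a)}_{-1-j}$ must vanish. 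Thus $Q_{g,n+1}$ is holomorphic at each point of $R$, and in particular at the zeros of $\partial P/\partial y$. This step is completely general and does not use admissibility.

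It remains to rule out a pole at a puncture $p$ that is a pole of $W_{0,1}$. If $p$ is a pole of $x$ of order $\ge2$ then $p\in R$ and we are done by the previous step; otherwise $p\notin R$, and there every stable $W_{g',n'+1}$ is regular (their poles lie only on $R$) and the $W_{0,2}$-factors are regular away from coinciding points, so the only singular building block feeding $Q^{(k)}_{g,n+1}$ through $\mathcal{E}^{(k)}$ is $W_{0,1}$. I would therefore use Corollary \ref{c:Pk} together with the $\mathcal{E}$-to-$\mathcal{R}$ relation of Lemma \ref{l:RE} to strip out these $W_{0,1}$-strings and reduce the puncture behaviour of $Q_{g,n+1}$ to that of the one-loop quantities $U^{(m)}_{0,1}$, which by $p_0U^{(m)}_{0,1}=(-1)^m dx^m\bigl(P_{m+1}(x,y)+p_m\bigr)$ are governed by the divisor bound ${\rm div}(P_m)\ge\alpha_{r-m+1}{\rm div}_0(x)-\beta_{r-m+1}{\rm div}_\infty(x)$ of Lemma \ref{l:ab}.

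The main obstacle is exactly this last estimate: after multiplying by $dx$ and dividing by $\partial P/\partial y$, one must verify that the pole orders at $p$ produced by the explicit factors $y^{r-k}$, by the polynomials $p_0,p_k$, and by the $W_{0,1}$-strings cancel precisely, leaving no pole. This is where I expect to need admissibility in the sharp form $\lceil\beta_i\rceil-\lfloor\alpha_i\rfloor=1$ of Definition \ref{d:admissible}: it makes the Newton polygon thin enough that the inequalities of Lemma \ref{l:ab} are saturated and force the cancellation. With the ramification-point part being immediate from Theorem \ref{t:ref}, essentially all of the genuine work of the proof sits in this puncture bookkeeping.
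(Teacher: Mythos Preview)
Your ramification-point argument is correct and matches the paper's: the identity of Theorem~\ref{t:ref} forces all principal parts of $Q_{g,n+1}$ at each $a\in R$ to vanish, exactly as you compute.

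The real gap is in the puncture step. You set out to bound pole orders at a puncture $p\notin R$ by unwinding the $W_{0,1}$-content of the $\mathcal{E}^{(k)}$'s via Lemma~\ref{l:RE} and then invoking Lemma~\ref{l:ab} together with the ``thin Newton polygon'' condition $\lceil\beta_i\rceil-\lfloor\alpha_i\rfloor=1$. You acknowledge this is where the work lies and do not carry it out; in fact the cancellation you are hoping for is delicate and does not obviously follow from those ingredients alone. More importantly, this route is unnecessary, and your diagnosis that the sharp admissibility inequality is needed here is mistaken: the paper's proof of this lemma uses only smoothness of the affine curve (so that the zero locus of $\partial P/\partial y$ is contained in $R$), not the absence of interior points. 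The ``no interior point'' condition enters only later, in Lemma~\ref{l:01} and Theorem~\ref{t:pole}.

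The trick you are missing is this: go back to the \emph{original} integrand \eqref{eq:TR}, before it was repackaged as $Q_{g,n+1}$. That integrand is built from $\omega^{z-\alpha}(z_0)$, the $\mathcal{R}^{(k+1)}W_{g,n+1}$ (which contain no $W_{0,1}$-factors, hence have poles only on $R$ and at coinciding points), and the denominator $E^{(k)}$. At a puncture $p\notin R$ this integrand is regular, so one may freely enlarge the residue set in \eqref{eq:TR} from $R$ to $R\cup\{\text{punctures}\}$ without changing anything. Rerunning the derivation of Theorem~\ref{t:ref} with this enlarged set (the Riemann bilinear step still works since the stable $W_{g,n}$ have no poles at the added points) yields the same vanishing identity with the larger sum, and then your own part-1 argument applied verbatim shows $Q_{g,n+1}$ is holomorphic at every puncture. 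No divisor estimates are required.
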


\begin{proof}
Let us first prove that they do not have poles on $R$.
We start with the topological recursion in Theorem \ref{t:ref}:
\begin{equation}\label{eq:p1}
0 = \sum_{i=1}^m \underset{z=a_i}{\Res}\ \omega^{z - \alpha}(z_0)Q_{g,n+1}(z; \bmz).
\end{equation}
Assume that $Q_{g,n+1}(z; \bmz)$ has a pole of some order $m+1\geq 1$ at a ramification point $a\in R$, \ie in some local coordinate near $a$ we may write
\begin{equation}
Q_{g,n+1}(z; \bmz) \sim \frac{dz}{(z-a)^{m+1}}\,S_{g,n+1}(\bmz) \,\, (1+\mathcal{O}(z-a)),
\end{equation}
where we assume $S_{g,n+1}(\bmz)\neq 0$.
We also have the Taylor expansion at $a$ of $\omega^{z-\alpha}(z_0)$ in the same local coordinate
\begin{equation}
\omega^{z-\alpha}(z_0) \sim \sum_{k=0}^\infty (z-a)^k \xi_{a,k}(z_0;\alpha),
\end{equation}
where $\xi_{a,k}(z_0;\alpha)$ is a meromorphic $1$-form of $z_0$ that is analytical everywhere but at~$a$, where it has a pole of order $k+1$. Using the same local coordinate near $a$ we have the Laurent expansion
\begin{equation}
\xi_{a,k}(z_0) \sim \frac{dz_0}{(z_0-a)^{k+1}}\,\,(1+\mathcal{O}(z_0-a)).
\end{equation}
Writing\enlargethispage{.5\baselineskip}%
\begin{equation}
\begin{aligned}
0 &= \sum_{b\in R} \underset{z = b}{\Res}\ \omega^{z-\alpha}(z_0) \,Q_{g,n+1}(z;\bmz)\\[-3pt]
&= \underset{z = a}{\Res}\ \omega^{z-\alpha}(z_0) \,Q_{g,n+1}(z;\bmz)
+\sum_{b\neq a} \underset{z = b}{\Res}\ \omega^{z-\alpha}(z_0) \,Q_{g,n+1}(z;\bmz),
\end{aligned}
\end{equation}
the sum of residues over $b\neq a$ may produce some linear combination of the $\xi_{b,k}(z_0)$s which don't have poles at $z_0\to a$.
The only term that can have a pole at $z_0=a$ comes from the residue at $z\to a$, and therefore we have
\begin{equation}
\begin{aligned}
\text{terms holomorphic at }z_0\to a &= \underset{z = a}{\Res} \omega^{z-\alpha}(z_0) \,Q_{g,n+1}(z;\bmz)\\
&= S_{g,n+1}(\bmz) \,\,\xi_{a,m}(z_0) (1+\mathcal{O}(z_0-a)),
\end{aligned}
\end{equation}
which is a contradiction since the right hand side is not analytical at $z_0\to a$ whenever $S_{g,n+1}(\bmz)\neq 0$.
This shows that $Q_{g,n+1}(z; \bmz)$ cannot have poles on $R$.

Now we need to check that they do not have poles elsewhere, except perhaps at the points $z \in \tau(z_i)$, for $i=1,\ldots,n$. Where could other poles come from? Assuming that the curve is smooth as an affine curve, it is clear from \eqref{eq:PP} that the only possible poles are at coinciding points $z \in \tau(z_i)$, for $i=1,\ldots,n$, or at punctures. So we need to show that $Q_{g,n+1}(z; \bmz) $ has no pole at the punctures.
But this follows directly by noticing that we could have included these punctures in the sum over residues in the definition of the topological recursion \eqref{eq:TR}, since the integrand cannot have poles at the punctures that are not in $R$, so those would not contribute to the residues. But then, by the same argument as above, we conclude that $Q_{g,n+1}(z, \bmz)$ is holomorphic at the punctures.

Therefore $Q_{g,n+1}(z, \bmz)$ can only have poles at coinciding points.
\end{proof}

To go further, we need to strengthen this result. What we are really interested in is not the one-forms $Q_{g,n+1}(z; \bmz)$, but rather the expressions
\begin{equation}
\frac{p_0(z) Q^{(m)}_{g,n+1}(z; \bmz)}{dx(z)^m}
\end{equation}
for each $m=1, \ldots, r$.
Let us consider the cases $(g,n) = (0,0)$ and $(g,n)=(0,1)$ first.

\begin{lem}\label{l:00}
For $m=0,\ldots,r$,
\begin{equation}
\frac{p_0(z) Q_{0,1}^{(m)}(z)}{dx(z)^m} = (-1)^m p_m(z).
\end{equation}
\end{lem}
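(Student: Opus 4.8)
The plan is to reduce the statement to the elementary-symmetric-function computation already recorded in Example \ref{e:Q00}, and then to read off the claim by a one-line algebraic manipulation. First I would unwind the definition: for $(g,n) = (0,0)$ the only building block available is $W_{0,1}$, so $\mathcal{E}^{(m)} W_{0,1}(\beta(z)) = \prod_{i=1}^m W_{0,1}(\beta_i(z))$, and hence
\begin{equation}
Q^{(m)}_{0,1}(z) = \sum_{\beta(z) \subseteq_m \tau(z)} \prod_{i=1}^m W_{0,1}(\beta_i(z)).
\end{equation}

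Next I would exploit the fact that every point $\beta_i(z) \in \tau(z)$ lies over the same base point $x(z)$. Writing $W_{0,1} = y\, dx$ and recalling that $Q^{(m)}_{0,1}$ is the pullback of an $m$-differential on the base of $\pi : \Sigma \to \mathbb{P}^1$, the factor $dx$ is common to all $m$ entries of each product. This lets me pull $dx(z)^m$ out of the whole sum and reduces the bracketed expression to the $m$-th elementary symmetric polynomial $e_m$ of the $r$ roots $y(\tau_1(z)), \ldots, y(\tau_r(z))$ of the defining equation $P(x(z), y) = 0$ from \eqref{eq:sc}.

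The key input is then Vieta's formulas applied to $P(x, y) = p_0(x) y^r + p_1(x) y^{r-1} + \cdots + p_r(x)$: the $m$-th elementary symmetric function of its roots equals $(-1)^m p_m(x)/p_0(x)$. Combining this gives $Q^{(m)}_{0,1}(z) = (-1)^m \tfrac{p_m(z)}{p_0(z)} dx(z)^m$, which is exactly the content of Example \ref{e:Q00}. Multiplying both sides by $p_0(z)/dx(z)^m$ yields the claimed identity, with the boundary cases $m=0$ and $m=r$ (giving $p_0$ and $(-1)^r p_r$) falling out automatically.

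Since the result is essentially a restatement of Example \ref{e:Q00}, there is no genuine analytic difficulty. The only point demanding care is the bookkeeping of the $dx$ factors: one must check that $dx(\beta_i(z))$ may legitimately be identified with $dx(z)$ precisely because $Q^{(m)}_{0,1}$ is a pullback from the base and all preimages in $\tau(z)$ share the common $x$-coordinate. Once that identification is justified, the remaining step is the purely algebraic Vieta computation, and the lemma follows at once.
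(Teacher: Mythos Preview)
Your proof is correct and follows exactly the paper's approach: the paper's proof consists of the single line ``This follows from Example \ref{e:Q00},'' and you have simply spelled out that example in full (the Vieta computation of the elementary symmetric functions of the sheets $y(\tau_i(z))$) before performing the trivial rearrangement. No additional content is needed.
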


\begin{proof}
This follows from Example \eqref{e:Q00}.
\end{proof}

\begin{lem}\label{l:01}
For $m=1,\ldots,r$,
\begin{multline}
\frac{p_0(z) Q_{0,2}^{(m)}(z; z_1)}{x(z)^{\lfloor\alpha_{r-m+1} \rfloor} dx(z)^m} =d_{z_1} \biggl(\frac{1}{x(z)-x(z_1)} \Bigl(\frac{U^{(m-1)}_{0,1}(z_1)}{dx(z_1)^{m-1}}\, \frac{p_0(z_1)}{x(z_1)^{\lfloor \alpha_{r-m+1} \rfloor}}\\
+ (-1)^{m-1} \Bigl(\frac{p_{m-1}(z)}{x(z)^{\lfloor\alpha_{r-m+1} \rfloor}} - \frac{p_{m-1}(z_1)}{x(z_1)^{\lfloor\alpha_{r-m+1} \rfloor}} \Bigr) \Bigr) \biggr),
\end{multline}
where $\alpha_m$ was defined in \eqref{eq:sup}.
\end{lem}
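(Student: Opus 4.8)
The plan is to treat this as a base-case computation in which every ingredient is already known explicitly, and then to prove the claimed identity by comparing the two meromorphic one-forms in $z_1$ on the genus-zero curve $\Sigma \cong \mathbb{P}^1$.

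First I would specialize Corollary \ref{c:Pk} to $(g,n)=(0,1)$ and $k=m$. The term $U^{(m-1)}_{-1,3}$ vanishes (negative genus), and the partition sum over $\{z_1\}$ contributes exactly two terms. Inserting the explicit value $U^{(1)}_{0,1}(z) = -\left(y(z) + \tfrac{p_1(z)}{p_0(z)}\right)dx(z)$ obtained from \eqref{eq:up} with $m=1$, the $U^{(1)}_{0,1}$ contribution combines with the $-\tfrac{p_1}{p_0}dx\,U^{(m-1)}_{0,2}$ term to produce $y(z)dx(z)\,U^{(m-1)}_{0,2}(z;z_1)$; and using $U^{(1)}_{0,2}(z;z_1) = \tfrac{dx(z)dx(z_1)}{(x(z)-x(z_1))^2} - B(z,z_1)$ (which follows from $Q^{(1)}_{0,2} = W_{0,2} + U^{(1)}_{0,2}$ and Lemma \ref{l:P1}) the two remaining terms collapse to $U^{(m-1)}_{0,1}(z)B(z,z_1)$. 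Writing each $U$ explicitly as a sum over the preimages $\tau(z)$, a short symmetric-function manipulation (in which the two $B(z,z_1)$ contributions cancel) then yields the clean closed form
\begin{equation}
Q^{(m)}_{0,2}(z;z_1) = dx(z)^{m-1}\sum_{p\in\tau(z)} B(p,z_1)\, e^{(p)}_{m-1}(z),
\end{equation}
where $e^{(p)}_{m-1}(z)$ denotes the degree-$(m-1)$ elementary symmetric polynomial in the values $\{y(q) : q\in\tau(z),\ q\neq p\}$. (As a check, $m=1$ reproduces $Q^{(1)}_{0,2} = \sum_{p}B(p,z_1) = \tfrac{dx(z)\,dx(z_1)}{(x(z)-x(z_1))^2}$.)

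Next I would identify the proposed primitive. Using \eqref{eq:up} and Definition \ref{d:Pm} one rewrites the bracket in the statement as
\begin{equation}
\Phi_m(z;z_1) := \frac{(-1)^{m-1}}{x(z)-x(z_1)}\left( \frac{P_m(x(z_1),y(z_1))}{x(z_1)^{\lfloor \alpha_{r-m+1}\rfloor}} + \frac{p_{m-1}(z)}{x(z)^{\lfloor \alpha_{r-m+1}\rfloor}} \right),
\end{equation}
so that the claim is $\tfrac{p_0(z)Q^{(m)}_{0,2}(z;z_1)}{x(z)^{\lfloor\alpha_{r-m+1}\rfloor}dx(z)^m} = d_{z_1}\Phi_m(z;z_1)$. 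Both sides are meromorphic one-forms in $z_1$; the right side is exact and the left side is built from the $B(p,z_1)$, so both have vanishing residues everywhere. The strategy is then to match their polar parts. From the closed form above, the left side has poles in $z_1$ only at the coinciding points $z_1\in\tau(z)$, each a double pole. Near such a point $z_1=p$ one computes, using $dp = dx(z)/x'(p)$ together with the symmetric-function identity $p_0(z)\,e^{(p)}_{m-1}(z) = (-1)^{m-1}\big(P_m(x(z),y(p))+p_{m-1}(z)\big)$, that the principal part exactly matches that of $d_{z_1}\Phi_m$ (whose double pole at $z_1=p$ comes from the simple pole of $1/(x(z)-x(z_1))$, the bracket evaluated at $z_1=p$ reproducing precisely $p_0(z)\,e^{(p)}_{m-1}(z)/x(z)^{\lfloor\alpha_{r-m+1}\rfloor}$).

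The main obstacle is showing that $d_{z_1}\Phi_m$ has no further poles, i.e. that $\Phi_m$ is regular in $z_1$ away from $\tau(z)$, in particular at the zeros and poles of $x$. This is exactly where the normalizing factor $x^{\lfloor\alpha_{r-m+1}\rfloor}$ and Lemma \ref{l:ab} enter: the bound ${\rm div}(P_m)\geq \alpha_{r-m+1}{\rm div}_0(x) - \beta_{r-m+1}{\rm div}_\infty(x)$ guarantees that $P_m(x(z_1),y(z_1))/x(z_1)^{\lfloor\alpha_{r-m+1}\rfloor}$ stays regular at the zeros of $x$, while at the poles of $x$ the vanishing of $1/(x(z)-x(z_1))$ together with the admissibility relation $\lceil\beta_i\rceil-\lfloor\alpha_i\rfloor=1$ controls the growth; getting these floor exponents to match the divisor bound exactly is the delicate bookkeeping. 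Once regularity away from $\tau(z)$ is established, the difference of the two sides is a holomorphic one-form in $z_1$ on $\mathbb{P}^1$, hence identically zero, which proves the lemma.
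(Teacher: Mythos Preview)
Your proof is correct and reaches the same conclusion, but the organization is genuinely different from the paper's. You first derive the closed form $Q^{(m)}_{0,2}(z;z_1)=\sum_{p\in\tau(z)}B(p,z_1)\,U^{(m-1)}_{0,1}(p)$ via Corollary~\ref{c:Pk}, then verify the identity by matching principal parts of the two one-forms in $z_1$ on $\mathbb{P}^1$ and invoking the vanishing of holomorphic forms. The paper instead obtains the same closed form directly from the definitions (no detour through Corollary~\ref{c:Pk}), rewrites each summand as $\operatorname{Res}_{z'=\tau_k(z)}\frac{B(z',z_1)}{x(z')-x(z)}\frac{P_m(x(z'),y(z'))}{x(z')^{\lfloor\alpha_{r-m+1}\rfloor}}$, uses Lemma~\ref{l:ab} and admissibility to show that this integrand has no poles in $z'$ away from $\tau(z)\cup\{z_1\}$, and then shifts the residues to $z'=z_1$, where the double pole of $B$ produces the $d_{z_1}$ directly. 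The two arguments are essentially dual: your ``difference is holomorphic, hence zero'' is the residue theorem read backwards. Your route trades the residue bookkeeping for principal-part matching, which is conceptually cleaner; the paper's route is more constructive and does not require stating the target primitive $\Phi_m$ in advance. Both lean on Lemma~\ref{l:ab} and the no-interior-point condition at exactly the same place, namely to control $P_m/x^{\lfloor\alpha_{r-m+1}\rfloor}$ at the zeros and poles of $x$.

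One remark: your passage through Corollary~\ref{c:Pk} to reach the closed form is unnecessary. The identity $Q^{(m)}_{0,2}(z;z_1)=\sum_{p\in\tau(z)}B(p,z_1)\,U^{(m-1)}_{0,1}(p)$ follows in one line from the definitions, since for $(g,n)=(0,1)$ the only set partition contributing to $\mathcal{E}^{(m)}W_{0,2}(\beta;z_1)$ is the one into singletons, giving $\sum_{p\in\beta}B(p,z_1)\prod_{q\in\beta\setminus\{p\}}W_{0,1}(q)$; swapping the order of summation over $p$ and $\beta$ then yields the result immediately.
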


\begin{proof}
First, the case $m=1$ is straightforward. Since $U_{0,1}^{(0)}(z_1) = 1$, the statement is simply that
\begin{equation}
\frac{p_0(z) Q_{0,2}^{(1)}(z; z_1)}{dx(z)} = p_{0}(z) d_{z_1} \Bigl(\frac{1}{x(z)-x(z_1)} \Bigr)= \frac{p_0(z) dx(z_1)}{(x(z)-x(z_1))^2},
\end{equation}
which is indeed correct since
\begin{equation}
Q_{0,2}^{(1)}(z;z_1) = \frac{dx(z) dx(z_1)}{(x(z)-x(z_1))^2}.
\end{equation}

For $m=2,\ldots,r$, we have
\begin{multline}
\frac{p_0(z) Q_{0,2}^{(m)}(z)}{dx(z)^m} = \sum_{k=0}^{r-1} \frac{B(\tau_k(z), z_1)}{dx(z)} \frac{U^{(m-1)}_{0,1}(\tau_k(z)) p_0(z)}{dx(z)^{m-1}}\\= (-1)^{m-1} \biggl(\sum_{k=0}^{r-1} \frac{B(\tau_k(z), z_1)}{dx(z)} P_m(x(z),y(\tau_k(z))) + \sum_{k=0}^{r-1} \frac{B(\tau_k(z), z_1)}{dx(z)} p_{m-1}(z) \biggr),
\end{multline}
where the second equality follows from \eqref{eq:up} (recall that $P_m(x,y)$ was defined in Definition \ref{d:Pm}). The second term is easy to evaluate. Since
\begin{equation}
\sum_{k=0}^{r-1} \frac{B(\tau_k(z), z_1)}{dx(z)} = \frac{dx(z_1)}{(x(z)-x(z_1))^2},
\end{equation}
we get
\begin{equation}
\begin{aligned}
\sum_{k=0}^{r-1} \frac{B(\tau_k(z), z_1)}{dx(z)} p_{m-1}(z) &= p_{m-1}(z) \frac{dx(z_1)}{(x(z)-x(z_1))^2}\\
&= p_{m-1}(z) d_{z_1} \Bigl(\frac{1}{x(z)-x(z_1)} \Bigr).
\end{aligned}
\end{equation}

As for the first term, we rewrite it as follows:
\begin{equation}
\begin{aligned}
\sum_{k=0}^{r-1} \frac{B(\tau_k(z), z_1)}{dx(z)}& \,P_m(x(z),y(\tau_k(z)))\\[-8pt]
&= \sum_{k=0}^{r-1} \underset{z' = \tau_k(z)}{\Res} \frac{B(z',z_1)}{x(z')-x(z)}\, P_m(x(z'), y(z'))\\
&= \sum_{k=0}^{r-1} \underset{z' = \tau_k(z)}{\Res} \frac{B(z',z_1)}{x(z')-x(z)}\, \frac{x(z)^{\lfloor\alpha_{r-m+1} \rfloor} P_m(x(z'), y(z'))}{x(z')^{\lfloor \alpha_{r-m+1} \rfloor}},
\end{aligned}
\end{equation}
where $\alpha_m$ was defined in \eqref{eq:sup}.

Now recall from Lemma \ref{l:ab} that for $m=2,\ldots,r$,
\begin{equation}
\div(P_m) \geq \alpha_{r-m+1} \div_0(x) - \beta_{r-m+1} \div_\infty(x).
\end{equation}
Moreover, the Newton polygon of an admissible spectral curve has no interior point, hence, for all $m=2,\ldots,r$,
\begin{equation}
\lceil \beta_{r-m+1} \rceil - \lfloor \alpha_{r-m+1} \rfloor = 1.
\end{equation}
Therefore,
\begin{equation}\label{eq:limitex}
\begin{aligned}
&\hspace*{-7mm}\div \Bigl(\frac{P_m}{x^{\lfloor \alpha_{r-m+1} \rfloor}} \Bigr)\\
&\geq \left(\alpha_{r-m+1} -\lfloor \alpha_{r-m+1} \rfloor \right) \div_0 (x) - \left(\beta_{r-m+1} - \lceil \beta_{r-m+1} \rceil + 1 \right) \div_\infty (x)\\
&\geq - \div_\infty (x).
\end{aligned}
\end{equation}
It then follows that the only poles of the expression
\begin{equation}
\frac{B(z',z_1)}{x(z')-x(z)}\, \frac{x(z)^{\lfloor\alpha_{r-m+1} \rfloor} P_m(x(z'), y(z'))}{x(z')^{\lfloor \alpha_{r-m+1} \rfloor}}
\end{equation}
in $z'$ are at $z' = \tau_k(z)$ and at $z'=z_1$. Therefore, we obtain
\begin{equation}
\begin{aligned}
\sum_{k=0}^{r-1} \frac{B(\tau_k(z), z_1)}{dx(z)} &P_m(x(z),y(\tau_k(z)))\\[-8pt]
&= \underset{z' = z_1}{\Res} \frac{B(z',z_1)}{x(z)-x(z')} \frac{x(z)^{\lfloor\alpha_{r-m+1} \rfloor} P_m(x(z'), y(z'))}{x(z')^{\lfloor \alpha_{r-m+1} \rfloor}}\\
&=x(z)^{\lfloor\alpha_{r-m+1} \rfloor} d_{z_1} \left(\frac{1}{x(z)-x(z_1)} \frac{P_m(x(z_1), y(z_1))}{x(z_1)^{\lfloor \alpha_{r-m+1} \rfloor}} \right).
\end{aligned}
\end{equation}
Putting this together, we get
\begin{equation}
\begin{aligned}
&\hspace*{-1.5cm}\frac{p_0(z) Q_{0,2}^{(m)}(z)}{dx(z)^m}\\[-8pt]
&\hspace*{-.5cm}= (-1)^{m-1} d_{z_1} \left(\frac{x(z)^{\lfloor\alpha_{r-m+1} \rfloor}}{x(z)-x(z_1)} \frac{P_m(x(z_1), y(z_1))}{x(z_1)^{\lfloor \alpha_{r-m+1} \rfloor}} + \frac{p_{m-1}(z)}{x(z)-x(z_1)}\right)\\
&\hspace*{-.5cm}= x(z)^{\lfloor\alpha_{r-m+1} \rfloor} \biggl[ d_{z_1} \biggl(\frac{1}{x(z)-x(z_1)} \Bigl(\frac{U^{(m-1)}_{0,1}(z_1)}{dx(z_1)^{m-1}} \frac{p_0(z_1)}{x(z_1)^{\lfloor \alpha_{r-m+1} \rfloor}}\\
&\hspace*{3cm}+ (-1)^{m-1} \Bigl(\frac{p_{m-1}(z)}{x(z)^{\lfloor\alpha_{r-m+1} \rfloor}} - \frac{p_{m-1}(z_1)}{x(z_1)^{\lfloor\alpha_{r-m+1} \rfloor}} \Bigr) \Bigr) \biggr) \biggr].
\end{aligned}
\vspace*{-2\baselineskip}%
\end{equation}
\end{proof}

Now we consider the general case $(g,n) \neq (0,0), (0,1)$.
Let us first prove the following useful lemma.
\begin{lem}\label{l:inversion}
Consider a $r$-differential
\begin{equation}
Q(z) = dx(z)^r \sum_{k=1}^{r} (-1)^k y(z)^{r-k} \frac{Q_k(x(z))}{dx(z)^k},
\end{equation}
where the $Q_k(x(z))$ are $k$-differentials pullbacked from the base. Then
\begin{equation}
Q_k(x(z)) = - p_0(z) dx(z)^k \sum_{i=0}^{r-1} \Bigl(\frac{1}{\sfrac{\partial P}{\partial y}(\tau_i(z))}\, \frac{Q(\tau_i(z))}{dx(z)^r}\frac{U_{0,1}^{(k-1)} (\tau_i(z))}{dx(z)^{k-1}} \Bigr).
\end{equation}
where we denoted $\tau(z) = \{\tau_0(z),\dots,\tau_{r-1}(z)\}$ (the labeling is irrelevant).
\end{lem}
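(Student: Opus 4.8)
The plan is to reduce the statement to a single Lagrange-interpolation identity among the fibre values $y_i := y(\tau_i(z))$, which for fixed $x=x(z)$ are precisely the $r$ roots of $P(x(z),\cdot)=0$. First I would strip off the differentials by dividing through by $dx(z)^r$: set $q_k(x) := Q_k(x)/dx^k$, a function on the base (hence the same at every preimage), and $\tilde Q(z):=Q(z)/dx(z)^r$, so that the hypothesis reads, sheet by sheet,
\begin{equation}
\tilde Q(\tau_i(z)) = \sum_{m=1}^{r} (-1)^m\, y_i^{\,r-m}\, q_m(x).
\end{equation}
After the same division the claimed formula becomes an identity of functions on the base,
\begin{equation}
q_k(x) = - p_0(z) \sum_{i=0}^{r-1} \frac{\tilde Q(\tau_i(z))}{\frac{\partial P}{\partial y}(\tau_i(z))}\,\frac{U^{(k-1)}_{0,1}(\tau_i(z))}{dx(z)^{k-1}}.
\end{equation}

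Second, I would identify the two auxiliary factors combinatorially. From $W_{0,1}=y\,dx$ and the definition of $U^{(k)}_{0,1}$ one has
\begin{equation}
\frac{U^{(k-1)}_{0,1}(\tau_i(z))}{dx(z)^{k-1}} = e_{k-1}\big(\{y_j\}_{j\neq i}\big),
\end{equation}
the $(k-1)$-th elementary symmetric polynomial in the $y$-values of the other preimages, while the factorization $P(x,y)=p_0(x)\prod_{j}(y-y_j)$ gives
\begin{equation}
\frac{\partial P}{\partial y}(\tau_i(z)) = p_0(z)\prod_{j\neq i}(y_i-y_j),
\end{equation}
which is also the content of the relation for $\partial P/\partial y$ recorded just before Lemma \ref{l:PU}. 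Substituting these together with the expansion of $\tilde Q(\tau_i(z))$ into the right-hand side, the factor $p_0$ cancels and the entire claim collapses to proving, for $1\le m,k\le r$,
\begin{equation}
\sum_{i=0}^{r-1} \frac{y_i^{\,r-m}\, e_{k-1}\big(\{y_j\}_{j\neq i}\big)}{\prod_{j\neq i}(y_i-y_j)} = (-1)^{k-1}\,\delta_{m,k}.
\end{equation}

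Third, I would prove this identity by Lagrange interpolation. Consider, in an auxiliary variable $y$, the polynomial of degree $\le r-1$
\begin{equation}
S(y) := \sum_{i=0}^{r-1} \frac{y_i^{\,r-m}}{\prod_{j\neq i}(y_i-y_j)}\,\prod_{j\neq i}(y-y_j);
\end{equation}
evaluating at $y=y_i$ kills every summand but the $i$-th and yields $S(y_i)=y_i^{\,r-m}$, so $S$ is the unique degree-$\le r-1$ interpolant of $y\mapsto y^{\,r-m}$ at the $r$ nodes $y_0,\dots,y_{r-1}$. Since $m\ge 1$ forces $\deg y^{\,r-m}\le r-1$, uniqueness gives $S(y)=y^{\,r-m}$ identically. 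Expanding $\prod_{j\neq i}(y-y_j)=\sum_{l=0}^{r-1}(-1)^l e_l\big(\{y_j\}_{j\neq i}\big)\,y^{\,r-1-l}$ and comparing the coefficient of $y^{\,r-k}$ (i.e.\ $l=k-1$) on the two sides of $S(y)=y^{\,r-m}$ gives exactly the identity above, and tracking the signs (using $(-1)^{2k-1}=-1$) returns $q_k(x)$ as required.

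The one subtlety to address is the distinctness of the $y_i$, which is what makes both the interpolation and the factorization of $\partial P/\partial y$ legitimate: over a generic point of the base the fibre consists of $r$ distinct points with distinct $y$-values, since the discriminant of $P$ in $y$ vanishes only on a proper subvariety. The identity therefore holds for generic $z$, and as both sides are meromorphic in $z$ it extends to all of $\Sigma$ by continuity. Everything else is the elementary-symmetric bookkeeping above, so I expect no genuine obstacle beyond keeping the signs straight.
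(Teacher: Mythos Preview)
Your proof is correct and follows essentially the same approach as the paper's: both recognise the statement as a Lagrange interpolation identity among the fibre values $y_i=y(\tau_i(z))$, using the factorisation $P(x,y)=p_0(x)\prod_j(y-y_j)$ to identify $\partial P/\partial y(\tau_i)$ and $U^{(k-1)}_{0,1}(\tau_i)/dx^{k-1}$ with $p_0\prod_{j\neq i}(y_i-y_j)$ and $e_{k-1}(\{y_j\}_{j\neq i})$ respectively. The only organisational difference is that the paper verifies the inversion in the other direction---it first proves the orthogonality relation $p_0\sum_{k}(-1)^k y(z)^{r-1-k}U^{(k)}_{0,1}(\tau_i)/dx^k=\partial_y P(z)\,\delta_{i,0}$ by evaluating $\prod_{j\neq i}(Y-y_j)$ at $Y=y(z)$, and then checks that plugging the claimed formula for $Q_k$ back into the defining sum returns $Q(z)$---whereas you plug the expansion of $Q(\tau_i)$ into the claimed formula and extract coefficients; these are dual computations and equally valid.
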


\begin{proof}
This is an application of the Lagrange interpolating polynomial.

Let us denote $\tau(z) = \{\tau_0(z),\dots,\tau_{r-1}(z)\}$, where we chose our labeling so that $z=\tau_0(z)$.

Recall from Remark \ref{rem:44} that
\begin{equation}
\frac{\partial P}{\partial y}(z)= p_0(z) \sum_{k=0}^{r-1} (-1)^k y(z)^{r-1-k}\, \frac{U^{(k)}_{0,1}(z)}{dx(z)^k},
\end{equation}
and
\begin{equation}
0 = p_0(z) \sum_{k=0}^{r-1}(-1)^k y(\tau_i(z))^{r-1-k}\, \frac{U^{(k)}_{0,1}(z)}{dx(z)^k}.
\end{equation}
In other words
\begin{equation}
p_0(z) \sum_{k=0}^{r-1} (-1)^k y(z)^{r-k-1}\, \frac{U_{0,1}^{(k)} (\tau_i(z))}{dx(z)^k}
= \frac{\partial P}{\partial y}(z) \delta_{i,0}.
\end{equation}

Then we can compute:
\begin{equation}
\begin{aligned}
- dx(z)^r &\sum_{k=1}^{r} (-1)^k y(z)^{r-k} p_0(z) \sum_{i=0}^{r-1} \Bigl(\frac{1}{\sfrac{\partial P}{\partial y}(\tau_i(z))}\, \frac{Q(\tau_i(z))}{dx(z)^r}\,\frac{U_{0,1}^{(k-1)} (\tau_i(z))}{dx(z)^{k-1}} \Bigr)\\
&= \sum_{i=0}^{r-1} \frac{Q(\tau_i(z))}{\sfrac{\partial P}{\partial y}(\tau_i(z))}\,p_0(z) \sum_{k=0}^{r-1} (-1)^k y(z)^{r-k-1}\, \frac{U_{0,1}^{(k)} (\tau_i(z))}{dx(z)^{k}}\\
&= \sum_{i=0}^{r-1} \frac{Q(\tau_i(z))}{\sfrac{\partial P}{\partial y}(\tau_i(z))}\, \frac{\partial P}{\partial y}(z)\, \delta_{i,0}= Q(z).
\end{aligned}
\end{equation}
Hence
\begin{equation}
Q_k(x(z)) = - p_0(z) dx(z)^k \sum_{i=0}^{r-1} \Bigl(\frac{1}{\sfrac{\partial P}{\partial y}(\tau_i(z))}\, \frac{Q(\tau_i(z))}{dx(z)^r}\,\frac{U_{0,1}^{(k-1)} (\tau_i(z))}{dx(z)^{k-1}} \Bigr).\qedhere
\end{equation}
\end{proof}

As an application of this lemma, we get a relation between the $Q^{(m)}_{g,n+1}(z; \bmz)$ and the $Q_{g,n+1}(z; \bmz)$:

\begin{cor}\label{l:L}
For $2g-2+n \geq 0$ and $m=1,\ldots,r$,
\begin{equation}
Q^{(m)}_{g,n+1}(z; \bmz)= - \sum_{k=0}^{r-1} \bigl(Q_{g,n+1}(\tau_k(z); \bmz) U_{0,1}^{(m-1)}(\tau_k(z)) \bigr).
\end{equation}
\end{cor}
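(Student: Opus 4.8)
The plan is to obtain the statement as a direct application of the inversion formula in Lemma \ref{l:inversion}. The object $Q_{g,n+1}(z;\mathbf{z})$ was defined in Theorem \ref{t:ref} precisely as a combination of the individual $Q^{(k)}_{g,n+1}(z;\mathbf{z})$, so morally the corollary is just a matter of ``solving'' that relation for each $Q^{(m)}_{g,n+1}$. Lemma \ref{l:inversion} is exactly the Lagrange-interpolation machinery that performs this inversion, so the whole proof should reduce to choosing the right $r$-differential to feed into it and then simplifying.

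First I would set up the input for Lemma \ref{l:inversion}. I define
\begin{equation}
Q(z) := dx(z)^{r-1}\,\frac{\partial P}{\partial y}(z)\,Q_{g,n+1}(z;\mathbf{z}),
\end{equation}
and take $Q_k(x(z)) := p_0(z)\,Q^{(k)}_{g,n+1}(z;\mathbf{z})$ for $k=1,\ldots,r$. Using the definition of $Q_{g,n+1}(z;\mathbf{z})$ from Theorem \ref{t:ref}, one checks immediately that
\begin{equation}
Q(z) = dx(z)^r \sum_{k=1}^r (-1)^k y(z)^{r-k}\,\frac{Q_k(x(z))}{dx(z)^k},
\end{equation}
which is precisely the shape required by Lemma \ref{l:inversion}. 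To legitimately apply the lemma I must verify its hypotheses: that $Q(z)$ is an $r$-differential in $z$, and that each $Q_k(x(z))$ is a $k$-differential pulled back from the base. The latter follows from the remark after Definition \ref{d:Pg}, which states that $Q^{(k)}_{g,n+1}(z;\mathbf{z})$ is the pullback (in the $z$-variable) of a $k$-differential on the base; multiplying by the base function $p_0(z)=p_0(x(z))$ preserves this property.

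Then I would simply invoke Lemma \ref{l:inversion} with these choices, which yields
\begin{equation}
p_0(z)\, Q^{(m)}_{g,n+1}(z;\mathbf{z}) = -\,p_0(z)\,dx(z)^m \sum_{i=0}^{r-1} \frac{1}{\frac{\partial P}{\partial y}(\tau_i(z))}\,\frac{Q(\tau_i(z))}{dx(z)^r}\,\frac{U_{0,1}^{(m-1)}(\tau_i(z))}{dx(z)^{m-1}},
\end{equation}
and substitute back $Q(\tau_i(z)) = dx(z)^{r-1}\,\frac{\partial P}{\partial y}(\tau_i(z))\,Q_{g,n+1}(\tau_i(z);\mathbf{z})$, which is legitimate because $x(\tau_i(z))=x(z)$ so that $dx(\tau_i(z))=dx(z)$ and the conventions of the lemma are respected. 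The factor $\frac{\partial P}{\partial y}(\tau_i(z))$ cancels against its inverse, the $p_0(z)$ on both sides cancels, and the powers of $dx(z)$ telescope to $1$, leaving exactly
\begin{equation}
Q^{(m)}_{g,n+1}(z;\mathbf{z}) = -\sum_{i=0}^{r-1} Q_{g,n+1}(\tau_i(z);\mathbf{z})\,U_{0,1}^{(m-1)}(\tau_i(z)),
\end{equation}
which is the claim (after renaming the summation index).

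Since the heavy lifting, the Lagrange-interpolation identity, is already packaged in Lemma \ref{l:inversion}, there is no genuine analytic obstacle here; the proof is essentially a bookkeeping exercise. The only points requiring care are the consistent treatment of the one-form $dx$ evaluated at the various preimages $\tau_i(z)$ (unambiguous since all preimages lie over the same base point $x(z)$, which is what makes the cancellation of $dx$-powers go through cleanly) and the explicit verification that the hypotheses of Lemma \ref{l:inversion} truly hold, in particular that $Q_k(x(z))$ is really pulled back from the base. These are the steps I would be most careful to state explicitly.
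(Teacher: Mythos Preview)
Your proposal is correct and follows exactly the same approach as the paper: recall the defining relation for $Q_{g,n+1}(z;\mathbf{z})$ from Theorem \ref{t:ref}, recognize it as an instance of the expansion in Lemma \ref{l:inversion}, and apply that lemma directly. The paper's own proof is a two-line invocation of Lemma \ref{l:inversion}; you have simply unpacked the substitution and cancellation steps more explicitly and verified the hypotheses (that the $Q_k$ are pulled back from the base), which is a welcome bit of care but not a different argument.
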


\begin{proof}
Recall that
\begin{equation}
Q_{g,n+1}(z;\bmz) = \frac{dx(z)}{\sfrac{\partial P}{\partial y}(z)}\, p_0(z)\sum_{k=1}^r (-1)^k y(z)^{r-k} \,\frac{Q^{(k)}_{g,n+1}(z;\bmz)}{dx(z)^k}.
\end{equation}
Then we simply apply Lemma \ref{l:inversion} to get the statement of the corollary.
\end{proof}

Finally we can prove the following theorem on the expressions
\begin{equation}
\frac{p_0(z) Q^{(m)}_{g,n+1}(z; \bmz)}{dx(z)^m}.
\end{equation}

\begin{thm}\label{t:pole}
For $2g-2+n \geq 0$, $m=1,\ldots, r$,
\begin{multline}
\frac{p_0(z) Q_{g,n+1}^{(m)}(z; \bmz)}{x(z)^{\lfloor\alpha_{r-m+1} \rfloor}dx(z)^m}\\
=\sum_{i=1}^n d_{z_i} \biggl(\frac{1}{x(z)-x(z_i)} \Bigl(\frac{U^{(m-1)}_{g,n}(z_i; \bmz \setminus \{z_i \})}{dx(z_i)^{m-1}}\, \frac{p_0(z_i)}{x(z_i)^{\lfloor \alpha_{r-m+1} \rfloor}} \Bigr) \biggl).
\label{eq:tpole}
\end{multline}
For $(g,n) = (0,1)$, $m=1, \ldots, r$,
\begin{multline}
\frac{p_0(z) Q_{0,2}^{(m)}(z; z_1)}{x(z)^{\lfloor\alpha_{r-m+1} \rfloor} dx(z)^m} =d_{z_1} \biggl(\frac{1}{x(z)-x(z_1)} \Bigl(\frac{U^{(m-1)}_{0,1}(z_1)}{dx(z_1)^{m-1}}\, \frac{p_0(z_1)}{x(z_1)^{\lfloor \alpha_{r-m+1} \rfloor}}\\
+ (-1)^{m-1} \Bigl(\frac{p_{m-1}(z)}{x(z)^{\lfloor\alpha_{r-m+1} \rfloor}} - \frac{p_{m-1}(z_1)}{x(z_1)^{\lfloor\alpha_{r-m+1} \rfloor}} \Bigr) \Bigr) \biggr),
\end{multline}
while for $(g,n) = (0,0)$, $m=0,\ldots,r$,
\begin{equation}
\frac{p_0(z) Q_{0,1}^{(m)}(z)}{dx(z)^m} = (-1)^m p_m(z).
\end{equation}
\end{thm}

\begin{proof}
The $(g,n) =(0,0)$ and $(g,n) = (0,1)$ statements were proved in Lemmas \ref{l:00} and \ref{l:01}. Let us then focus on $2g-2+n \geq 0$.

The proof follows along similar lines to the proof of Lemma \ref{l:01}.

First, the case $m=1$ is trivial, since both the left-hand-side and the right-hand-side are zero.

So we consider $m=2,\ldots,r$. We have
\begin{equation}
\begin{aligned}
\frac{p_0(z) Q_{g,n+1}^{(m)}(z; \bmz)}{dx(z)^m} &= - \sum_{k=0}^{r-1} \frac{Q_{g,n+1}(\tau_k(z); \bmz)}{dx(z)} \,\frac{U^{(m-1)}_{0,1}(\tau_k(z)) p_0(z)}{dx(z)^{m-1}}\\[-5pt]
&= (-1)^{m} \biggl(\sum_{k=0}^{r-1} \frac{Q_{g,n+1}(\tau_k(z); \bmz)}{dx(z)} \,P_m(x(z),y(\tau_k(z)))\\[-5pt]
&\hspace*{3.5cm}+ \sum_{k=0}^{r-1} \frac{Q_{g,n+1}(\tau_k(z); \bmz)}{dx(z)} \,p_{m-1}(z) \biggr),
\end{aligned}
\end{equation}
where the second equality follows from \eqref{eq:up} (recall that $P_m(x,y)$ was defined in Definition \ref{d:Pm}).

The second term vanishes, since
\begin{equation}
\sum_{k=0}^{r-1} Q_{g,n+1}(\tau_k(z); \bmz) = 0.
\end{equation}
Indeed, using the fact that $U_{0,1}^{(0)} = 1$, from Corollary \ref{l:L} we get
\begin{equation}
\sum_{k=0}^{r-1} Q_{g,n+1}(\tau_k(z); \bmz) = \sum_{k=0}^{r-1} Q_{g,n+1}(\tau_k(z); \bmz) U_{0,1}^{(0)} = - Q^{(1)}_{g,n+1}(z; \bmz),
\end{equation}
which vanishes according to Lemma \ref{l:P1}.

As for the first term, we rewrite it as
\begin{equation}
\begin{aligned}
&\hspace*{-1cm}\sum_{k=0}^{r-1} \frac{Q_{g,n+1}(\tau_k(z); \bmz)}{dx(z)} \,P_m(x(z),y(\tau_k(z)))\\[-8pt]
&\hspace*{1cm}=
\sum_{k=0}^{r-1} \underset{z' = \tau_k(z)}{\Res} \frac{Q_{g,n+1}(z'; \bmz)}{x(z')-x(z)} \,P_m(x(z'), y(z'))\\[-5pt]
&\hspace*{1cm}=x(z)^{\lfloor\alpha_{r-m+1} \rfloor} \sum_{k=0}^{r-1} \underset{z' = \tau_k(z)}{\Res} \frac{Q_{g,n+1}(z'; \bmz)}{x(z')-x(z)} \,\frac{P_m(x(z'), y(z'))}{x(z')^{\lfloor \alpha_{r-m+1} \rfloor}},
\end{aligned}
\end{equation}
where $\alpha_m$ was defined in \eqref{eq:sup}.

By the same argument as in the proof of Lemma \ref{l:01}, the only poles of the integrand in $z'$ are at $z' = \tau_k(z)$ and at the poles of $Q_{g,n+1}(z'; \bmz)$. From Lemma \ref{l:PP}, we know that $Q_{g,n+1}(z'; \bmz)$ can only have poles at $z' = \tau_k(z_j)$, $k=0,\ldots,r-1$, $j=1,\ldots,n$. Therefore, we get that the expression above is equal to
\begin{equation}
- x(z)^{\lfloor\alpha_{r-m+1} \rfloor} \sum_{j=1}^n \sum_{k=0}^{r-1} \underset{z' = \tau_k(z_j)}{\Res} \frac{Q_{g,n+1}(z'; \bmz)}{x(z')-x(z)}\, \frac{P_m(x(z'), y(z'))}{x(z')^{\lfloor \alpha_{r-m+1} \rfloor}},
\end{equation}
which is in turn equal to
\begin{equation}
- x(z)^{\lfloor\alpha_{r-m+1} \rfloor}\sum_{j=1}^n \sum_{k=0}^{r-1} \underset{z' = z_j}{\Res} \frac{Q_{g,n+1}(\tau_k(z'); \bmz)}{x(z')-x(z)} \,\frac{P_m(x(z'), y(\tau_k(z')))}{x(z')^{\lfloor \alpha_{r-m+1} \rfloor}}.
\end{equation}
Putting this together, we get
\begin{equation}
\begin{aligned}
&\hspace*{-.8cm}\frac{p_0(z) Q_{g,n+1}^{(m)}(z; \bmz)}{dx(z)^m}\\[-8pt]
&=x(z)^{\lfloor\alpha_{r-m+1} \rfloor} (-1)^{m-1} \sum_{j=1}^n \sum_{k=0}^{r-1} \underset{z' = z_j}{\Res} \frac{Q_{g,n+1}(\tau_k(z'); \bmz)}{x(z')-x(z)} \,\frac{P_m(x(z'), y(\tau_k(z')))}{x(z')^{\lfloor \alpha_{r-m+1} \rfloor}}\\[-3pt]
&= x(z)^{\lfloor\alpha_{r-m+1} \rfloor} \sum_{j=1}^n \sum_{k=0}^{r-1} \underset{z' = z_j}{\Res} \frac{Q_{g,n+1}(\tau_k(z'); \bmz)}{x(z')-x(z)} \,\frac{U_{0,1}^{(m-1)}(\tau_k(z')) p_0(z')}{x(z')^{\lfloor \alpha_{r-m+1} \rfloor} dx(z')^{m-1}}\\[-5pt]
& \hspace*{1cm} + x(z)^{\lfloor\alpha_{r-m+1} \rfloor}(-1)^m \sum_{j=1}^n \sum_{k=0}^{r-1} \underset{z' = z_j}{\Res} \frac{Q_{g,n+1}(\tau_k(z'); \bmz)}{x(z')-x(z)}\, \frac{p_{m-1}(z')}{x(z')^{\lfloor \alpha_{r-m+1} \rfloor}},
\end{aligned}
\end{equation}
where for the second equality we used again \eqref{eq:up}. The second term vanishes because
\begin{equation}
\sum_{k=0}^{r-1} Q_{g,n+1}(\tau_k(z'); \bmz) = 0,
\end{equation}
hence
\begin{equation}
\begin{aligned}
&\hspace*{-1cm}\frac{p_0(z) Q_{g,n+1}^{(m)}(z; \bmz)}{dx(z)^m}\\[-5pt]
&= x(z)^{\lfloor\alpha_{r-m+1} \rfloor}\sum_{j=1}^n \sum_{k=0}^{r-1} \underset{z' = z_j}{\Res} \frac{Q_{g,n+1}(\tau_k(z'); \bmz)}{x(z')-x(z)}\, \frac{U_{0,1}^{(m-1)}(\tau_k(z')) p_0(z')}{x(z')^{\lfloor \alpha_{r-m+1} \rfloor} dx(z')^{m-1}}\\
&= x(z)^{\lfloor\alpha_{r-m+1} \rfloor} \sum_{j=1}^n \underset{z' = z_j}{\Res} \frac{Q^{(m)}_{g,n+1}(z'; \bmz) p_0(z')}{(x(z)-x(z'))x(z')^{\lfloor \alpha_{r-m+1} \rfloor} dx(z')^{m-1}}\\
&= x(z)^{\lfloor\alpha_{r-m+1} \rfloor} \sum_{j=1}^n \underset{z' = z_j}{\Res} \frac{B(z', z_j) U^{(m-1)}_{g,n}(z'; \bmz \setminus \{z_j \}) p_0(z')}{(x(z)-x(z'))x(z')^{\lfloor \alpha_{r-m+1} \rfloor} dx(z')^{m-1}}\\
&= x(z)^{\lfloor\alpha_{r-m+1} \rfloor} \sum_{j=1}^n d_{z_j} \Bigl(\frac{U^{(m-1)}_{g,n}(z_j; \bmz \setminus \{z_j \}) p_0(z_j)}{(x(z)-x(z_j))x(z_j)^{\lfloor \alpha_{r-m+1} \rfloor} dx(z_j)^{m-1}} \Bigr).
\end{aligned}
\vspace*{-2.2\baselineskip}
\end{equation}
\end{proof}

As an immediate corollary, we get:

\begin{lem}\label{l:Uk}
For $m=1,\ldots,r$,
\begin{multline}\label{eq:toint}
\frac{p_0(z)}{x(z)^{\lfloor\alpha_{r-m+1} \rfloor}} \,\frac{U^{(m)}_{g,n+1}(z; \bmz)}{dx(z)^m}\\
= - \frac{p_0(z)}{x(z)^{\lfloor\alpha_{r-m+1} \rfloor}}\,\frac{U^{(m-1)}_{g-1,n+2}(z;\bmz,z)}{dx(z)^{m-1} dx(z)} + \frac{p_1(z)}{x(z)^{\lfloor\alpha_{r-m+1} \rfloor}} \,\frac{U^{(m-1)}_{g, n+1}(z; \bmz)}{dx(z)^{m-1}}\\
\shoveright{+ \frac{p_0(z)}{x(z)^{\lfloor\alpha_{r-m+1} \rfloor}} \sum_{J_1 \uplus J_2 = \bmz} \sum_{g_1 + g_2 = g} \frac{U^{(m-1)}_{g_1, |J_1|+1}(z; J_1)}{dx(z)^{m-1}} \,\frac{U^{(1)}_{g_2,|J_2|+1}(z; J_2)}{dx(z)}}\\
\shoveright{- \sum_{i=1}^n \biggl[\frac{p_0(z)}{x(z)^{\lfloor\alpha_{r-m+1} \rfloor}} \,\frac{dx(z_i)}{(x(z)-x(z_i))^2}\, \frac{U^{(m-1)}_{g,n} (z; \bmz \setminus \{z_i\})}{dx(z)^{m-1}}}\\
\shoveright{- d_{z_i} \Bigl(\frac{p_0(z_i)}{x(z_i)^{\lfloor\alpha_{r-m+1} \rfloor}} \frac{1}{x(z)-x(z_i)} \frac{U^{(m-1)}_{g,n}(z_i; \bmz \setminus \{z_i \})}{dx(z_i)^{m-1}} \Bigr) \biggr]}\\
+ \delta_{g,0} \delta_{n,0} (-1)^m \frac{p_m(z)}{x(z)^{\lfloor\alpha_{r-m+1} \rfloor}}\\
+ \delta_{g,0} \delta_{n,1} (-1)^{m-1} d_{z_1}\left(\frac{1}{x(z) - x(z_1)} \Bigl(\frac{p_{m-1}(z)}{x(z)^{\lfloor\alpha_{r-m+1} \rfloor}} - \frac{p_{m-1}(z_1)}{x(z_1)^{\lfloor\alpha_{r-m+1} \rfloor}} \Bigr)\right).
\end{multline}
\end{lem}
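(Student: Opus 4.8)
The plan is to combine Corollary~\ref{c:Pk} with the main Theorem~\ref{t:pole}, treating the latter as the input that resolves all the $Q^{(m)}_{g,n+1}$ expressions and the former as the algebraic identity relating $Q^{(m)}_{g,n+1}$ to the $U^{(m)}$ objects we actually want. Concretely, I would start from Corollary~\ref{c:Pk}, which expresses $Q^{(m)}_{g,n+1}(z;\mathbf{z})$ as $U^{(m)}_{g,n+1}$ plus a collection of lower terms: the $U^{(m-1)}_{g-1,n+2}(z;\mathbf{z},z)$ term, the convolution sum over $J_1\uplus J_2=\mathbf{z}$ and $g_1+g_2=g$ of $U^{(m-1)}_{g_1,|J_1|+1}U^{(1)}_{g_2,|J_2|+1}$, the $-\frac{p_1(z)}{p_0(z)}dx(z)\,U^{(m-1)}_{g,n+1}$ term, and the $\sum_i \frac{dx(z)dx(z_i)}{(x(z)-x(z_i))^2}U^{(m-1)}_{g,n}$ term. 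I would then solve this identity for $U^{(m)}_{g,n+1}$, multiply through by $\frac{p_0(z)}{x(z)^{\lfloor\alpha_{r-m+1}\rfloor}dx(z)^m}$, and substitute the closed expression for $\frac{p_0(z)Q^{(m)}_{g,n+1}}{x(z)^{\lfloor\alpha_{r-m+1}\rfloor}dx(z)^m}$ coming from Theorem~\ref{t:pole}.

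The bookkeeping then proceeds case by case in $(g,n)$. For the stable range $2g-2+n\geq 0$, Theorem~\ref{t:pole} contributes exactly the $d_{z_i}$ total-derivative term $\sum_{i=1}^n d_{z_i}\bigl(\tfrac{1}{x(z)-x(z_i)}\tfrac{U^{(m-1)}_{g,n}(z_i;\mathbf{z}\setminus\{z_i\})}{dx(z_i)^{m-1}}\tfrac{p_0(z_i)}{x(z_i)^{\lfloor\alpha_{r-m+1}\rfloor}}\bigr)$, which pairs against the $\frac{dx(z_i)}{(x(z)-x(z_i))^2}$ term from Corollary~\ref{c:Pk} to produce the bracketed difference on the fourth line of the statement. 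The unstable contributions are handled by the other two cases of Theorem~\ref{t:pole}: the $(0,0)$ case supplies the $\delta_{g,0}\delta_{n,0}(-1)^m p_m(z)$ term, and the $(0,1)$ case supplies the final $\delta_{g,0}\delta_{n,1}$ line, since those are precisely the values of $\frac{p_0(z)Q^{(m)}_{g,n+1}}{x(z)^{\lfloor\alpha_{r-m+1}\rfloor}dx(z)^m}$ that enter when the recursion in Corollary~\ref{c:Pk} reaches an unstable seed. The signs work out because Corollary~\ref{c:Pk} writes $Q^{(m)}=U^{(m)}+\cdots$ so moving everything but $U^{(m)}$ to the right side flips the sign on the convolution and the $\frac{dx(z_i)}{(x(z)-x(z_i))^2}$ terms, while the $\frac{p_1}{p_0}dx\,U^{(m-1)}_{g,n+1}$ term becomes $+\frac{p_1(z)}{x(z)^{\lfloor\alpha_{r-m+1}\rfloor}}\frac{U^{(m-1)}_{g,n+1}}{dx(z)^{m-1}}$ as displayed.

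The main obstacle I anticipate is not conceptual but organizational: ensuring that the $\delta$-function seed terms are attached to the correct $(g,n)$ and that the prefactor $\frac{p_0(z)}{x(z)^{\lfloor\alpha_{r-m+1}\rfloor}}$ is distributed consistently across every term, particularly in the bracketed line where one summand carries $\frac{p_0(z)}{x(z)^{\lfloor\alpha_{r-m+1}\rfloor}}$ evaluated at $z$ while the total-derivative summand carries $\frac{p_0(z_i)}{x(z_i)^{\lfloor\alpha_{r-m+1}\rfloor}}$ evaluated at $z_i$. One must verify that the $x(z)^{\lfloor\alpha_{r-m+1}\rfloor}$ factor appearing in Theorem~\ref{t:pole} cancels correctly against the normalization I introduce on the left-hand side, leaving the clean $\frac{p_0(z)}{x(z)^{\lfloor\alpha_{r-m+1}\rfloor}}\frac{U^{(m)}_{g,n+1}}{dx(z)^m}$ on the left. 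Since everything here is a direct substitution of two previously established results with careful sign and index tracking, I expect the proof to be short once the terms are aligned, and I would present it simply as ``This follows by combining Corollary~\ref{c:Pk} with Theorem~\ref{t:pole}, solving for $U^{(m)}_{g,n+1}$ and collecting terms.''
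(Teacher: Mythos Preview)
Your proposal is correct and matches the paper's own proof exactly: the paper simply writes ``This follows directly from Theorem~\ref{t:pole} and Corollary~\ref{c:Pk},'' and what you have outlined is precisely the term-by-term substitution and sign bookkeeping that this one-line proof compresses. Your final suggested write-up (``This follows by combining Corollary~\ref{c:Pk} with Theorem~\ref{t:pole}, solving for $U^{(m)}_{g,n+1}$ and collecting terms'') is essentially identical to what the paper does.
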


\begin{proof}
This follows directly from Theorem \ref{t:pole} and Corollary \ref{c:Pk}.
\end{proof}

\section{Quantum curves}

We are now ready to prove the existence of the quantum curves. What we need to do is integrate \eqref{eq:toint}.

\Subsection{Integration}

\subsubsection{Integration procedure}

Let us first define the following integration procedure.
\begin{defin}
Let $D = \sum_i \chi_i [p_i]$ be a divisor on $\Sigma$, with $p_i \in \Sigma$.
We define its degree $\deg D =\sum_i \chi_i$. The set of degree 0 divisors of $\Sigma$ is called $\Div_0(\Sigma)$.

For $D\in \Div_0(\Sigma)$ we define integration of a meromorphic one-form $\nu(z)$ on $\Sigma$ as
\begin{equation}
\int_D \nu(z) = \sum_i \chi_i \int^{p_i}_b \nu (z),
\end{equation}
where $b \in \Sigma$ is an arbitrary base point, and the integration contours are the unique homology chains $(b,p_i)$ that do not intersect our basis of non-contractible cycles. Since we assumed that $\deg D=\sum_i \chi_i = 0$, it follows that the integral above does not depend on the choice of base point $b$.
\end{defin}

\begin{rem}
Here and in what follows, we always assume that the integration divisor is chosen such that all integrals converge.
\end{rem}

\begin{defin}
Let $D_1, \ldots, D_n$ be $n$ arbitrary degree 0 divisors on $\Sigma$, with
\begin{equation}
D_i = \sum_j \chi_{i,j} [z_{i,j}].
\end{equation}
We define
\begin{equation}
G^{(k)}_{g,n+1}(z; D_1, \ldots, D_n) = \int_{D_1} \cdots \int_{D_n} U^{(k)}_{g,n+1}(z; z_1,\ldots,z_n).
\end{equation}
Note that we are not integrating in $z$. The $ G^{(k)}_{g,n+1}(z; D_1, \ldots, D_n) $ should be understood as differentials in $z$, and functions of the points $z_{i,j} \in \Sigma$ in the definition of the integration divisors.

We also define the so-called ``principal specialization''
\begin{equation}
G^{(k)}_{g,n+1}(z; D) = \int_D \cdots \int_D U^{(k)}_{g,n+1}(z; z_1,\ldots,z_n),
\end{equation}
where we set all integration divisors to be equal.
\end{defin}

With these definitions, we can integrate the equation in Lemma \ref{l:Uk}:
\begin{lem}\label{l:int1}
For $i=1,\ldots,n$, let $D_i=\sum_j \chi_{i,j} [z_{i,j}]$ be arbitrary degree zero divisors, and introduce the notation $\bmD = \{D_1, \ldots, D_n\}$. Let $D_{n+1} = \chi z' + D'$, with $D'$ an arbitrary divisor of degree $-\chi$, for some $\chi$. Then:
\begin{multline}\label{eq:GknDi}
\frac{p_0(z)}{x(z)^{\lfloor\alpha_{r-k+1} \rfloor}}\, \frac{G^{(k)}_{g,n+1}(z; \bmD)}{dx(z)^k}\\[-5pt]
= - \frac{p_0(z)}{\chi\ x(z)^{\lfloor\alpha_{r-k+1} \rfloor}} \,\frac{d}{dx(z')} \Bigl(\frac{G^{(k-1)}_{g-1,n+2}(z; \bmD, D_{n+1})}{dx(z)^{k-1}} \Bigr)_{z'=z}\\
\shoveright{+ \frac{p_1(z)}{x(z)^{\lfloor\alpha_{r-k+1} \rfloor}} \frac{G^{(k-1)}_{g, n+1}(z; \bmD)}{dx(z)^{k-1}}+ \delta_{g,0} \delta_{n,0} (-1)^k \frac{p_k(z)}{x(z)^{\lfloor\alpha_{r-k+1} \rfloor}}}\\
\shoveright{+ \frac{p_0(z)}{x(z)^{\lfloor\alpha_{r-k+1} \rfloor}} \sum_{J_1 \uplus J_2 = \bmD} \sum_{g_1 + g_2 = g} \frac{G^{(k-1)}_{g_1, |J_1|+1}(z; J_1)}{dx(z)^{k-1}} \frac{G^{(1)}_{g_2,|J_2|+1}(z; J_2)}{dx(z)}}\\
\shoveright{- \sum_{i=1}^n \sum_j \chi_{i,j} \biggl[ \frac{p_0(z)}{x(z)^{\lfloor\alpha_{r-m+1} \rfloor}}\frac{1}{x(z)-x(z_{i,j})} \frac{G^{(k-1)}_{g,n} (z; \bmD \setminus \{D_i\})}{dx(z)^{k-1}}\\[-3pt]
- \frac{p_0(z_{i,j})}{x(z_{i,j})^{\lfloor\alpha_{r-k+1} \rfloor}} \frac{1}{x(z)-x(z_{i,j})} \frac{G^{(k-1)}_{g,n}(z_{i,j}; \bmD \setminus \{D_i \})}{dx(z_{i,j})^{k-1}} \biggr]}\\
+ \delta_{g,0} \delta_{n,1} (-1)^{k-1}\sum_{j} \chi_{1,j} \left(\frac{1}{x(z) - x(z_{1,j})} \Bigl(\frac{p_{k-1}(z)}{x(z)^{\lfloor\alpha_{r-k+1} \rfloor}} - \frac{p_{k-1}(z_{1,j})}{x(z_{1,j})^{\lfloor\alpha_{r-k+1} \rfloor}} \Bigr) \right).
\end{multline}
\end{lem}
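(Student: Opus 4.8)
The plan is to derive \eqref{eq:GknDi} by applying the iterated integration operator $\int_{D_1}\cdots\int_{D_n}$ in the variables $z_1,\dots,z_n$ (leaving the distinguished point $z$ untouched) to the pole equation of Lemma \ref{l:Uk} with $m=k$, and then reading off, term by term, how each of its six right-hand-side contributions transforms. Integration against a fixed divisor is linear and commutes with multiplication by any factor depending only on $z$, so the prefactors $p_0(z)/x(z)^{\lfloor\alpha_{r-k+1}\rfloor}$, $p_1(z)/x(z)^{\lfloor\alpha_{r-k+1}\rfloor}$ and the powers $dx(z)^{-1}$ simply come along unchanged; by the definition of $G^{(k)}_{g,n+1}$, the left-hand side of Lemma \ref{l:Uk} becomes $\frac{p_0(z)}{x(z)^{\lfloor\alpha_{r-k+1}\rfloor}}\frac{G^{(k)}_{g,n+1}(z;\mathbf D)}{dx(z)^k}$, as required.

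Most of the terms are routine. The second term, being linear in $U^{(k-1)}_{g,n+1}(z;\mathbf z)$, integrates directly to its $G$-counterpart. In the quadratic third term, the sum over $J_1\uplus J_2=\mathbf z$ splits the integration variables into two disjoint blocks, so the iterated integral factorizes and the partition of variables becomes the partition $J_1\uplus J_2=\mathbf D$ of the divisors, producing the stated product of $G$'s. The fifth term carries no integration variable (it lives on $(g,n)=(0,0)$) and is unchanged. For the sixth term (on $(g,n)=(0,1)$) and for the second half of the fourth term, the relevant integrand is an exact differential $d_{z_i}(F)$ in the variable being integrated, and I use the elementary fact that for a degree-zero divisor $D_i=\sum_j\alpha_{i,j}[z_{i,j}]$ one has $\int_{D_i}d_{z_i}F=\sum_j\alpha_{i,j}F(z_{i,j})$, the base-point contributions cancelling precisely because $\sum_j\alpha_{i,j}=0$. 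The same degree-zero cancellation handles the first half of the fourth term once I write $\frac{dx(z_i)}{(x(z)-x(z_i))^2}=d_{z_i}\frac{1}{x(z)-x(z_i)}$, giving $\int_{D_i}$ of it equal to $\sum_j\alpha_{i,j}\frac{1}{x(z)-x(z_{i,j})}$, and matching the signs as dictated by the outer $-\sum_i[\,\cdots\,]$.

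The one genuinely delicate term is the first, $-\frac{p_0(z)}{x(z)^{\lfloor\alpha_{r-k+1}\rfloor}}\frac{U^{(k-1)}_{g-1,n+2}(z;\mathbf z,z)}{dx(z)^{k-1}dx(z)}$, in which the last argument has been set equal to the distinguished point $z$, so there is no separate variable left to integrate. Here I introduce the auxiliary degree-zero divisor $D_{n+1}=\alpha[z']+D'$ with $\deg D'=-\alpha$, integrate the honest one-form $U^{(k-1)}_{g-1,n+2}(z;z_1,\dots,z_n,z_{n+1})$ in its last slot against $D_{n+1}$ and the first $n$ slots against $D_1,\dots,D_n$ to form $G^{(k-1)}_{g-1,n+2}(z;\mathbf D,D_{n+1})$, and then recover the coincident evaluation $z_{n+1}=z$ via a derivative. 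Since $\frac{d}{dx(z')}\int_b^{z'}\eta=\frac{\eta(z')}{dx(z')}$ for any one-form $\eta$, while $\int_{D'}\eta$ is independent of $z'$, one has $\frac{1}{\alpha}\frac{d}{dx(z')}\left(\int_{D_{n+1}}\eta\right)\big|_{z'=z}=\frac{\eta(z)}{dx(z)}$; applying this in the $z_{n+1}$-slot (with the $dx(z)^{k-1}$ factor pulled outside, as it does not depend on $z'$) reproduces exactly the first term of \eqref{eq:GknDi}. I expect this coincidence-point manipulation to be the only real obstacle: one must check that the derivative trick faithfully converts the diagonal value into a derivative of the integrated object and that all $z'$-dependence is confined to the single boundary term, whereas the remaining terms require only the linearity and degree-zero cancellations already described.
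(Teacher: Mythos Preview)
Your proposal is correct and follows exactly the approach the paper intends: the paper's own proof consists of the single sentence ``This is the straightforward integration of Lemma \ref{l:Uk},'' and what you have written is a careful unpacking of precisely that computation, term by term. Your handling of the coincident-point term via the auxiliary divisor $D_{n+1}$ and the derivative trick is the intended mechanism, and the degree-zero cancellations you invoke for the exact-differential terms are exactly what makes the boundary evaluations work.
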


\begin{proof}
This is the straightforward integration of Lemma \ref{l:Uk}.
\end{proof}

\subsubsection{Principal specialization}

Let us now principal specialize this equation by setting all divisors equal. We get:

\begin{lem}
When all degree zero divisors are chosen equal, and all containing the point $z$ as:
\begin{equation}
D_i = D = \chi z + \sum_i \chi_i z_i,
\end{equation}
we obtain
\begin{multline}\label{eq:GknD}
\frac{p_0(z)}{x(z)^{\lfloor\alpha_{r-k+1} \rfloor}} \frac{G^{(k)}_{g,n+1}(z; D)}{dx(z)^k}\\
= - \frac{p_0(z)}{\chi (n+1) x(z)^{\lfloor\alpha_{r-k+1} \rfloor}} \frac{d}{dx(z)} \Bigl(\frac{G^{(k-1)}_{g-1,n+2}(z'; D)}{dx(z')^{k-1}} \Bigr)_{z'=z}\\
+ \frac{p_1(z)}{x(z)^{\lfloor\alpha_{r-k+1} \rfloor}} \frac{G^{(k-1)}_{g, n+1}(z; D)}{dx(z)^{k-1}}\\
+ \frac{p_0(z)}{x(z)^{\lfloor\alpha_{r-k+1} \rfloor}} \sum_{m=0}^n \sum_{g_1 + g_2 = g} \frac{n!}{m! (n-m)!} \frac{G^{(k-1)}_{g_1, m+1}(z; D)}{dx(z)^{k-1}} \frac{G^{(1)}_{g_2,n-m+1}(z; D)}{dx(z)}\\
- n \sum_j \chi_{j} \biggl[\frac{p_0(z)}{x(z)^{\lfloor\alpha_{r-k+1} \rfloor}} \frac{1}{x(z)-x(z_{j})} \frac{G^{(k-1)}_{g,n} (z; D)}{dx(z)^{k-1}}\\[-5pt]
\shoveright{- \frac{p_0(z_j)}{x(z_j)^{\lfloor\alpha_{r-k+1} \rfloor}} \frac{1}{x(z)-x(z_{j})} \frac{G^{(k-1)}_{g,n}(z_{j}; D)}{dx(z_{j})^{k-1}} \biggr]}\\
- n \chi \frac{d}{dx(z')} \biggl(\frac{p_0(z')}{x(z')^{\lfloor\alpha_{r-k+1} \rfloor}} \frac{G^{(k-1)}_{g,n} (z'; D)}{dx(z')^{k-1}} \biggr)_{z'=z}
+ \delta_{g,0} \delta_{n,0} (-1)^k \frac{p_k(z)}{x(z)^{\lfloor\alpha_{r-k+1} \rfloor}}\\
+\delta_{g,0} \delta_{n,1} (-1)^{k-1} \biggl[ \sum_{j} \chi_{j} \Bigl(\frac{1}{x(z) - x(z_{j})} \Bigl(\frac{p_{k-1}(z)}{x(z)^{\lfloor\alpha_{r-k+1} \rfloor}} - \frac{p_{k-1}(z_{j})}{x(z_{j})^{\lfloor\alpha_{r-k+1} \rfloor}} \Bigr) \Bigr)\\
+\chi \frac{d}{dx} \Bigl(\frac{p_{k-1}(z)}{x(z)^{\lfloor\alpha_{r-k+1} \rfloor}} \Bigr) \biggr].
\end{multline}
\end{lem}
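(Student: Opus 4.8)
The plan is to derive \eqref{eq:GknD} directly from the integrated identity \eqref{eq:GknDi} of Lemma \ref{l:int1} by specializing all of the integration divisors to a single divisor $D = \alpha [z] + \sum_i \alpha_i [z_i]$, the key feature being that $D$ now contains the external point $z$ itself with coefficient $\alpha$. Concretely I would set $D_1 = \cdots = D_n = D_{n+1} = D$ in \eqref{eq:GknDi} and then track two separate effects: the combinatorial factors produced by integrating against $n$ (or $n+1$) identical divisors, and the coincidence limits forced by the presence of $[z]$ inside every divisor. Throughout I assume, as in the standing remark, that $D$ is chosen so that all integrals converge.

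First I would dispose of the terms that specialize without incident. Setting every $D_i$ equal to $D$ turns the set-partition sum $\sum_{J_1 \uplus J_2 = \mathbf{D}}$ in the quadratic term into a sum weighted by binomial coefficients: the number of ways of selecting which $m$ of the $n$ identical divisors land in $J_1$ is $\binom{n}{m} = \tfrac{n!}{m!(n-m)!}$, producing the contribution $\sum_{m=0}^n \tfrac{n!}{m!(n-m)!}\, G^{(k-1)}_{g_1,m+1}(z;D)\, G^{(1)}_{g_2,n-m+1}(z;D)$. Likewise the plain sum $\sum_{i=1}^n$ in the linear term yields an overall factor $n$, since each summand becomes identical once $D_i = D$, while $\mathbf{D}\setminus\{D_i\}$ is just $D$ specialized over the remaining $n-1$ variables. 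The inhomogeneous $\delta_{g,0}\delta_{n,0}$ and $\delta_{g,0}\delta_{n,1}$ source terms are carried along unchanged apart from this substitution.

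The substantive part is the appearance of $[z]$ in $D$. In the linear term, writing $D = \alpha[z] + \sum_j \alpha_j[z_j]$ splits the inner sum into the genuinely distinct points $z_j$ (coefficient $\alpha_j$) and the diagonal point $z$ (coefficient $\alpha$). The $z_j$ pieces reproduce the bracketed $\tfrac{1}{x(z)-x(z_j)}[\cdots]$ sum verbatim, whereas the diagonal piece carries a factor $\tfrac{1}{x(z)-x(z_{i,j})}$ that is singular when $z_{i,j}=z$. Here I would invoke the coincidence limit
\begin{equation}
\lim_{w\to z}\frac{A(z)-A(w)}{x(z)-x(w)} = \frac{d}{dx(z')}A(z')\Big|_{z'=z},
\end{equation}
applied to $A(w) = \tfrac{p_0(w)}{x(w)^{\lfloor \alpha_{r-k+1}\rfloor}}\tfrac{G^{(k-1)}_{g,n}(w;D)}{dx(w)^{k-1}}$, which converts the two apparently singular terms into the single derivative term $-n\alpha\,\tfrac{d}{dx(z')}(\cdots)_{z'=z}$. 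The identical mechanism applied to the $\delta_{g,0}\delta_{n,1}$ source produces the extra $\alpha\,\tfrac{d}{dx}\bigl(p_{k-1}(z)/x(z)^{\lfloor\alpha_{r-k+1}\rfloor}\bigr)$ term. Finally, for the leading $G^{(k-1)}_{g-1,n+2}$ term I would re-express it through the fully symmetric principal specialization $G^{(k-1)}_{g-1,n+2}(z';D)$, which is integrated over $n+1$ identical copies of $D$: differentiating with respect to $x(z)$ hits $z$ through each of these $n+1$ copies, and by symmetry of $U^{(k-1)}_{g-1,n+2}$ in its point arguments these $n+1$ contributions are equal, each equal to $\alpha$ times the desired diagonal value. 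This is precisely why the prefactor $\tfrac{1}{\alpha}$ of \eqref{eq:GknDi} becomes $\tfrac{1}{\alpha(n+1)}$ in \eqref{eq:GknD}.

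The main obstacle is therefore not the algebra but the careful justification of the coincidence limits: one must verify that in each case the two apparently singular pieces combine into a quantity that is genuinely regular at $w=z$, so that the limit exists and equals the stated $x$-derivative, and that the symmetry counting (the factors $n$, $n+1$, and the binomials) is performed consistently with the symmetrization implicit in the principal specialization. Granting these bookkeeping points, collecting all contributions assembles exactly into \eqref{eq:GknD}.
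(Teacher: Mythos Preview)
Your proposal is correct and follows the same approach as the paper's proof, which simply says the specialization is straightforward from \eqref{eq:GknDi} with the only care needed for the coincidence limits $z_{i,j}\to z$ producing the derivative terms. Your write-up is considerably more detailed than the paper's: you make explicit the combinatorial origin of the binomial coefficients in the quadratic term, the overall factor $n$ in the linear term, and, most usefully, the symmetry argument explaining how the prefactor $1/\alpha$ in \eqref{eq:GknDi} becomes $1/\bigl(\alpha(n+1)\bigr)$ in \eqref{eq:GknD}, none of which the paper spells out.
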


\begin{proof}
The specialization is straightforward from \eqref{eq:GknDi}. Only the terms involving derivatives require some care.
Indeed as $D_i\to D$, $1/(x(z)-x(z_{i,j}))\to 1/(x(z)-x(z_j))$, as long as $z_j\neq z$.
When $z_{i,j}\to z$, the limit with the denominator $1/(x(z)-x(z_{i,j})$ tends to the derivative, giving these terms.
\end{proof}

\subsubsection{Summing over $g$ and $n$}

We now sum over $g$ and $n$. Let us define:
\begin{defin}
For $m=1,\ldots,r$, we define
\begin{equation}
\xi_m(z;D) = (-1)^m \sum_{g,n=0}^\infty \frac{\hbar^{2g+n}}{n!} \,\frac{G^{(m)}_{g,n+1}(z; D)}{dx(z)^m}.
\end{equation}
We get $\xi_0(z;D) = 1$, and define $\xi_k(z;D) = 0$ for all $k<0$. It is also clear that $\xi_k(z;D) = 0$ for all $k \geq r$.
\end{defin}

\begin{rem}
In the following, we will abuse notation and write $\xi_m(x;D)$, even though it is actually a multivalued function on the base. We will do so to lighten notation and use for instance $d/dx$ rather than $d/dx(z)$.
\end{rem}

Summing over $g$ and $n$, \eqref{eq:GknD} becomes
\begin{multline}
\frac{p_0(x)}{x^{\lfloor\alpha_{r-k+1} \rfloor}} \xi_k(x; D)\\
= \frac{p_k(x)}{x^{\lfloor\alpha_{r-k+1} \rfloor}} - \frac{p_1(x)}{x^{\lfloor\alpha_{r-k+1} \rfloor}} \xi_{k-1}(x; D)
+ \frac{p_0(x)}{x^{\lfloor\alpha_{r-k+1} \rfloor}} \xi_{k-1}(x; D) \xi_1(x; D)\\
\shoveright{+ \hbar \sum_i \chi_{i} \frac{1}{x-x_i} \Bigl(\frac{p_0(x)}{x^{\lfloor\alpha_{r-k+1} \rfloor}} \xi_{k-1}(x; D) - \frac{p_0(x_i)}{x_i^{\lfloor\alpha_{r-k+1} \rfloor}} \xi_{k-1} (x_i; D) \Bigr)}\\
\shoveright{+ \frac{\hbar}{\chi} \frac{d}{dx} \Bigl(\frac{p_0(x')}{x'^{\lfloor\alpha_{r-k+1} \rfloor}} \xi_{k-1}(x'; D) \Bigr)_{x' = x} + \hbar \chi \frac{d}{dx'} \Bigl(\frac{p_0(x')}{x'^{\lfloor\alpha_{r-k+1} \rfloor}} \xi_{k-1} (x'; D) \Bigr)_{x'=x}}\\
- \hbar \sum_{i} \chi_{i} \frac{1}{x-x_i} \Bigl(\frac{p_{k-1}(x)}{x^{\lfloor\alpha_{r-k+1} \rfloor}} - \frac{p_{k-1}(x_i)}{x_i^{\lfloor\alpha_{r-k+1} \rfloor}} \Bigr) - \hbar \chi \frac{d}{dx} \Bigl(\frac{p_{k-1}(x)}{x^{\lfloor\alpha_{r-k+1} \rfloor}} \Bigr).
\end{multline}

We see that something nice happens if $\chi=1/\chi$, \ie $\chi=\pm 1$.
$\chi$ may be called the ``charge'', and in analogy with CFTs, $\chi=\pm 1$ would be called a ``degenerate charge''.

In that case we get:

\begin{lem}\label{lem:eqxikD}
If $D=\chi [z] + \sum_i \chi_i [z_i]\in \Div_0(\Sigma)$ with $\chi=\pm 1$, we obtain the following differential recursion relation for the $\xi_k$'s, $k \geq 1$ (and we recall that $\xi_k=0$ if $k\geq r$ and $k < 0$, and $\xi_0=1$):
\begin{multline}
\frac{p_0(x)}{x^{\lfloor\alpha_{r-k+1} \rfloor}} \,\xi_k(x; D) - \frac{p_k(x)}{x^{\lfloor\alpha_{r-k+1} \rfloor}}\\
= - \frac{p_1(x)}{x^{\lfloor\alpha_{r-k+1} \rfloor}} \,\xi_{k-1}(x; D)
+ \frac{p_0(x)}{x^{\lfloor\alpha_{r-k+1} \rfloor}} \,\xi_{k-1}(x; D) \xi_1(x; D)\\
\shoveright{+ \hbar \sum_i \chi_{i} \frac{1}{x-x_i} \Bigl(\frac{p_0(x)}{x^{\lfloor\alpha_{r-k+1} \rfloor}} \,\xi_{k-1}(x; D) - \frac{p_0(x_i)}{x_i^{\lfloor\alpha_{r-k+1} \rfloor}} \,\xi_{k-1} (x_i; D) \Bigr)}\\
\shoveright{- \hbar \sum_{i} \chi_{i} \frac{1}{x-x_i} \Bigl(\frac{p_{k-1}(x)}{x^{\lfloor\alpha_{r-k+1} \rfloor}} - \frac{p_{k-1}(x_i)}{x_i^{\lfloor\alpha_{r-k+1} \rfloor}} \Bigr)}\\
+ \hbar \chi \frac{d}{dx} \Bigl(\frac{p_0(x)}{x^{\lfloor\alpha_{r-k+1} \rfloor}} \,\xi_{k-1} (x; D) -\frac{p_{k-1}(x)}{x^{\lfloor\alpha_{r-k+1} \rfloor}} \Bigr).
\label{eq:xi}
\end{multline}

\end{lem}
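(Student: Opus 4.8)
The plan is to read off \eqref{eq:xi} directly from the summed equation displayed immediately above the statement, so the only genuine content of the lemma is the specialization $\alpha=\pm 1$ together with a tidy repackaging of the derivative terms. I would not re-derive the summed equation itself: it is already in hand, having been obtained by multiplying the principally specialized identity \eqref{eq:GknD} by $(-1)^k\hbar^{2g+n}/n!$ and summing over all $g,n\geq 0$, whereupon the various single sums collapse to $\xi_k,\xi_{k-1}$ and the bilinear term becomes the product $\xi_{k-1}\xi_1$ via the Cauchy product on the double sum over $g_1+g_2=g$ together with the binomial factor $n!/(m!(n-m)!)$. Taking this for granted, what remains is purely a specialization argument.

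First I would record the elementary observation that drives the lemma: the relation $\alpha=1/\alpha$ holds precisely when $\alpha=\pm 1$. In the summed equation there are two terms carrying the total derivative $\tfrac{d}{dx}\bigl(\tfrac{p_0(x)}{x^{\lfloor\alpha_{r-k+1}\rfloor}}\xi_{k-1}(x;D)\bigr)$: one descends from the genus-raising term built on $G^{(k-1)}_{g-1,n+2}$ and carries the prefactor $\hbar/\alpha$, the factor $1/\alpha$ tracking the weight $\alpha$ of the repeated point $z$ in the divisor $D$; the other descends from the $G^{(k-1)}_{g,n}$ term and carries the prefactor $\hbar\alpha$. When $\alpha=\pm 1$ these two prefactors coincide, and the two contributions coalesce. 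I would then fold in the remaining derivative coming from the $(g,n)=(0,1)$ seed, namely $-\hbar\alpha\,\tfrac{d}{dx}\bigl(\tfrac{p_{k-1}(x)}{x^{\lfloor\alpha_{r-k+1}\rfloor}}\bigr)$, so that all derivative contributions assemble into the single term $\hbar\alpha\,\tfrac{d}{dx}\bigl(\tfrac{p_0(x)}{x^{\lfloor\alpha_{r-k+1}\rfloor}}\xi_{k-1}(x;D)-\tfrac{p_{k-1}(x)}{x^{\lfloor\alpha_{r-k+1}\rfloor}}\bigr)$ on the last line of \eqref{eq:xi}. The non-derivative terms are untouched by the specialization and transcribe verbatim, and finally I would move the inhomogeneous seed $p_k(x)/x^{\lfloor\alpha_{r-k+1}\rfloor}$ (from the $\delta_{g,0}\delta_{n,0}$ term) to the left-hand side, which is exactly the normalization used in \eqref{eq:xi}.

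The step requiring the most care — and where I expect essentially all the bookkeeping to live — is verifying that the two derivative contributions really do coalesce into a single clean total derivative at $\alpha=\pm 1$. Concretely one must check that the index shifts $g\mapsto g-1$, $n\mapsto n+2$ in the genus-raising term regenerate $\xi_{k-1}$ with exactly the prefactor $\hbar/\alpha$ and with $\tfrac{p_0}{x^{\lfloor\alpha_{r-k+1}\rfloor}}$ placed consistently relative to the derivative, so that it matches the $\hbar\alpha$ term term-for-term rather than merely resembling it. Once the power of $\hbar$, the combinatorial factor $1/(n+1)$ versus $1/n!$, and the placement of $p_0$ are seen to align, the combination is immediate and the remainder of the proof is transcription.
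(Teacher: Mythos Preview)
Your approach is correct and matches the paper's own, which is literally the single sentence ``something nice happens if $\alpha=1/\alpha$'' followed by the statement. One clarification on the step you flag as most delicate: the two derivative terms in the summed equation are not both total derivatives of $\tfrac{p_0}{x^{\lfloor\alpha_{r-k+1}\rfloor}}\xi_{k-1}$. The $\tfrac{\hbar}{\alpha}\tfrac{d}{dx}$ term differentiates only the $D$-dependence (the first argument is held at $x'$, then one sets $x'=x$), while the $\hbar\alpha\tfrac{d}{dx'}$ term differentiates only the first-argument dependence (with $D$ held fixed). When $\alpha=\pm1$ the two coefficients agree and the two \emph{partial} derivatives sum, by the chain rule, to the single total derivative $\hbar\alpha\,\tfrac{d}{dx}$; this is why the last line of \eqref{eq:xi} carries a single $\hbar\alpha$ rather than a factor of two. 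Your phrase ``matches the $\hbar\alpha$ term term-for-term'' reads as if you expect the two contributions to be literally equal, which would double-count; the point is rather that they are complementary pieces of the chain rule whose coefficients happen to coincide at degenerate charge.
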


\subsection{Quantum curves}

The non-linear differential system in Lemma \ref{lem:eqxikD} can be linearized as in the Riccati equation. We assume that the $z_i$ are in generic position (in particular they are not in $R$, so that the integrals converge). We let $\chi_0=\chi = \pm 1$ and $z_0 = z$.

\begin{defin}
For $m=1, \ldots, r$, we define
\begin{equation}
\psi_m(x; D) = \psi(D) \frac{1}{x^{\lfloor \alpha_{r-m} \rfloor}} \left[p_0(x) \xi_m(x; D) - p_m(x) \right],
\end{equation}
with
\begin{multline}\label{e:psi}
\psi(D) = \exp\Bigl(\frac{1}{\hbar} \int_D W_{0,1}\Bigr)\\
{}\cdot\exp\biggl(\sum_{(g,n)\neq (0,1)} \frac{\hbar^{2g+n-2}}{n!} \int_D \cdots \int_D \Bigl(W_{g,n}(z_1,\ldots,z_n) - \delta_{g,0}\delta_{n,2} \frac{dx(z_1) dx(z_2)}{(x(z_1) - x(z_2))^2} \Bigr) \biggr).
\end{multline}
Note that the integral $\int_D W_{0,1}$ may need to be regularized. But this will play no role in the following because changing the regularization of $\int_D W_{0,1}$ only amounts to multiplying $\psi(D)$ by a constant.
\end{defin}

Then we can prove:
\begin{lem}\label{l:xi1}
For $m=r$,
\begin{equation}
\psi_r(x;D) = - \frac{p_r(x)}{x^{\lfloor \alpha_0 \rfloor}} \psi(D),
\end{equation}
while for $m=1$,
\begin{equation}
\psi_1(x;D) = \chi \hbar \frac{p_0(x)}{x^{\lfloor \alpha_{r-1} \rfloor}} \frac{d}{dx} \psi(D).
\end{equation}
\end{lem}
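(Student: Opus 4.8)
I would treat the two cases separately, since they have quite different flavors.

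The case $m=r$ is essentially immediate and I would dispose of it first. By construction $\xi_k(x;D)=0$ for all $k\geq r$, so in particular $\xi_r(x;D)=0$. Moreover, at $m=r$ the subscript in the definition of $\psi_m$ becomes $\alpha_{r-m}=\alpha_0$. Hence the definition collapses to
\begin{equation}
\psi_r(x;D)=\psi(D)\,\frac{1}{x^{\lfloor\alpha_0\rfloor}}\bigl[p_0(x)\cdot 0-p_r(x)\bigr]=-\frac{p_r(x)}{x^{\lfloor\alpha_0\rfloor}}\,\psi(D),
\end{equation}
which is the first claim, with no further work needed.

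The case $m=1$ is where the real content lies, and my plan is to reduce it to a single clean identity and then verify that identity by computing both sides. Writing $\psi(D)=e^{F(D)}$ with $F(D)$ the exponent appearing in \eqref{e:psi}, and using $\frac{d}{dx}\psi(D)=\psi(D)\frac{d}{dx}F(D)$, I would cancel the common factor $\psi(D)\,x^{-\lfloor\alpha_{r-1}\rfloor}$ from the two sides and divide by $p_0(x)$, reducing the claim to
\begin{equation}
\xi_1(x;D)-\frac{p_1(x)}{p_0(x)}=\alpha\hbar\,\frac{d}{dx}F(D).
\end{equation}
For the right-hand side I would differentiate $F(D)$ in the point $z$ (with $x=x(z)$) that sits inside $D=\alpha[z]+\sum_i\alpha_i[z_i]$ with coefficient $\alpha_0=\alpha$. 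Since every integration in \eqref{e:psi} is against the same divisor $D$ and the $W_{g,n}$ are symmetric, differentiating the $n$-fold integral of $W_{g,n}$ pulls out a factor $n\alpha$ and leaves $W_{g,n}(z,\cdot)$ integrated $n-1$ times; reindexing $n\mapsto n+1$ and using $\alpha^2=1$ gives
\begin{equation}
\alpha\hbar\,\frac{d}{dx}F(D)=\sum_{g,n}\frac{\hbar^{2g+n}}{n!}\,\frac{1}{dx(z)}\int_D\!\cdots\!\int_D\Bigl(W_{g,n+1}(z,\mathbf{w})-\delta_{g,0}\delta_{n,1}\frac{dx(z)\,dx(w_1)}{(x(z)-x(w_1))^2}\Bigr).
\end{equation}
For the left-hand side I would expand $\xi_1$ from its definition, substitute $U^{(1)}_{g,n+1}=Q^{(1)}_{g,n+1}-W_{g,n+1}$ (immediate from $\tau(z)=\tau'(z)\cup\{z\}$ together with $Q^{(1)}_{g,n+1}=W_{g,n+1}+U^{(1)}_{g,n+1}$), and then invoke Lemma \ref{l:P1}: $Q^{(1)}_{g,n+1}$ vanishes for $2g-2+n\geq 0$, leaving only the unstable contributions $Q^{(1)}_{0,1}=-\tfrac{p_1}{p_0}dx$ and $Q^{(1)}_{0,2}(z;z_1)=\tfrac{dx(z)\,dx(z_1)}{(x(z)-x(z_1))^2}$. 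This yields
\begin{equation}
\xi_1(x;D)=\frac{p_1(x)}{p_0(x)}-\hbar\int_D\frac{dx(w_1)}{(x(z)-x(w_1))^2}+\sum_{g,n}\frac{\hbar^{2g+n}}{n!}\,\frac{1}{dx(z)}\int_D\!\cdots\!\int_D W_{g,n+1}(z,\mathbf{w}),
\end{equation}
and subtracting $p_1/p_0$ matches the two displays term by term.

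The one delicate point, which I expect to be the main obstacle in a fully careful write-up, is the bookkeeping of the unstable $(0,2)$ contribution. On the $\xi_1$ side it enters through $U^{(1)}_{0,2}=Q^{(1)}_{0,2}-W_{0,2}$, i.e. through the \emph{pullback} $\tfrac{dx(z)\,dx(z_1)}{(x(z)-x(z_1))^2}$, while on the $\psi(D)$ side it enters through differentiating the regularizing subtraction $-\delta_{g,0}\delta_{n,2}\tfrac{dx_1\,dx_2}{(x_1-x_2)^2}$ built into \eqref{e:psi}; these are literally the same object, which is precisely the reason that subtraction is present in the definition of $\psi(D)$. I would also make explicit that the hypothesis $\alpha=\pm 1$ (the ``degenerate charge'' of Lemma \ref{lem:eqxikD}), through $\alpha^2=1$, is exactly what makes the two $W_{g,n+1}$-series agree with matching signs, and I would remark that convergence and base-point independence of every $\int_D$ are guaranteed by $\deg D=0$ and the genericity of the $z_i$.
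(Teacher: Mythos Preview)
Your proof is correct and follows essentially the same route as the paper: both dispatch $m=r$ immediately from $\xi_r=0$, and for $m=1$ both differentiate the exponent of $\psi(D)$, reindex $n\mapsto n+1$, and then use $W_{g,n+1}+U^{(1)}_{g,n+1}=Q^{(1)}_{g,n+1}$ together with Lemma~\ref{l:P1} to match the two series. Your version is actually a little more explicit than the paper's about where the degenerate-charge condition $\alpha^2=1$ enters and about the role of the $(0,2)$ subtraction in \eqref{e:psi}.
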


\begin{proof}
The $\psi_r(x;D)$ is trivial since $\xi_r = 0$.

For the other case, we calculate:
\begin{multline}
p_0 (x) \hbar \frac{d}{dx} \ln \psi(D)\\
\shoveright{= p_0(x) \sum_{g,n}^\infty \frac{\hbar^{2g + n - 1}}{n!} \frac{d}{dx} \int_D \cdots \int_D \Bigl(W_{g,n}(z_1,\ldots,z_n) - \delta_{g,0}\delta_{n,2} \frac{dx(z_1) dx(z_2)}{(x(z_1) - x(z_2))^2} \Bigr)\phantom{,}}\\
= \chi \frac{p_0(x)}{dx(z)} \sum_{g,n}^\infty \frac{\hbar^{2g + n}}{n!} \int_D \cdots \int_D \Bigl(W_{g,n+1}(z,z_1,\ldots,z_n) - \delta_{g,0}\delta_{n,1} \frac{dx(z) dx(z_1)}{(x(z) - x(z_1))^2} \Bigr),
\end{multline}
where we are now integrating only over the variables $z_1, \ldots, z_n$; we are not integrating over $z$. Then, recall that
\begin{multline}
W_{g,n+1}(z,z_1,\ldots,z_n) + U^{(1)}_{g,n+1}(z; z_1, \ldots, z_n)\\
= - \frac{p_1(x)}{p_0(x)} dx(z) \delta_{g,0} \delta_{n,0} + \frac{dx(z) dx(z_1)}{(x(z) - x(z_1))^2} \delta_{g,0} \delta_{n,1}.
\end{multline}
Therefore,
\begin{align}
p_0(x) \hbar \frac{d}{dx} \ln \psi(D) &= \chi p_0(x) \xi_1(z;D) - \chi p_1(x),
\end{align}
hence
\begin{equation}
\chi p_0(x) \hbar \frac{d}{dx} \psi(D) = \psi_1(x;D) x^{\lfloor \alpha_{r-1} \rfloor}.\qedhere
\end{equation}
\end{proof}

Then, from lemma \ref{lem:eqxikD} we get

\begin{thm}\label{thm:eqpsiD}
For $k=2, \ldots, r$, we have the following system of linear differential equations:
\begin{multline}
\hbar \chi \frac{d}{dx} \left(\psi_{k-1}(x;D)\right) = \frac{x^{\lfloor \alpha_{r-k} \rfloor}}{x^{\lfloor \alpha_{r-k+1} \rfloor}}\,\psi_k(x;D) - \frac{p_{k-1}(x) x^{\lfloor \alpha_{r-1} \rfloor}}{p_0(x)x^{\lfloor\alpha_{r-k+1} \rfloor}}\, \psi_1(x;D)\\
- \hbar \sum_i \chi_i \frac{1}{x-x_i}\,\left(\psi_{k-1}(x;D) - \psi_{k-1}(x_i;D) \right).
\end{multline}
We also get
\begin{equation}
\hbar \chi \frac{d}{dx} \psi_r(x; D) =\hbar \chi \left(\frac{p_r'(x)}{p_r(x)} - \frac{\lfloor \alpha_0 \rfloor}{x} \right) \psi_r(x;D) - \frac{p_r(x) x^{\lfloor \alpha_{r-1} \rfloor}}{p_0(x) x^{\lfloor \alpha_0 \rfloor}} \psi_1(x;D).
\end{equation}

Equivalently, in matrix form:
\begin{multline*}
\hbar \chi \frac{d}{dx}
\begin{pmatrix}
\psi_1(x;D)\\ \vdots\\ \psi_{r-1}(x;D)\\ \psi_{r}(x;D)
\end{pmatrix}
=
\begin{pmatrix}
-\frac{p_{1}(x)}{p_0(x)} & \frac{x^{\lfloor \alpha_{r-2} \rfloor}}{x^{\lfloor \alpha_{r-1} \rfloor}} & & &\\
\vdots & &\ddots & &\\
- \frac{p_{r-1}(x) x^{\lfloor \alpha_{r-1} \rfloor}}{p_0(x)x^{\lfloor\alpha_{1} \rfloor}} & & & & \frac{x^{\lfloor \alpha_{0} \rfloor}}{x^{\lfloor \alpha_{1} \rfloor}}\\
- \frac{p_r(x) x^{\lfloor \alpha_{r-1} \rfloor}}{p_0(x) x^{\lfloor \alpha_0 \rfloor}} & & & & \frac{p_r'(x)}{p_r(x)} - \frac{\lfloor \alpha_0 \rfloor}{x}
\end{pmatrix}
\,
\begin{pmatrix}
\psi_1(x;D)\\ \vdots\\ \psi_{r-1}(x;D)\\ \psi_r(x;D)
\end{pmatrix}\\
- \hbar \sum_i \frac{\chi_i}{x-x_i}\,\begin{pmatrix}
\psi_1(x;D)\\ \vdots\\ \psi_{r-1}(x;D)\\ 0
\end{pmatrix} + \hbar \sum_i \frac{\chi_i}{x-x_i}\,\begin{pmatrix}
\psi_1(x_i;D)\\ \vdots\\ \psi_{r-1}(x_i;D)\\ 0
\end{pmatrix}
\end{multline*}
\end{thm}

\begin{proof}
The special case is straightforward. Since $\psi_r(x;D) = -\frac{p_r(x)}{x^{\lfloor \alpha_0 \rfloor}} \psi(D)$,
\begin{equation}
\begin{aligned}
\hbar \chi \frac{d}{dx} \psi_r(x; D) &= - \hbar \chi \frac{d}{dx} \left(\frac{p_r(x)}{x^{\lfloor \alpha_0 \rfloor}} \psi(D) \right)\\
&=- \hbar \chi \left(\frac{p_r'(x)}{x^{\lfloor \alpha_0 \rfloor}} -\lfloor \alpha_0 \rfloor \frac{p_r(x)}{x^{\lfloor \alpha_0 \rfloor+1}} \right) \psi(D) - \hbar \chi \frac{p_r(x)}{x^{\lfloor \alpha_0 \rfloor}} \frac{d}{dx} \psi(D)\\
&= \hbar \chi \left(\frac{p_r'(x)}{p_r(x)} - \frac{\lfloor \alpha_0 \rfloor}{x} \right) \psi_r(x;D) - \frac{p_r(x) x^{\lfloor \alpha_{r-1} \rfloor}}{p_0(x) x^{\lfloor \alpha_0 \rfloor}} \psi_1(x;D),
\end{aligned}
\end{equation}
where we used Lemma \ref{l:xi1}.

To prove cases $k=2,\ldots,r$, we start with \eqref{eq:xi} and multiply by $\psi(D)$ to get:
\begin{multline}
\frac{x^{\lfloor \alpha_{r-k} \rfloor}}{x^{\lfloor \alpha_{r-k+1} \rfloor}} \psi_k(x;D)= \frac{x^{\lfloor \alpha_{r-1} \rfloor}}{x^{\lfloor \alpha_{r-k+1} \rfloor}} \psi_1(x;D) \xi_{k-1}(x;D) \\
\shoveright{+ \hbar \sum_i \chi_i \frac{1}{x-x_i} \left(\psi_{k-1}(x;D) - \psi_{k-1}(x_i;D) \right)}\\
+ \hbar \chi \psi(D) \frac{d}{dx} \Bigl(\frac{p_0(x)}{x^{\lfloor\alpha_{r-k+1} \rfloor}} \xi_{k-1} (x; D) -\frac{p_{k-1}(x)}{x^{\lfloor\alpha_{r-k+1} \rfloor}} \Bigr).
\end{multline}
We rewrite the last term as
\begin{equation}
\hbar \chi \frac{d}{dx} \left(\psi_{k-1}(x;D)\right) - \hbar \frac{\chi}{x^{\lfloor\alpha_{r-k+1} \rfloor}} \left(p_0(x) \xi_{k-1}(x;D) - p_{k-1}(x) \right) \frac{d}{dx} \psi(D).
\end{equation}
Using Lemma \ref{l:xi1}, this is equal to
\begin{equation}
\hbar \chi \frac{d}{dx} \left(\psi_{k-1}(x;D)\right) - \frac{x^{\lfloor \alpha_{r-1} \rfloor}}{x^{\lfloor\alpha_{r-k+1} \rfloor}} \Bigl(\xi_{k-1}(x;D) - \frac{p_{k-1}(x)}{p_0(x)} \Bigr) \psi_1(x;D).
\end{equation}
Putting everything together, we end up with the differential equation
\begin{multline}
\frac{x^{\lfloor \alpha_{r-k} \rfloor}}{x^{\lfloor \alpha_{r-k+1} \rfloor}} \psi_k(x;D) = \hbar \chi \frac{d}{dx} \left(\psi_{k-1}(x;D)\right) + \frac{p_{k-1}(x) x^{\lfloor \alpha_{r-1} \rfloor}}{p_0(x)x^{\lfloor\alpha_{r-k+1} \rfloor}} \psi_1(x;D)\\
+ \hbar \sum_i \chi_i \frac{1}{x-x_i} \left(\psi_{k-1}(x;D) - \psi_{k-1}(x_i;D) \right).\qedhere
\end{multline}
\end{proof}

In the next subsection, we show that for special choices of integration divisors, this system of differential equations can be turned into an order $r$ differential equation for $\psi$ which is a quantization of the original spectral curve. The key point to notice here is that the quantum curve strongly depends on the choice of integration divisor; different choices of integration divisor give rise to different quantum curves.

\subsection{Special choices of integration divisors}

In this section we study special choices of integration divisors that simplify the system of linear differential equations obtained in Theorem \ref{thm:eqpsiD}.

\subsubsection{Poles of $x$}

The most interesting choice of integration divisor is
\begin{equation}
D = [z] - [\beta],
\end{equation}
that is $\chi=1$, with $\beta$ a simple pole of $x$ (then it is not in $R$).

\begin{rem}
It is also possible to integrate with base point $\beta$ a pole of $x$ of order more than one. But then, $\beta \in R$, hence one needs to check that the integrals converge, since the correlation functions can have poles at $\beta$. We however can do that explicitly in some examples, as we will do in the next section.
\end{rem}

In this case Theorem \ref{thm:eqpsiD} reduces to (for $k=2,\ldots,r$)
\begin{multline}
\hbar \frac{d}{dx} \left(\psi_{k-1}(x;D)\right) = \frac{x^{\lfloor \alpha_{r-k} \rfloor}}{x^{\lfloor \alpha_{r-k+1} \rfloor}} \psi_k(x;D) - \frac{p_{k-1}(x) x^{\lfloor \alpha_{r-1} \rfloor}}{p_0(x)x^{\lfloor\alpha_{r-k+1} \rfloor}} \psi_1(x;D)\\
+ \hbar \lim_{z_1 \to \beta} \frac{1}{x_1(z_1)} \psi_{k-1}(x_1(z_1);D).
\end{multline}

We can say more about the limits. We can prove the following lemma:
\begin{lem}
For $\beta$ a simple pole of $x$ (or a pole of $x$ where the correlation functions are all holomorphic), and $k=1,\ldots,r-1$,
\begin{equation}
\lim_{z_1 \to \beta} \frac{\psi_{k}(x(z_1);D)}{x(z_1)}= \psi(D) \lim_{z_1 \to \beta} \frac{P_{k+1}(x(z_1), y(z_1))}{x(z_1)^{\lfloor \alpha_{r-k} \rfloor+1}},
\end{equation}
where the $P_{k+1}(x,y)$ were defined in \eqref{eq:Pm}. In particular, because of the admissibility condition in Definition \ref{d:admissible}, these limits are finite.
\end{lem}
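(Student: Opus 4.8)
The plan is to substitute the $\hbar$-expansion of $\xi_k$ into the definition of $\psi_k$, peel off the leading ($\hbar^0$) term, recognize it as the claimed right-hand side, and argue that all remaining terms drop out as the evaluation point runs to $\beta$. Throughout, the running variable is $z_1$ (the integration endpoint), while the prefactor $\psi(D)$ depends only on the divisor $D=[z]-[\beta]$ and \emph{not} on $z_1$, so it factors out of the limit. Concretely, $\frac{\psi_k(x(z_1);D)}{x(z_1)} = \frac{\psi(D)}{x(z_1)^{\lfloor\alpha_{r-k}\rfloor+1}}\bigl(p_0(x(z_1))\,\xi_k(x(z_1);D)-p_k(x(z_1))\bigr)$, and since $\xi_k=(-1)^k\sum_{g,n}\frac{\hbar^{2g+n}}{n!}\frac{G^{(k)}_{g,n+1}}{dx^k}$, the $(g,n)=(0,0)$ term is $(-1)^k U^{(k)}_{0,1}/dx^k$. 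By \eqref{eq:up} this gives $(-1)^k p_0 U^{(k)}_{0,1}/dx^k = P_{k+1}(x,y)+p_k$, so $p_0\xi_k-p_k = P_{k+1}(x,y) + (-1)^k p_0\sum_{(g,n)\neq(0,0)}\frac{\hbar^{2g+n}}{n!}\frac{G^{(k)}_{g,n+1}(z_1;D)}{dx^k}$. Thus the leading contribution to $\psi_k/x$ is exactly $\psi(D)\,P_{k+1}(x(z_1),y(z_1))/x(z_1)^{\lfloor\alpha_{r-k}\rfloor+1}$, which is the claimed limit.

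Next I would establish that this leading term is finite at $\beta$, which is the ``admissibility'' point of the statement. Applying Lemma \ref{l:ab} with $m=k+1$ gives ${\rm div}(P_{k+1})\ge \alpha_{r-k}{\rm div}_0(x)-\beta_{r-k}{\rm div}_\infty(x)$, so the one-form $P_{k+1}/x^{\lfloor\alpha_{r-k}\rfloor+1}$ has coefficient at least $\lceil\beta_{r-k}\rceil-\beta_{r-k}$ along ${\rm div}_\infty(x)$, where I used the admissibility identity $\lceil\beta_{r-k}\rceil=\lfloor\alpha_{r-k}\rfloor+1$ from Definition \ref{d:admissible}, valid since $1\le r-k\le r-1$. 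This coefficient is $\ge 0$, hence $P_{k+1}/x^{\lfloor\alpha_{r-k}\rfloor+1}$ is holomorphic at every pole of $x$, in particular at $\beta$, giving a finite limit.

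The heart of the proof is then to show $\lim_{z_1\to\beta}\frac{p_0\,G^{(k)}_{g,n+1}(z_1;D)}{x(z_1)^{\lfloor\alpha_{r-k}\rfloor+1}\,dx(z_1)^k}=0$ for every $(g,n)\neq(0,0)$. I would first reduce this to a statement about the unintegrated $U^{(k)}_{g,n+1}(z_1;\mathbf{w})$: the hypothesis that $\beta$ is a simple pole (or a pole where all correlation functions are holomorphic) guarantees that the integrals $\int_D\cdots\int_D$ converge even though $\beta$ is the base point of $D$, and lets me commute the limit with the integration. For stable $(g,n)$ the vanishing is clean: Theorem \ref{t:pole} writes $\frac{p_0 Q^{(k)}_{g,n+1}}{x^{\lfloor\alpha_{r-k+1}\rfloor}dx^k}$ as a sum of total $z_i$-derivatives of terms carrying a factor $1/(x(z_1)-x(w_i))$, whose only $z_1$-dependence vanishes as $x(z_1)\to\infty$. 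I would then feed this into Corollary \ref{c:Pk}, which expresses $U^{(k)}_{g,n+1}$ as $Q^{(k)}_{g,n+1}$ minus products and genus/leg shifts of $U^{(k-1)}_{\bullet}$, and run a double induction on the superscript $k$ and on $2g-2+n$ to transfer the vanishing from the $Q$'s to the $U$'s; the exceptional $(0,1)$ term is handled separately via its explicit formula in Theorem \ref{t:pole}.

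The main obstacle is the reconciliation of $x$-weights. The definition of $\psi_k$ carries the weight $x^{\lfloor\alpha_{r-k}\rfloor+1}$, whereas the loop-equation output of Theorem \ref{t:pole} and the recursion of Corollary \ref{c:Pk} are naturally weighted by $x^{\lfloor\alpha_{r-k+1}\rfloor}$, and the index shifts at each inductive step. Making the induction close requires a precise pole-order count at $\beta$ showing that every $(g,n)\neq(0,0)$ contribution blows up strictly more slowly than the leading $W_{0,1}^{\otimes k}$ term — equivalently, that replacing some factors $W_{0,1}=y\,dx$ by stable correlation functions strictly lowers the order of the pole — together with the control, forced by the ``no interior point'' condition, on how $\alpha_m$ and $\beta_m$ vary with $m$ (so that $\lfloor\alpha_{r-k+1}\rfloor\le\lfloor\alpha_{r-k}\rfloor+1$ and the extra $z_1$-dependent pieces of the $(0,1)$ formula are suppressed at the correct weight). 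Once this bookkeeping is in place, all higher terms vanish and only the finite leading term survives, yielding the stated identity.
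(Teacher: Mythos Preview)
Your first two steps --- splitting $p_0\xi_k-p_k$ into $P_{k+1}(x,y)$ plus a tail over $(g,n)\neq(0,0)$ via \eqref{eq:up}, and then invoking Lemma~\ref{l:ab} with the no-interior-point identity $\lceil\beta_{r-k}\rceil=\lfloor\alpha_{r-k}\rfloor+1$ to show that $P_{k+1}/x^{\lfloor\alpha_{r-k}\rfloor+1}$ is regular along $\mathrm{div}_\infty(x)$ --- match the paper exactly and are in fact spelled out more carefully than there (the paper just says ``because of the admissibility condition'' for finiteness).

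Where your plan diverges is in eliminating the tail. The paper does not touch the loop equations at all here: it disposes of every $(g,n)\neq(0,0)$ contribution by a one-line pole count. Unpacking the definition, each term of $G^{(k)}_{g,n+1}(z_1;D)/dx(z_1)^k$ is a product over the blocks $\mu_i$ of a set partition of some $\gamma\subseteq_k\tau'(z_1)$, the $i$-th block contributing a factor $\bigl(\int_D\cdots\int_D W_{g_i,|\mu_i|+|J_i|}(\mu_i;J_i)\bigr)/dx(z_1)^{|\mu_i|}$. For $(g,n)=(0,0)$ every factor is $W_{0,1}(\tau_j(z_1))/dx(z_1)=y(\tau_j(z_1))$; for $(g,n)\neq(0,0)$ at least one block has $(g_i,|\mu_i|+|J_i|)\neq(0,1)$. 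Under the hypothesis ($\beta$ a simple pole, or a pole of $x$ at which all $W_{g,n}$ are holomorphic), that factor is a holomorphic differential divided by a positive power of $dx$, hence tends to zero as $z_1\to\beta$, while the remaining $W_{0,1}$-blocks are the same $y(\tau_j)$'s that appear in the already-finite leading term. This is exactly your parenthetical ``replacing some factors $W_{0,1}=y\,dx$ by stable correlation functions strictly lowers the order of the pole'', but used directly rather than buried inside an induction.

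So your route through Theorem~\ref{t:pole} and Corollary~\ref{c:Pk}, with an induction on $k$ and $2g-2+n$, is not wrong, but it is substantially more work than needed, and the ``main obstacle'' you identify --- the mismatch between the weight $x^{\lfloor\alpha_{r-k}\rfloor+1}$ in $\psi_k$ and the weight $x^{\lfloor\alpha_{r-k+1}\rfloor}$ produced by the loop-equation output --- is an artefact of that detour. In the paper's direct argument there is nothing to reconcile: the only weight that enters is the one already present in $\psi_k$, and the extra $1/x$ is absorbed by the holomorphic-over-$dx$ factor.
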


\begin{proof}
First, we note that
\begin{equation}
\psi_k(x(z_1);D) = \psi(D) \frac{1}{x(z_1)^{\lfloor \alpha_{r-k} \rfloor}}\left(p_0(x(z_1) \xi_k(x_1(z_1);D) - p_k(x(z_1)) \right).
\end{equation}
As $z_1 \to \beta$, since $\beta$ is a pole of $x$ and the correlation functions are holomorphic at $\beta$, it follows that
\begin{equation}
\begin{aligned}
&\hspace*{-1cm}\lim_{z_1 \to \beta} \frac{\psi_{k}(x(z_1);D)}{x(z_1)}\\[-8pt]
&\hspace*{1cm}=\psi(D) \lim_{z_1 \to \beta} \frac{1}{x(z_1)^{\lfloor \alpha_{r-k} \rfloor+1}} \Bigl((-1)^k \frac{U_{0,1}^{(k)}(z_1) p_0(x(z_1))}{dx(z_1)^k} - p_k(x(z_1)) \Bigr)\\
&\hspace*{1cm}= \psi(D) \lim_{z_1 \to \beta} \frac{P_{k+1}(x(z_1), y(z_1))}{x(z_1)^{\lfloor \alpha_{r-k} \rfloor+1}}.
\end{aligned}
\vspace*{-1.7\baselineskip}
\end{equation}
\end{proof}

\medskip
These limits are guaranteed to be finite for admissible spectral curves, by the inequality \eqref{eq:limitex}. Let us denote
\begin{equation}
C_{k} =\lim_{z_1 \to \beta} \frac{P_{k+1}(x(z_1), y(z_1))}{x(z_1)^{\lfloor \alpha_{r-k} \rfloor+1}}, \quad k=1,\ldots,r-1.
\end{equation}
Then
\begin{equation}
\lim_{z_1 \to \beta} \frac{\psi_{k}(x(z_1);D)}{x(z_1)}= C_k \psi(D) = - \frac{x^{\lfloor \alpha_0 \rfloor}}{p_r(x)} C_k \psi_r(x;D),
\end{equation}
and the system of differential equations can be rewritten as
\begin{multline}
\hbar \frac{d}{dx} \left(\psi_{k-1}(x;D)\right) = \frac{x^{\lfloor \alpha_{r-k} \rfloor}}{x^{\lfloor \alpha_{r-k+1} \rfloor}} \psi_k(x;D) - \frac{p_{k-1}(x) x^{\lfloor \alpha_{r-1} \rfloor}}{p_0(x)x^{\lfloor\alpha_{r-k+1} \rfloor}} \psi_1(x;D)\\
- \hbar \frac{x^{\lfloor \alpha_0 \rfloor}}{p_r(x)} C_{k-1} \psi_r(x;D).
\label{e:start}
\end{multline}

We can then rewrite this system of differential equations as an order $r$ differential equation for $\psi(D)$. To this end, we define the differential operators
\begin{equation}
D_i =\hbar \frac{x^{\lfloor \alpha_{i} \rfloor}}{x^{\lfloor \alpha_{i-1} \rfloor}}\frac{d}{dx}, \quad i=1,\ldots,r.
\end{equation}
Then we get the following result:
\begin{lem}\label{l:QC}
The system of differential equations is equivalent to the following order $r$ differential equation for $\psi(D)$:
\begin{multline}
\Bigl[ D_1 D_2 \cdots D_{r-1} \frac{p_{0}(x)}{x^{\lfloor \alpha_r \rfloor}} D_r + D_1 D_2 \cdots D_{r-2} \frac{p_{1}(x)}{x^{\lfloor \alpha_{r-1} \rfloor}} D_{r-1}\\
+\cdots + \frac{p_{r-1}(x)}{x^{\lfloor \alpha_{1} \rfloor}} D_1 + \frac{p_r(x)}{x^{\lfloor \alpha_0 \rfloor}}
- \hbar C_1 D_1 D_2 \cdots D_{r-2} \frac{x^{\lfloor \alpha_{r-1} \rfloor}}{x^{\lfloor \alpha_{r-2} \rfloor}}\\
- \hbar C_2 D_1 D_2 \cdots D_{r-3} \frac{x^{\lfloor \alpha_{r-2} \rfloor}}{x^{\lfloor \alpha_{r-3} \rfloor}}- \cdots - \hbar C_{r-1} \frac{x^{\lfloor \alpha_{1} \rfloor}}{x^{\lfloor \alpha_{0} \rfloor}} \Bigl] \psi(D) = 0.
\end{multline}
After normal ordering, it is easy to see that this is a quantization of the original spectral curve, according to Definition \ref{d:QC}. Moreover, it has only a finite number of $\hbar$ corrections, hence it is simple.
\end{lem}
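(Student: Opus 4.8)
The plan is to collapse the first-order system of Theorem \ref{thm:eqpsiD}, in the specialized form \eqref{e:start}, into a single scalar equation for $\psi(D)$ by a telescoping argument driven by the two boundary identities of Lemma \ref{l:xi1}: for $\alpha=1$ these read $\psi_1(x;D)=\hbar\,\frac{p_0(x)}{x^{\lfloor\alpha_{r-1}\rfloor}}\,\frac{d}{dx}\psi(D)=\frac{p_0(x)}{x^{\lfloor\alpha_r\rfloor}}D_r\,\psi(D)$ and $\psi_r(x;D)=-\frac{p_r(x)}{x^{\lfloor\alpha_0\rfloor}}\psi(D)$. First I would rewrite \eqref{e:start}: multiplying through by $x^{\lfloor\alpha_{r-k+1}\rfloor}/x^{\lfloor\alpha_{r-k}\rfloor}$, recognizing the factor $D_{r-k+1}=\hbar\,\frac{x^{\lfloor\alpha_{r-k+1}\rfloor}}{x^{\lfloor\alpha_{r-k}\rfloor}}\frac{d}{dx}$ on the left, and eliminating $\psi_r$ through Lemma \ref{l:xi1}, turns the system (for $k=2,\dots,r$) into
\[
D_{r-k+1}\,\psi_{k-1}=\psi_k-\frac{p_{k-1}\,x^{\lfloor\alpha_{r-1}\rfloor}}{p_0\,x^{\lfloor\alpha_{r-k}\rfloor}}\,\psi_1+\hbar\,C_{k-1}\,\frac{x^{\lfloor\alpha_{r-k+1}\rfloor}}{x^{\lfloor\alpha_{r-k}\rfloor}}\,\psi .
\]

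The crucial step is then to apply the composite operator $D_1D_2\cdots D_{r-k}$ to both sides (an empty product when $k=r$). Writing $T_k:=D_1\cdots D_{r-k}\,\psi_k$, so that $T_1=D_1\cdots D_{r-1}\psi_1$ and $T_r=\psi_r$, the left-hand side becomes $D_1\cdots D_{r-k+1}\psi_{k-1}=T_{k-1}$ while the first term on the right becomes $T_k$. Each rewritten equation is therefore the telescoping identity $T_{k-1}=T_k-D_1\cdots D_{r-k}\frac{p_{k-1}x^{\lfloor\alpha_{r-1}\rfloor}}{p_0x^{\lfloor\alpha_{r-k}\rfloor}}\psi_1+\hbar C_{k-1}D_1\cdots D_{r-k}\frac{x^{\lfloor\alpha_{r-k+1}\rfloor}}{x^{\lfloor\alpha_{r-k}\rfloor}}\psi$. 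Summing over $k=2,\dots,r$ collapses the left-hand contributions to $T_1-T_r$.

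It remains to substitute the boundary data and relabel. I would insert $T_1=D_1\cdots D_{r-1}\frac{p_0}{x^{\lfloor\alpha_r\rfloor}}D_r\,\psi$ (the leading term) and $-T_r=\frac{p_r}{x^{\lfloor\alpha_0\rfloor}}\psi$ (the constant term). For the $\psi_1$ source terms I would use the \emph{other} form $\psi_1=\hbar\,\frac{p_0}{x^{\lfloor\alpha_{r-1}\rfloor}}\frac{d}{dx}\psi$, which reduces the identity $\frac{p_{k-1}x^{\lfloor\alpha_{r-1}\rfloor}}{p_0x^{\lfloor\alpha_{r-k}\rfloor}}\psi_1=\frac{p_{k-1}}{x^{\lfloor\alpha_{r-k+1}\rfloor}}D_{r-k+1}\psi$ to cancelling $p_0$ and one power of $x$; with the relabelling $j=k-1$ this produces exactly the middle family $D_1\cdots D_{r-1-j}\frac{p_j}{x^{\lfloor\alpha_{r-j}\rfloor}}D_{r-j}$, and the same relabelling turns the $C$-terms into $\hbar C_j D_1\cdots D_{r-1-j}\frac{x^{\lfloor\alpha_{r-j}\rfloor}}{x^{\lfloor\alpha_{r-j-1}\rfloor}}$. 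Collecting all contributions yields precisely the stated order-$r$ operator annihilating $\psi(D)$. To confirm it is a quantization, I would pass to the classical limit $\hbar\to0$ with the WKB ansatz: since $D_i\psi\sim\frac{x^{\lfloor\alpha_i\rfloor}}{x^{\lfloor\alpha_{i-1}\rfloor}}\,y\,\psi$ at leading order, the $x$-powers inside each term $D_1\cdots D_{r-1-j}\frac{p_j}{x^{\lfloor\alpha_{r-j}\rfloor}}D_{r-j}$ telescope (using $\lfloor\alpha_{r-1-j}\rfloor=\lfloor\alpha_{r-j-1}\rfloor$) to $x^{-\lfloor\alpha_0\rfloor}p_j y^{r-j}$, while the $C$-terms carry an extra factor of $\hbar$ and drop out; summing over $j$ recovers $x^{-\lfloor\alpha_0\rfloor}P(x,y)\,\psi$, so that $P(x,y)=0$ is the classical shadow of the operator.

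The main obstacle is combinatorial rather than conceptual: one must recognize that the \emph{exact} operator to apply is $D_1\cdots D_{r-k}$, so that the left-hand sides telescope cleanly to $T_1-T_r$, and then keep the non-commutativity of the $D_i$ with multiplication under control. In particular the two rewritings of $\psi_1$ — as $\frac{p_0}{x^{\lfloor\alpha_r\rfloor}}D_r\psi$ in the top term $T_1$ and as $\hbar\,\frac{p_0}{x^{\lfloor\alpha_{r-1}\rfloor}}\frac{d}{dx}\psi$ inside the source terms — must be used consistently, and the floor-function index identities underlying the power cancellations (both in the telescoping and in the classical limit) have to be checked so that the coefficients land exactly on $p_j/x^{\lfloor\alpha_{r-j}\rfloor}$ and $x^{\lfloor\alpha_{r-j}\rfloor}/x^{\lfloor\alpha_{r-j-1}\rfloor}$ as claimed.
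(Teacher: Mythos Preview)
Your proposal is correct and follows essentially the same approach as the paper: both rewrite \eqref{e:start} using Lemma~\ref{l:xi1} into a one-step recursion linking $\psi_k$ to $\psi_{k-1}$ and $\psi$, and then eliminate the intermediate $\psi_k$'s. The paper carries this out by iterated back-substitution (starting from $k=r$ and repeatedly plugging in the expression for $\psi_{k-1}$), whereas you apply $D_1\cdots D_{r-k}$ and sum telescopically via $T_k=D_1\cdots D_{r-k}\psi_k$; these are two packagings of the same computation, and your verification of the classical limit is a nice addition that the paper only asserts.
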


\begin{proof}
First, we notice that we can rewrite \eqref{e:start} as
\begin{equation}\label{eq:stepone}
\psi_k(x;D) = D_{r-k+1} \psi_{k-1}(x;D) + \frac{p_{k-1}(x)}{x^{\lfloor\alpha_{r-k+1} \rfloor}} D_{r-k+1}\psi(D)- \hbar \frac{x^{\lfloor \alpha_{r-k+1} \rfloor}}{x^{\lfloor \alpha_{r-k} \rfloor}} C_{k-1} \psi(D),
\end{equation}
where we used the fact that
\begin{equation}
\psi_r(x;D) = - \frac{p_r(x)}{x^{\lfloor \alpha_0 \rfloor}} \psi(D) \quad \text{and} \quad
\psi_1(x;D) = \frac{p_0(x)}{x^{\lfloor \alpha_{r-1} \rfloor}} \hbar \frac{d}{d x} \psi(D).
\end{equation}
We start with \eqref{e:start} for $k= r$. We get:
\begin{equation}\label{eq:steptwo}
\psi_r(x;D) = D_{1} \psi_{r-1}(x;D) + \frac{p_{r-1}(x)}{x^{\lfloor\alpha_{1} \rfloor}}D_1 \psi(D)- \hbar \frac{x^{\lfloor \alpha_{1} \rfloor}}{x^{\lfloor \alpha_{0} \rfloor}} C_{r-1} \psi(D).
\end{equation}
At $k=r-1$, we get
\begin{equation}
\psi_{r-1}(x;D) = D_{2} \psi_{r-2}(x;D) + \frac{p_{r-2}(x)}{x^{\lfloor\alpha_{2} \rfloor}} D_2 \psi(D)- \hbar \frac{x^{\lfloor \alpha_{2} \rfloor}}{x^{\lfloor \alpha_{1} \rfloor}} C_{r-2} \psi(D).
\end{equation}
Substituting back in \eqref{eq:steptwo}, we get
\begin{multline}
\psi_r(x;D) = D_1 D_{2} \psi_{r-2}(x;D) + D_1 \frac{p_{r-2}(x)}{x^{\lfloor\alpha_{2} \rfloor}} D_2 \psi(D)+ \frac{p_{r-1}(x)}{x^{\lfloor\alpha_{1} \rfloor}} D_1 \psi(D)\\
- \hbar D_1 \frac{x^{\lfloor \alpha_{2} \rfloor}}{x^{\lfloor \alpha_{1} \rfloor}} C_{r-2} \psi(D) - \hbar \frac{x^{\lfloor \alpha_{1} \rfloor}}{x^{\lfloor \alpha_{0} \rfloor}} C_{r-1} \psi(D).
\end{multline}
We keep going like this until we can rewrite all terms in terms of $\psi(D)$, and we end up with the statement of the Lemma.
\end{proof}

In the special case where $\lfloor \alpha_{i} \rfloor = 0$, $i=0, \ldots, r$, the differential equation simplifies. In this case, $D_i = \hbar \frac{d}{dx}$ for all $i=1,\ldots,r$, hence the differential equation becomes
\begin{multline}
\Bigl[ \hbar^r \frac{d^{r-1}}{dx^{r-1}} p_0(x) \frac{d}{dx} + \hbar^{r-1} \frac{d^{r-2}}{dx^{r-2}}p_1(x) \frac{d}{dx} +\cdots +\hbar p_{r-1}(x) \frac{d}{dx} + p_r(x)\\
- \hbar^{r-1} C_1 \frac{d^{r-2}}{dx^{r-2}} - \hbar^{r-2} C_2 \frac{d^{r-3}}{dx^{r-3}}- \cdots - \hbar C_{r-1} \Bigr] \psi(D) = 0.
\end{multline}
It is even simpler when $C_k = 0$, $k=1,\ldots,r-1$, in which case it becomes
\begin{multline}
\Bigl[ \hbar^r \frac{d^{r-1}}{dx^{r-1}} p_0(x) \frac{d}{dx} + \hbar^{r-1} \frac{d^{r-2}}{dx^{r-2}}p_1(x) \frac{d}{dx}\\[-5pt]
+\cdots +\hbar p_{r-1}(x) \frac{d}{dx} + p_r(x) \Bigr] \psi(D) = 0,
\end{multline}
which is the quantization of the spectral curve obtained through
\begin{equation}
(x,y) \mto \left(\hat{x}, \hat{y} \right) = \bigl(x, \hbar \frac{d}{dx} \bigr),
\end{equation}
and for the following choice of ordering:
\begin{equation}
\left(\hat{y}^{r-1} p_0(\hat{x}) \hat{y} + \hat y^{r-2} p_1(\hat x) \hat{y} +\cdots + p_{r-1}(\hat x) \hat{y} + p_r(\hat x)\right) \psi=0.
\end{equation}

\begin{rem}\label{r:interpol}
We note here that we can generalize this choice of integration divisor slightly. If $x$ has more than one poles, say $\beta_i$, $i=,\ldots,n$, then we can take the integration divisor to be
\begin{equation}
D = [z] - \sum_{i=1}^n \mu_i [ \beta_i],
\end{equation}
for some constants $\mu_1, \ldots, \mu_n$ satisfying $\sum_{i=1}^n \mu_i = 1$. We can go through the same steps as above to get the quantum curve. If we define the constants
\begin{equation}
C_k^{(i)} = \lim_{z_1 \to \beta_i} \frac{P_{k+1}(x(z_1), y(z_1))}{x(z_1)^{\lfloor \alpha_{r-k} \rfloor+1}}, \quad k=1,\ldots,r-1, \quad i = 1, \ldots, n,
\end{equation}
then we get the differential equation in Lemma \ref{l:QC}, but with $C_k$ replaced by $\sum_{i=1}^n \mu_i C_k^{(i)}$. In a sense, this generalized choice of integration divisor interpolates between the different quantizations corresponding to the poles of $x$.
\end{rem}

\subsubsection{Other choices of integration divisors}

For some spectral curves, it may also be interesting to choose an integration divisor of the form
$$
D = [z] - [\gamma],
$$
where $\gamma$ is not a pole of $x$.

For instance, we could choose $\gamma$ to be zero of $p_0(x)$ (if it is not in $R$, so that the integrals converge). In this case, the system becomes
\begin{multline}
\hbar \frac{d}{dx} \left(\psi_{k-1}(x;D)\right) = \frac{x^{\lfloor \alpha_{r-k} \rfloor}}{x^{\lfloor \alpha_{r-k+1} \rfloor}} \psi_k(x;D) - \frac{p_{k-1}(x) x^{\lfloor \alpha_{r-1} \rfloor}}{p_0(x)x^{\lfloor\alpha_{r-k+1} \rfloor}} \psi_1(x;D)\\
+ \hbar \frac{1}{x-x(\gamma)} \bigl(\psi_{k-1}(x;D) - \psi_{k-1}(x(\gamma);D) \bigr).
\end{multline}
To proceed further, we need to evaluate the objects $ \psi_k(x(\gamma);D)$. Recall that
\begin{equation}
\psi_k(x(\gamma); D) = \psi(D) \lim_{z \to \gamma} \Bigl(\frac{1}{x(z)^{\lfloor \alpha_{r-k} \rfloor}} \left(p_0(x(z)) \xi_k(x(z);D) - p_k(x(z)) \right) \Bigr).
\end{equation}
Evaluating these limits may be quite difficult in general. But if $r=2$, then we only need the case $k=1$, which we can evaluate explicitly since
\begin{equation}
\begin{aligned}
\psi_1(x(\gamma); D) &= \psi(D) \lim_{z \to \gamma} \Bigl(\frac{1}{x(z)^{\lfloor \alpha_{1} \rfloor}} \left(p_0(x(z)) \xi_1(x(z);D) - p_1(x(z)) \right) \Bigr)\\
&= \psi(D) \lim_{z \to \gamma} \Bigl(\frac{1}{x(z)^{\lfloor \alpha_{1} \rfloor}}p_0(x(z)) y(z) \Bigr).
\end{aligned}
\end{equation}

We can then evaluate this limit and obtain a quantum curve in this particular case.

\section{Some examples}
\label{s:examples}

In this section we consider many examples of quantum curves. We note that all of these examples have also been verified on Mathematica for small order in $\hbar$.

\subsection{$y^a - x = 0$}

Let us start by considering the spectral curve $y^a - x = 0$ for $ a \geq 2$. This spectral curve arises in the study of intersection numbers of the moduli space of curves with $a$-spin structure. We will come back to this example in the next section to explore its enumerative meaning. Here we only want to apply our Theorem to find the quantum curve.

A parameterization of the curve is $(x,y) = (z^a, z)$. $x$ has a pole at $z=\infty$, which is in $R$. But it is easy to show that the correlation functions do not have poles at $z=\infty$. Therefore we can choose the integration divisor to be $D = [z]-[\infty]$.

The Newton polygon for this curve is a line segment joining $(1,0)$ and $(0,a)$. Then $\lfloor \alpha_i \rfloor = 0$ for $i=1, \ldots, a$, and $\lfloor \alpha_0 \rfloor = 1$.

In this case, we need to evaluate the constants $C_{k}$, $k=1,\ldots,a-1$. We get:
\begin{equation}
\begin{aligned}
C_{k} &= \lim_{z_1 \to \infty} \frac{P_{k+1}(x(z_1), y(z_1))}{x(z_1)}\\
&= \lim_{z_1 \to \infty} \frac{y(z_1)^k}{x(z_1)}= \lim_{z_1 \to \infty} \frac{z_1^k}{z_1^a}= 0.
\end{aligned}
\end{equation}
Therefore, the quantum curve is
\begin{equation}
\Bigl(\hbar^a \frac{d^a}{d x^a} - x \Bigr) \psi = 0.
\end{equation}
This is the straightforward quantization of the spectral curve $y^a - x = 0$ through
\begin{equation}
(x,y) \mto \left(\hat{x}, \hat{y} \right) = \bigl(x, \hbar \frac{d}{dx} \bigr).
\end{equation}

To help the reader's understanding of Section \ref{s:pole}, let us write down explicitly some of the objects under study for $a=3$ and small $g,n$. We use the parameterization $(x,y) = (z^3, z)$. Then $\tau(z) = \{z, \rho z, \rho^2 z \},$ where $\rho = e^{2\pi i/3}$.
The level $2g-2+n = 1$ meromorphic differentials constructed from the topological recursion are:
\begin{align*}
W_{0,3}(z_1,z_2,z_3) &= d z_1 d z_2 d z_3 \Bigl(\frac{2}{3 z_1^3 z_3^2 z_2^2}+\frac{2}{3 z_1^2 z_3^3
z_2^2}+\frac{2}{3 z_1^2 z_3^2 z_2^3} \Bigr),\\
W_{1,1}(z_1)&= d z_1 \Bigl(\frac{1}{9 z_1^5} \Bigr).
\end{align*}

The most important result of section 4 is Theorem \ref{t:pole}. Let us check that it is satisfied for $(g,n)=(1,0)$ and $(g,n)=(0,2)$. First, for $(g,n)=(1,0)$, we calculate that
\begin{equation}
Q_{1,1}^{(1)}(z) = Q_{1,1}^{(2)}(z)=Q_{1,1}^{(3)}(z)=0.
\end{equation}
This is exactly what is required by Theorem \ref{t:pole}, since $n=0$ and the right-hand-side of \eqref{eq:tpole} vanishes. Note that the statement that $Q_{1,1}^{(1)}(z) = 0$ is also expected from Lemma \ref{l:P1}.

Now for $(g,n)=(0,2)$. First, we get that
\begin{equation}
Q_{0,3}^{(1)} (z; z_2,z_3) = 0,
\end{equation}
as expected from both Lemma \ref{l:P1} and Theorem \ref{t:pole}. Then, we get
\begin{multline}
\frac{Q_{0,3}^{(2)} (z; z_2,z_3)}{dx(z)^2} = \frac{dz_2 dz_3}{3 z_2^3 z_3^3 (x(z) - z_2^3)^2 (x(z)-z_3^3)^2}\\
{}\cdot\Bigl(2 x(z)^3 (z_2+z_3)- 4 x(z)^2 (z_2^4 + z_2^3 z_3 + z_2 z_3^3 + z_3^4)\\
\hspace*{3cm}+ 2 x(z) (z_2^7 + z_2^6 z_3 + 3 z_2^4 z_3^3 +3 z_2^3 z_3^4 + z_2 z_3^6 + z_3^7)\\
- z_2^3 z_3^3 (5 z_2^4 + 8 z_2^3 z_3 - 18 z_2^2 z_3^2 + 8 z_2 z_3^3 + 5 z_3^4) \Bigl).
\end{multline}
First, we see that it is indeed a function of $x(z)$, as expected. Then, it is easy to check that it satisfies Theorem \ref{t:pole}, using the fact that
\begin{equation}
U_{0,2}^{(1)}(z;z_2) = - dz dz_2\, \frac{z^2 + 4 z z_2 + z_2^2}{(z^2 + z z_2 + z_2^2)^2}.
\end{equation}

We can also check that Theorem \ref{t:pole} holds for $k=3$. We get:
\begin{multline}
\frac{Q_{0,3}^{(3)} (z; z_2,z_3)}{dx(z)^3} = -\frac{dz_2 dz_3}{3 z_2^2 z_3^2 (x(z) - z_2^3)^2 (x(z)-z_3^3)^2}\\
{}\cdot\Bigl(2 x(z)^3 - 4 x(z)^2 (z_2^3 + z_3^3)+ x(z) (z_2^2+z_3^2)(2 z_2^4 + z_2^2 z_3^2 + 2 z_3^4)\\
- 2 z_2^3 z_3^3 (z_2+z_3)(4 z_2^2 - 7 z_2 z_3 + 4 z_3^2)\Bigr),
\end{multline}
which is again a function of $x(z)$. It satisfies Theorem \ref{t:pole}, since
\begin{equation}
U_{0,2}^{(2)}(z; z_2) = - dz dz_2\, \frac{3 z^3 (z^2 - 2 z z_2 - 2 z_2^2)}{(z^2 + z z_2 + z_2^2)^2}.
\end{equation}

With these expressions we can also calculate the first few orders of the wave-function. With the integration divisor $D = [z]-[\infty]$, we get
\begin{equation}
\psi (z) = \exp\Bigl(\frac{1}{\hbar} S_0(z) + S_1(z) + \hbar S_2(z) + \mathcal{O}(\hbar)^2 \Bigr),
\end{equation}
with
\begin{align}
S_0(z) &= \int_a^z W_{0,1}(z) = \frac{3}{4} z^4 - \frac{3}{4} a^4,\\
S_1(z) &= \frac{1}{2} \int_a^z \int_a^z \Bigl(W_{0,2}(z_1,z_2) - \frac{dx(z_1) dx(z_2)}{(x(z_1)- x(z_2))^2} \Bigr)\\
&= \log(a^2 + a z + z^2) - \frac{1}{2} (\log (3 a^2) + \log(3 z^2)),\notag\\
S_2(z) &= \int_\infty^z W_{1,1}(z) + \frac{1}{3!} \int_\infty^z \int_\infty^z \int_\infty^z W_{0,3}(z_1, z_2, z_3)= \frac{7}{36} z^4.
\end{align}
Here we regularized the first two integrals, but we will send $a \to \infty$ shortly. Then it is easy to check that
\begin{equation*}
\psi(z)^{-1} \Bigl(\hbar^3 \frac{d^3}{dx(z)^3} - x(z) \Bigr) \psi(z) = \hbar\, \frac{a+2z}{a^2 + a z + z^2} - \hbar^2\, \frac{3 a + 4 z}{3 z^4(a^2+a z+z^2)} + \mathcal{O}(\hbar)^3,
\end{equation*}
which indeed goes to zero as $a \to \infty$, as expected.

\subsection{$y^a - x y + 1 = 0$}
Let us now consider the spectral curve $y^a - x y + 1 = 0$, for $a \geq 2$. Its Newton polygon is shown in Figure \ref{f:newex1} for $a=4$. From the Newton polygon we see that $\lfloor \alpha_i \rfloor = 0$ for $i=0, \ldots,a$.

\begin{figure}[htb]
\begin{center}
\includegraphics[width=0.08\textwidth]{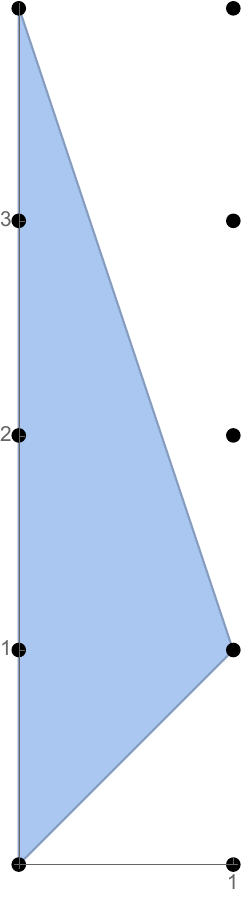}
\caption{The Newton polygon of the curve $y^4 - x y + 1 = 0$.}
\label{f:newex1}
\end{center}
\end{figure}

The case $a=2$ arises in the study of enumeration of ribbon graphs \cite{MS,NS}, while the $a > 2$ case arises in the enumeration of $a$-hypermaps \cite{DM, DBOSS}.

\subsubsection{$a=2$}
Let us first consider the case $a=2$. A parameterization is $(x,y) = \left(z + \sfrac{1}{z}, z \right)$. Then $R$ consists of the two points $z = \pm 1$.

In this case $x$ has two simple poles, at $z=0$ and $z=\infty$, that we can use for integration divisor.

\subsubsection*{Pole at $z=\infty$}
We choose the integration divisor $D = [z] - [\infty]$. We calculate:
\begin{equation}
\begin{aligned}
C_1 &= \lim_{z_1 \to \infty} \frac{p_0(z_1) y(z_1)}{x_1(z_1)}\\
&= \lim_{z_1 \to \infty} \frac{z_1^2}{z_1^2+1}= 1.
\end{aligned}
\end{equation}
Then the quantum curve is
\begin{equation}
\Bigl(\hbar^2 \frac{d^2}{dx^2} - \hbar x \frac{d}{dx} + 1 -\hbar \Bigr) \psi = 0,
\end{equation}
which is equivalent to
\begin{equation}
\Bigl(\hbar^2 \frac{d^2}{dx^2} - \hbar \frac{d}{dx} x + 1 \Bigr) \psi = 0,
\end{equation}
This is the quantization
\begin{equation}
(x,y) \mto \left(\hat{x}, \hat{y} \right) = \bigl(x, \hbar \frac{d}{dx} \bigr),
\end{equation}
with choice of ordering
\begin{equation}
\left(\hat{y}^2 - \hat{y} \hat{x} + 1 \right) \psi = 0.
\end{equation}

\subsubsection*{Pole at $z=0$}
Now suppose that we consider instead the integration divisor $D = [z] - [0]$. In this case,
\begin{equation}
C_1 = \lim_{z_1 \to 0} \frac{z_1^2}{z_1^2 + 1}= 0.
\end{equation}
The quantum curve is then
\begin{equation}
\Bigl(\hbar^2 \frac{d^2}{dx^2} - \hbar x \frac{d}{dx} + 1 \Bigr) \psi = 0,
\end{equation}
which corresponds to the other choice of ordering
\begin{equation}
\left(\hat{y}^2 - \hat{x}\hat{y} + 1 \right) \psi = 0.
\end{equation}
This is the quantum curve that was obtained in \cite{MS}.

\begin{rem}
It is interesting to note that by looking at the two integration divisors of the form $D = [z] - [\beta]$ where $\beta$ is a simple pole of $x$, we recover the two possible choices of ordering for the quantum curve.
\end{rem}

\begin{rem}
As an example of Remark \ref{r:interpol}, we note here that if we choose the integration divisor $D = [z] - \mu [\infty] - (1-\mu) [ 0]$, then we obtain the quantum curve
\begin{equation}
\Bigl(\hbar^2 \frac{d^2}{dx^2} - \hbar x \frac{d}{dx} + 1 - \mu\hbar \Bigr) \psi = 0.
\end{equation}
Different choices of constant $\mu$ interpolate between the two quantizations corresponding to the poles at $0$ and $\infty$.
\end{rem}

\subsubsection{$a>2$}

We can do similar calculations for $a>2$. In this case, a parameterization is $(x,y) = \left(\psfrac{z^a+1}{z}, z \right)$, and $R$ consists of the $a$ solutions of $z^a = \spfrac{1}{a-1}$, plus the point at infinity.

$x$ has a simple pole at $z=0$ and a pole of order $a-1$ at $z=\infty$. The latter is in~$R$, but it is easy to see that the correlation functions have no poles at $z=\infty$. Thus the integrals converge.

\subsubsection*{Pole at $z=\infty$}

We first consider the integration divisor $D =[z]-[\infty]$.

We need to evaluate the limits
\begin{equation}
C_{k} = \lim_{z_1 \to \infty} \sum_{i=1}^k\frac{p_{k-i}(z_1) y(z_1)^i}{x_1(z_1)}, \quad k=1, \ldots, a-1.
\end{equation}
Since $p_0(z_1) = 1$, $p_{a-1}(z_1) = - x(z_1)$, $p_a(z_1) = 1$, with all other $p_i$'s equal to zero, we get that for $k<a-1$,
\begin{equation}
C_{k}= \lim_{z_1 \to \infty} \frac{y(z_1)^k}{x_1(z_1)}= \lim_{z_1 \to \infty} \frac{z_1^{-a+k+1}}{z_1^{-a}+1}= 0,
\end{equation}
since $ k < a-1$. As for the $k=a-1$ case,
\begin{equation}
C_{a-1}= \lim_{z_1 \to \infty}\frac{y(z_1)^{a-1}}{x_1(z_1)}= \lim_{z_1 \to \infty} \Bigl(\frac{1}{z_1^{-a}+1} \Bigr)= 1.
\end{equation}
Therefore, the quantum curve is
\begin{equation}
\Bigl(\hbar^a \frac{d^a}{dx^a} - x \hbar \frac{d}{dx} + 1 - \hbar \Bigr) \psi = 0,
\end{equation}
which is equivalent to
\begin{equation}
\Bigl(\hbar^a \frac{d^a}{dx^a} - \hbar \frac{d}{dx} x + 1 \Bigr) \psi = 0.
\end{equation}
This is the quantization
\begin{equation}
(x,y) \mto \left(\hat{x}, \hat{y} \right) = \bigl(x, \hbar \frac{d}{dx} \bigr),
\end{equation}
with the choice of ordering
\begin{equation}
\left(\hat{y}^a - \hat{y}\hat{x} + 1 \right) \psi = 0.
\end{equation}

\begin{rem}
We note that this includes, as a special case when $a=3$, the quantum curve obtained by Safnuk in \cite{Safnuk} -- see also \cite{Al} -- in the context of intersection numbers on the moduli space of open Riemann surfaces. In this paper, he obtained the quantum curve
\begin{equation}
\Bigl(\hbar^3 \frac{d^3}{dx^3} - 2 x \hbar \frac{d}{dx} + 2 \hbar(Q - 1) \Bigr) \psi = 0,
\end{equation}
which can be rewritten as
\begin{equation}
\Bigl(\hbar^3 \frac{d^3}{dx^3} - 2 \hbar \frac{d}{dx} x + 2 \hbar Q \Bigr) \psi = 0.
\end{equation}
If we let $t = \hbar Q$ be a 't Hooft parameter, that is, it is kept finite as $\hbar$ goes to zero, then the above curve is a quantization of the classical spectral curve
\begin{equation}
y^3 - 2 x y+ 2 t = 0.
\end{equation}
What we have shown is that the standard topological recursion applied to this spectral curve reconstructs the WKB expansion of the above quantum curve, for the choice of integration divisor given by $D = [z] - [\infty]$.

In \cite{Safnuk}, the point of view taken is however different; $t$ is not introduced, and the spectral curve is then taken to be the reducible curve $y^3 - 2 x y = 0$. But then the topological recursion needs to be modified to reconstruct the WKB expansion of the differential operator. What we have shown is that this should be equivalent to using the standard topological recursion, but for the spectral curve with 't Hooft parameter $t= \hbar Q$ kept finite.
\end{rem}

\subsubsection*{Pole at $z=0$}

Let us now consider the integration divisor $D = [z]-[0]$.

We need to evaluate the limits
\begin{equation}
C_{k} = \lim_{z_1 \to 0} \sum_{i=1}^k \frac{p_{k-i}(z_1) y(z_1)^i}{x_1(z_1)}, \quad k=1, \ldots, a-1.
\end{equation}
Since $p_0(z_1) = 1$, $p_{a-1}(z_1) = - x(z_1)$, $p_a(z_1) = 1$, with all other $p_i$'s equal to zero, we get that for $k <a-1$,
\begin{equation}
C_{k}= \lim_{z_1 \to 0} \frac{y(z_1)^k}{x_1(z_1)}=\lim_{z_1 \to \infty} \frac{z_1^{k+1}}{z_1^{a}+1}= 0,
\end{equation}
since $ k < a-1$. As for the $k=a-1$ case,
\begin{equation}
C_{a-1}= \lim_{z_1 \to 0}\frac{y(z_1)^{a-1}}{x_1(z_1)}=\lim_{z_1 \to 0} \Bigl(\frac{z_1^a}{z_1^{a}+1} \Bigl)= 0.
\end{equation}
Therefore, the quantum curve is
\begin{equation}
\Bigl(\hbar^a \frac{d^a}{dx^a} - x \hbar \frac{d}{dx} + 1 \Bigr) \psi = 0,
\end{equation}
which corresponds to the other choice of ordering
\begin{equation}
\left(\hat{y}^a - \hat{x}\hat{y} + 1 \right) \psi = 0.
\end{equation}

Again, choosing the more general integration divisor $D = [z] - \mu [\infty]-(1-\mu)[0]$ gives the family of quantum curves
\begin{equation}
\Bigl(\hbar^a \frac{d^a}{dx^a} - x \hbar \frac{d}{dx} + 1 - \mu \hbar \Bigr) \psi = 0,
\end{equation}
which interpolates between the two quantizations.

\subsection{$x y^a + y + 1 = 0$}
The case $a=2$ arises in the study of monotone Hurwitz numbers \cite{DDM}. The Newton polygon for the case $a=4$ is shown in Figure \ref{f:newex2}. We sill study the general case $a \geq 2$, although we are not aware of an enumerative interpretation for $a>2$.

\begin{figure}[htb]
\begin{center}
\includegraphics[width=0.1\textwidth]{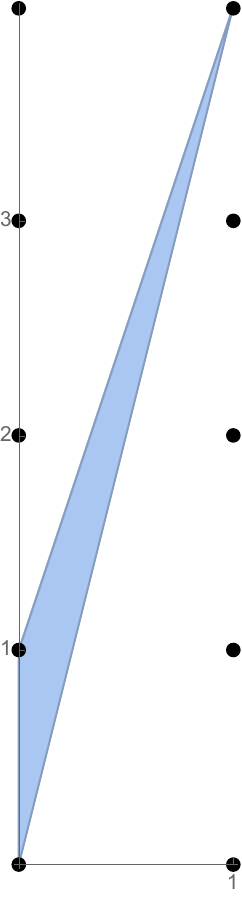}
\caption{The Newton polygon of the curve $x y^4 + y + 1 = 0$.}
\label{f:newex2}
\end{center}
\end{figure}

From the Newton polygon we see that $\lfloor \alpha_i \rfloor = 0$ for $i=0, \ldots, a-1$, while $\lfloor \alpha_a \rfloor = 1$.

\subsubsection{$a=2$}

Let us start with $a=2$. In this case, a parameterization is $(x,y) = \left(- \psfrac{z+1}{z^2}, z \right)$. $R$ consists of the zero of $dx$ at $z=-2$ and the double pole of $x$ at $z=0$.

We can use the double pole at $z=0$ for integration divisor. We can also use the zeros of $p_0(x) = x$, which are at $z=-1$ and $z=\infty$, since those are not in $R$.

\subsubsection*{Pole at $z=0$}
The only pole of $x$ is the double pole at $z=0$, which is in $R$. However, it is easy to see that the correlation functions have no poles at $z=0$, so we can choose the integration divisor $D = [z] - [0]$.\footnote{In fact this is not quite true here. All integrals converge, except as usual the integral of $W_{0,1}$, and, more importantly, the integral $\int_D \int_D \left(W_{0,2}(z_1,z_2) - \frac{dx_1 dx_2}{(x_1 - x_2)^2} \right)$. The latter diverges, hence it needs to be regularized by letting the divisor be $D = [z]-[a]$ for that particular integral. However, its derivatives are finite at $a=0$, hence we can take the limit $a \to 0$ for the WKB expansion.}

Then\vspace*{-3pt}
\begin{equation}
C_1= \lim_{z_1 \to 0} \frac{p_0(z_1) y(z_1)}{x_1(z_1)}= \lim_{z_1 \to 0}z_1= 0.
\end{equation}
The quantum curve is\vspace*{-3pt}
\begin{equation}
\Bigl(\hbar^2 \frac{d}{dx} x \frac{d}{dx} + \hbar \frac{d}{dx} + 1 \Bigl) \psi = 0.
\end{equation}
This is the quantization\vspace*{-3pt}
\begin{equation}
(x,y) \mto \left(\hat{x}, \hat{y} \right) = \bigl(x, \hbar \frac{d}{dx} \bigr),
\end{equation}
with choice of ordering\vspace*{-3pt}
\begin{equation}
\left(\hat{y} \hat{x} \hat{y} + \hat{y} + 1 \right) \psi = 0.
\end{equation}

\subsubsection*{Zero at $z=-1$}
We choose the integration divisor $D = [z]-[-1]$, since $z=-1$ is a zero of $p_0(x)=x$.

This is in fact the choice that is made in \cite{DDM}. In this case, we need to evaluate the limit\vspace*{-3pt}
\begin{equation}
\lim_{z_1 \to -1} \psi_1(x(z_1); D)
=\psi(D) \lim_{z_1 \to -1} p_0(x_1(z_1)) y(z_1)=0.
\end{equation}
Then the system of differential equations becomes
\begin{equation}
\hbar \frac{d}{dx} \psi_1(x;D) = \psi_2(x;D) - \frac{1}{x} \psi_1(x;D) + \hbar \frac{1}{x} \psi_1(x;D).
\end{equation}
Since $\psi_1(x;D) = \hbar x \frac{d}{dx} \psi(D)$ and $\psi_2(x;D) = - \psi(D)$, we get
\begin{equation}
\Bigl[ \hbar^2 \frac{d}{dx} x \frac{d}{dx} + \hbar \frac{d}{dx} - \hbar^2 \frac{d}{dx} + 1 \Bigr] \psi(D)= 0.
\end{equation}
This is equivalent to\vspace*{-3pt}
\begin{equation}
\Bigl[ \hbar^2 x \frac{d^2}{dx^2} + \hbar \frac{d}{dx} + 1 \Bigr] \psi(D)= 0,
\end{equation}
which corresponds to the choice of ordering
\begin{equation}
\left(\hat{x} \hat{y}^2 + \hat{y} + 1 \right) \psi = 0.
\end{equation}
This is the quantum curve that was obtained in \cite{DDM}.

\subsubsection*{Zero at $z = \infty$}
$p_0(x)=x$ has another simple zero at $\infty$. We can choose the divisor $D = [z] - [\infty]$. In this case, we need to evaluate the limit
\begin{equation}
\begin{aligned}
\lim_{z_1 \to \infty} \psi_1(x_1(z_1); D)
&=\psi(D) \lim_{z_1 \to \infty} \left(p_0(x_1(z_1)) y(z_1)\right)\\
&=\psi(D) \lim_{z_1 \to \infty} \Bigl(- \frac{z_1+1}{z_1} \Bigr)= - \psi(D).
\end{aligned}
\end{equation}

Then the system of differential equations becomes
\begin{equation}
\hbar \frac{d}{dx} \psi_1(x;D) = \psi_2(x;D) - \frac{1}{x} \psi_1(x;D) + \hbar \frac{1}{x} \left(\psi_1(x;D) +\psi(D) \right).
\end{equation}
Since $\psi_1(x;D) = \hbar x \frac{d}{dx} \psi(D)$ and $\psi_2(x;D) = - \psi(D)$, we get
\begin{equation}
\Bigl[ \hbar^2 \frac{d}{dx} x \frac{d}{dx} + \hbar \frac{d}{dx} - \hbar^2 \frac{d}{dx} + 1 - \frac{\hbar}{x} \Bigr] \psi(D)= 0.
\end{equation}
This is equivalent to
\begin{equation}
\Bigl[ \hbar^2 x \frac{d^2}{dx^2} + \hbar \frac{d}{dx} + 1 - \frac{\hbar}{x} \Bigr] \psi(D)= 0,
\end{equation}
This is still a quantization of the spectral curve, although it is not as straightforward as the previous ones. However, if we define a new $\tilde{\psi} = \frac{1}{x} \psi$, then the quantum curve for $\tilde{\psi}$ becomes
\begin{equation}
\Bigl(\hbar^2 \frac{d^2}{dx^2} x + \hbar \frac{d}{dx} + 1 \Bigr) \tilde{\psi} = 0,
\end{equation}
which corresponds to the remaining choice of ordering
\begin{equation}
\left(\hat{y}^2 \hat{x} + \hat{y} + 1 \right) \psi = 0.
\end{equation}

\subsubsection{$a>2$}

A parameterization is $(x,y) = \left(-\psfrac{z+1}{z^a}, z \right)$. For $a>2$, the ramification points are at the simple zeros of $dx$ at $z=\spfrac{a}{1-a}$ and $\infty$, and the pole of $x$ at $z=0$.

It is easy to see that the correlation functions do not have poles at $z=0$. Therefore, the only interesting choice of integration divisor here is the pole at $z=0$. The zeros of $p_0(x) = x$ are not so interesting here because the curve has degree $a>2$, hence it is not so straightforward to calculate explicitly a quantum curve for this choice of integration divisor.

\subsubsection*{Pole at $z=0$}

We choose the integration divisor $D=[z]-[0]$. We need to evaluate the limits
\begin{equation}
C_{k} = \lim_{z_1 \to 0}\sum_{i=1}^k \frac{p_{k-i}(z_1) y(z_1)^i}{x_1(z_1)}, \quad k=1, \ldots, a-1.
\end{equation}
Here, $p_0(z_1) = x(z_1)$, $p_{a-1}(z_1) = 1$, $p_a(z_1) = 1$, and all other $p_k$'s are zero. Thus, for $k=1,\ldots, a-1$, we get
\begin{equation}
C_{k}= \lim_{z_1 \to 0} y(z_1)^k= \lim_{z_1 \to 0} z_1^k= 0.
\end{equation}
Therefore, the quantum curve is
\begin{equation}
\Bigl(\hbar^a \frac{d^{a-1}}{dx^{a-1}} x \frac{d^a}{dx^a} + \hbar \frac{d}{dx} + 1 \Bigr) \psi = 0.
\end{equation}
This is the quantization
\begin{equation}
(x,y) \mto \left(\hat{x}, \hat{y} \right) = \bigl(x, \hbar \frac{d}{dx} \bigr),
\end{equation}
with the choice of ordering
\begin{equation}
\left(\hat{y}^{a-1} \hat{x} \hat{y} + \hat{y} + 1 \right) \psi = 0.
\end{equation}

\subsection{$x y^a - 1 =0$}

The case $a=2$ has been called the \emph{Bessel curve}, and was studied in \cite {DN}. The general case has not been studied yet from an enumerative geometric perspective.

The Newton polygon is a line segment joining $(0,0)$ and $(1,a)$. Thus $\lfloor \alpha_i \rfloor = 0 $ for $i=0, \ldots, a-1$, while $\lfloor \alpha_a \rfloor = 1$.

A parameterization is $(x,y) = \left(z^a, \sfrac{1}{z} \right)$. Then $R = \{0, \infty\}$. $x$ has only one pole at~$\infty$. The correlation functions do not have poles at $\infty$ however so we can use it as an integration divisor.

$p_0(x) = x$ has a zero at $z=0$, but it is in $R$, so we cannot use it for the integration divisor since the integrals will not converge.

\subsubsection*{Pole at $z = \infty$}

We choose the integration divisor $D = [z] - [\infty]$. Then we need to evaluate the limits
\begin{equation}
C_{k} = \lim_{z_1 \to \infty} \sum_{i=1}^k\frac{p_{k-i}(z_1) y(z_1)^i}{x_1(z_1)}, \quad k=1, \ldots, a-1.
\end{equation}
Here $p_0(z) = x(z)$, $p_i(z) = 0$ for $i=1,\ldots, a-1$ and $p_a(z) = -1$. Thus
\begin{equation}
C_{k}= \lim_{z_1 \to \infty} \frac{p_0(z_1) y(z_1)^k}{x_1(z_1)}= \lim_{z_1 \to \infty} y(z_1)^k= 0.
\end{equation}
Therefore, the quantum curve is
\begin{equation}
\Bigl(\hbar^a \frac{d^{a-1}}{dx^{a-1}} x \frac{d^a}{dx^a} - 1 \Bigr) \psi = 0,
\end{equation}
which corresponds to the choice of ordering
\begin{equation}
\left(\hat{y}^{a-1} \hat{x} \hat{y}- 1 \right) \psi = 0.
\end{equation}

\subsection{$x y^2 - x y + 1 = 0$}
This spectral curve arises in the enumeration of dessins d'enfants \cite{DN}. It is related to the previous case with $a=2$. Its Newton polygon is shown in Figure \ref{f:newex3}. From the Newton polygon, we see that $\lfloor \alpha_i \rfloor = 0$ for $i=0,1$ while $\lfloor \alpha_2 \rfloor = 1$.

\begin{figure}[htb]
\begin{center}
\includegraphics[width=0.1\textwidth]{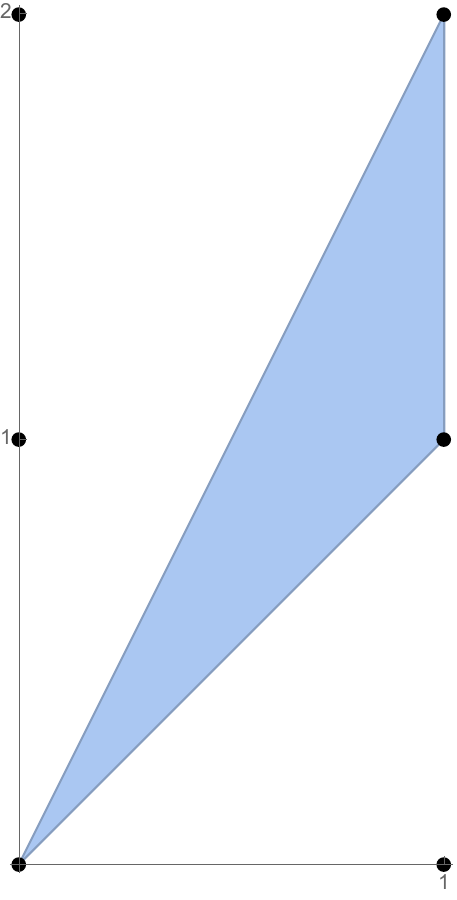}
\caption{The Newton polygon of the curve $x y^2 - x y + 1 = 0$.}
\label{f:newex3}
\end{center}
\end{figure}

A parameterization is $(x,y) = \left(\spfrac{z^2}{z-1}, \sfrac{1}{z} \right)$. Then $R = \{0,2\}$. $x$ has two simple poles at $z=1$ and $z=\infty$. $p_0(x) = x$ has a zero at $z=0$, but it is in $R$, thus we cannot use it for the integration divisor since the integrals will not converge.

\subsubsection*{Simple pole at $z=1$}
We choose the integration divisor $D = [z] - [1]$.

Then
\begin{equation}
C_1= \lim_{z_1 \to 1} \frac{p_0(z_1) y(z_1)}{x_1(z_1)}= \lim_{z_1 \to 1} \Bigl(\frac{1}{z_1} \Bigr)= 1.
\end{equation}
The quantum curve is
\begin{equation}
\Bigl(\hbar^2 \frac{d}{dx} x \frac{d}{dx} - \hbar x \frac{d}{dx} + 1 - \hbar \Bigr) \psi = 0,
\end{equation}
which is equivalent to
\begin{equation}
\Bigl(\hbar^2 \frac{d}{dx} x \frac{d}{dx} - \hbar \frac{d}{dx} x + 1 \Bigr) \psi = 0,
\end{equation}
This is the quantization
\begin{equation}
(x,y) \mto \left(\hat{x}, \hat{y} \right) = \bigl(x, \hbar \frac{d}{dx} \bigr),
\end{equation}
with choice of ordering
\begin{equation}
\left(\hat{y} \hat{x} \hat{y} - \hat{y}\hat{x} + 1 \right) \psi = 0.
\end{equation}
This is the quantum curve that was obtained in \cite{DN}.

\subsubsection*{Simple pole at $z=\infty$}

We choose the integration divisor $D = [z] - [\infty]$.

Then
\begin{equation}
C_1=\lim_{z_1 \to \infty} \frac{p_0(z_1) y(z_1)}{x_1(z_1)}= \lim_{z_1 \to \infty} \Bigl(\frac{1}{z_1} \Bigr)= 0.
\end{equation}
The quantum curve is
\begin{equation}
\Bigl(\hbar^2 \frac{d}{dx} x \frac{d}{dx} - \hbar x \frac{d}{dx} + 1 \Bigr) \psi = 0,
\end{equation}
which corresponds to the choice of ordering
\begin{equation}
\left(\hat{y} \hat{x} \hat{y} -\hat{x} \hat{y} + 1 \right) \psi = 0.
\end{equation}
The ordering is different only for the second term.

Choosing the more general integration divisor $D = [z] - \mu [1] - (1-\mu) [\infty]$ gives the family of quantum curves
\begin{equation}
\Bigl(\hbar^2 \frac{d}{dx} x \frac{d}{dx} - \hbar x \frac{d}{dx} + 1 - \mu \hbar \Bigr) \psi = 0,
\end{equation}
which interpolates between the two quantizations.

\subsection{$ x^2 y^2 + 2 x y - y + 1 = 0 $}
This example is the first that we encounter that is not linear in $x$. The Newton polygon is shown in Figure \ref{f:newex4}. From the Newton polygon, we see that $\lfloor \alpha_i \rfloor = 0$, $i=0,1$, while $\lfloor \alpha_2 \rfloor = 2$.

\begin{figure}[htb]
\begin{center}
\includegraphics[width=0.2\textwidth]{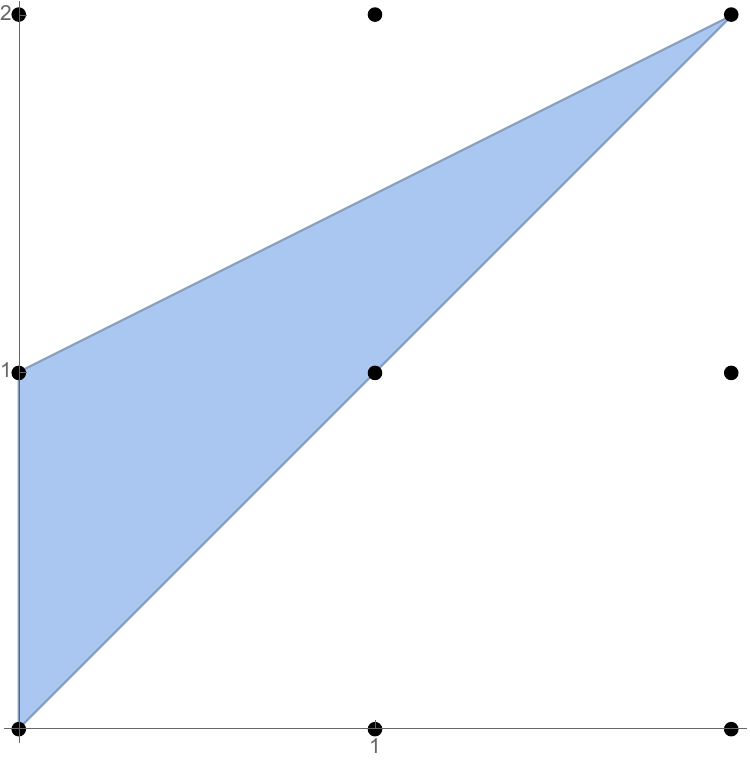}
\caption{The Newton polygon of the curve $x^2 y^2 + 2 x y - y + 1= 0$.}
\label{f:newex4}
\end{center}
\end{figure}

A parameterization is $(x,y) = \left(- \psfrac{1+z}{z^2}, z^2 \right)$. There are two ramification points; $z=-2$, the zero of $dx$, and $z=0$, the double pole of $x$. $x$ also has simple zeros at $z=-1$ and $z = \infty$.

Here the correlation functions do have poles at $z=0$, so we cannot use the double pole of $x$ for the integration divisor. But $p_0(x) = x^2$ has two zeros at $z=-1$ and $z=\infty$ that are not in $R$. We can use those for the integration divisor.

\subsubsection*{Zero at $z=-1$} We choose the integration divisor to be $D = [z] - [-1]$. Then we need to evaluate:
\begin{equation}
\begin{aligned}
\lim_{z_1 \to -1} \psi_1(x(z_1); D)
&=\psi(D) \lim_{z_1 \to -1} \left(p_0(z_1) y(z_1) \right)\\
&=\psi(D) \lim_{z_1 \to -1} \Bigl(\frac{(1+z_1)^2}{z_1^2} \Bigr)=0.
\end{aligned}
\end{equation}
Then the system of differential equations becomes
\begin{equation}
\hbar \frac{d}{dx} \left(\psi_{1}(x;D)\right) =\psi_2(x;D) - \frac{2x-1}{x^2} \psi_1(x;D)
+ \hbar \frac{1}{x} \psi_{1}(x;D).
\end{equation}
But $\psi_1(x;D) = x^2 \hbar \frac{d}{dx} \psi(D)$ and $\psi_2(x;D) = - \psi(D)$, thus the equation becomes
\begin{equation}
\Bigl[ \hbar^2 \frac{d}{dx} x^2 \frac{d}{dx} + (2x-1) \hbar \frac{d}{dx} - \hbar^2 x \frac{d}{dx} + 1 \Bigr] \psi(D) = 0.
\end{equation}
This is equivalent to
\begin{equation}
\Bigl(\hbar^2 x \frac{d}{dx} x \frac{d}{dx} + \hbar (2x-1) \frac{d}{dx} + 1 \Bigr) \psi(D) = 0,
\end{equation}
which is the quantization
\begin{equation}
(x,y) \mto \left(\hat{x}, \hat{y} \right) = \bigl(x, \hbar \frac{d}{dx} \bigr),
\end{equation}
with choice of ordering
\begin{equation}
\left(\hat{x} \hat{y} \hat{x} \hat{y} + (2 \hat{x} - 1) \hat{y} + 1 \right) \psi = 0.
\end{equation}

\subsubsection*{Zero at $z=\infty$} We choose the integration divisor to be $D = [z] - [\infty]$. Then we need to evaluate:
\begin{equation}
\lim_{z_1 \to \infty} \psi_1(x_1(z_1); D)=\psi(D) \lim_{z_1 \to \infty} \Bigl(\frac{(1+z_1)^2}{z_1^2} \Bigr)= \psi(D).
\end{equation}
Then the system of differential equations becomes
\begin{equation}
\hbar \frac{d}{dx} \left(\psi_{1}(x;D)\right) =\psi_2(x;D) - \frac{2x-1}{x^2} \psi_1(x;D)
+ \hbar \frac{1}{x} \left(\psi_{1}(x;D) - \psi(D) \right).
\end{equation}
But $\psi_1(x;D) = x^2 \hbar \frac{d}{dx} \psi(D)$ and $\psi_2(x;D) = - \psi(D)$, thus the equation becomes
\begin{equation}
\Bigl[ \hbar^2 x \frac{d}{dx} x \frac{d}{dx} + (2x-1) \hbar \frac{d}{dx} + 1 + \frac{\hbar}{x} \Bigr] \psi(D) = 0,
\end{equation}
which is a quantization, albeit not straightforward, of the spectral curve.

However, if we rewrite it in terms of the rescaled $\tilde{\psi} = \frac{1}{x} \psi$, then the quantum curve becomes
\begin{equation}
\Bigl(\hbar^2 \frac{d}{dx} x \frac{d}{d x} x + \hbar \frac{d}{d x}(2x-1) + 1 \Bigr) \tilde{\psi} = 0,
\end{equation}
which corresponds to the choice of ordering
\begin{equation}
\left(\hat{y} \hat{x} \hat{y} \hat{x} +\hat{y} (2 \hat{x} - 1) + 1 \right) \psi = 0.
\end{equation}

\subsection{$4y^2-x^2+4=0$}\label{quadratic1}

This curve appears as the spectral curve of the Gaussian matrix model (see \cite{Mehta, Wigner}). We will say more about this application below.

A parameterization is $(x,y) = \left(z + \sfrac{1}{z}, \frac12(z - \sfrac{1}{z}) \right)$.
\begin{figure}[htb]
\begin{center}
\includegraphics[width=0.2\textwidth]{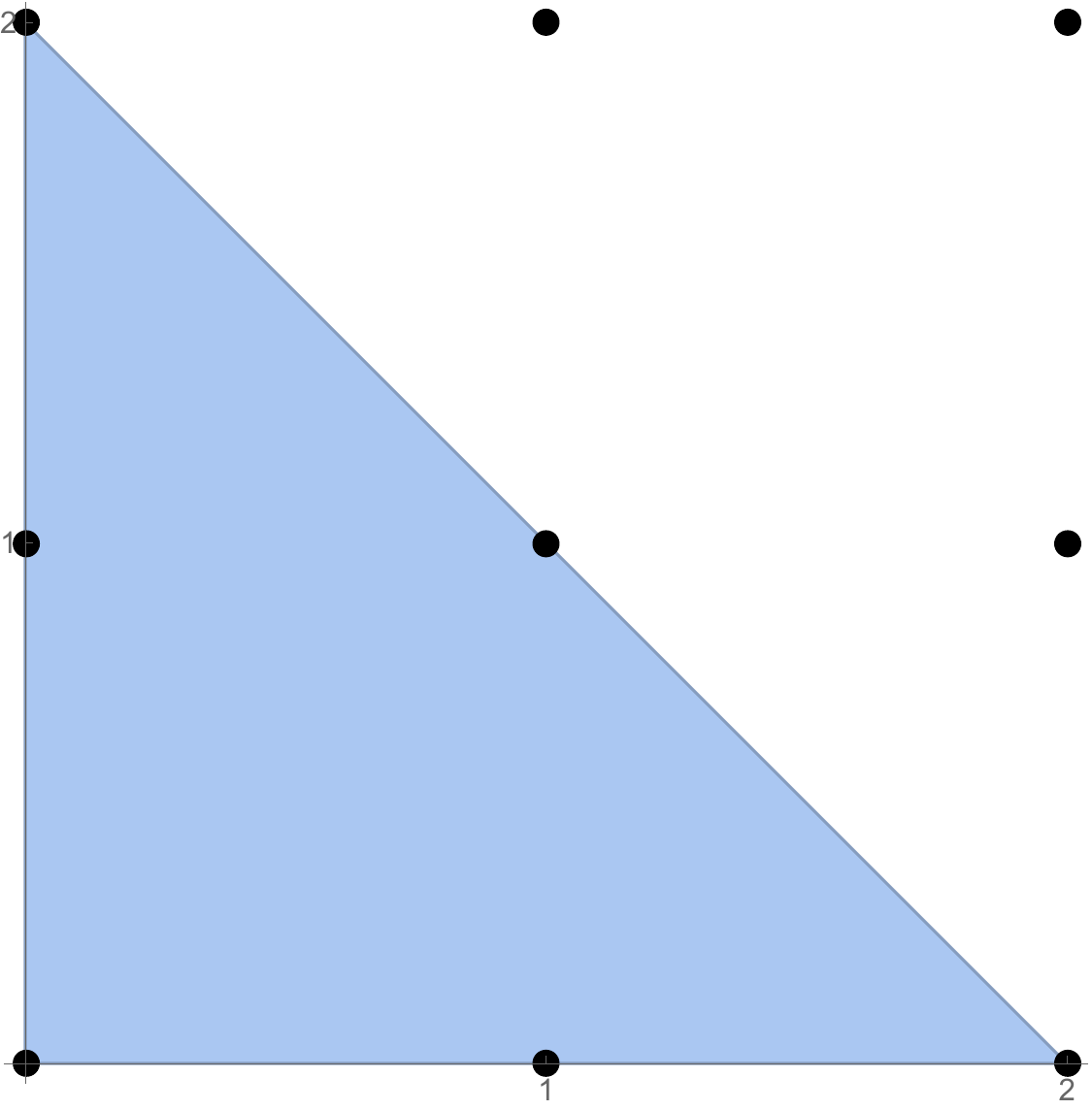}
\caption{The Newton polygon of the curve $4y^2-x^2+4=0$.}
\label{f:circle}
\end{center}
\end{figure}
Its Newton polygon is shown in Figure \ref{f:circle}. We see that $\lfloor \alpha_i \rfloor = 0$, $i=0,1,2$. We can choose the two simples poles of $x$ at $z = 0, \infty$ for the integration divisor.

\subsubsection*{Simple pole at $z = 0$}

We choose the integration divisor to be $D = [z] - [0]$. Then we need to evaluate:
\begin{equation}
C_1= \lim_{z_1 \to 0} \frac{p_0(z_1) y(z_1)}{x(z_1)}=2 \lim_{z_1 \to 0} \Bigl(\frac{z_1^2-1}{z_1^2+1} \Bigr)= - 2.
\end{equation}
The quantum curve is then
\begin{equation}\label{eq:circle}
\Bigl(4 \hbar^2 \frac{d^2}{dx^2} - x^2 + 4 + 2\hbar \Bigr) \psi = 0,
\end{equation}
which is a non-trivial quantization of the original spectral curve.

\subsubsection*{Simple pole at $z=\infty$}

The calculation for this case is similar. We get
\begin{equation}
C_1= \lim_{z_1 \to \infty} \frac{p_0(z_1) y(z_1)}{x(z_1)}=2\lim_{z_1 \to \infty} \Bigl(\frac{z_1^2-1}{z_1^2+1} \Bigr)= 2.
\end{equation}
The quantum curve is then
\begin{equation}
\Bigl(4 \hbar^2 \frac{d^2}{dx^2} -x^2 + 4 - 2\hbar \Bigr) \psi = 0,
\end{equation}
which can be obtained from the other quantum curve by $\hbar \to - \hbar$.

Choosing the more general integration divisor $D= [ z] - \mu [0] - (1-\mu) [\infty]$ gives the family of quantum curves
\begin{equation}
\Bigl(4 \hbar^2 \frac{d^2}{dx^2} -x^2 + 4 + 2 (2\mu-1)\hbar \Bigr) \psi = 0,
\end{equation}
which interpolates between the two quantizations. In particular, for $\mu=1/2$ we get the straightforward quantization
\begin{equation}
\left(4 \hat{y}^2 - \hat{x}^2 + 4 \right) \psi = 0.
\end{equation}

\subsubsection*{Relation to the Gaussian matrix model}

Let us now explain the relation between this curve and the Gaussian matrix model. Consider the Gaussian matrix integral
\begin{equation}
Z_N = \int_{H_N} dM\, e^{-\Psfrac{1}{2\hbar} \operatorname{Tr} M^2}.
\end{equation}
It is known \cite{Mehta, Wigner} that the expectation value of the characteristic polynomial
\begin{equation}
p_N(x) = \langle \det(x-M) \rangle = \frac{1}{Z_N} \int_{H_N} \det(x-M)\, dM\, e^{-\Psfrac{1}{2\hbar} \operatorname{Tr} M^2},
\end{equation}
is the monic orthogonal polynomial with respect to the measure $e^{-\sfrac{x^2}{2\hbar}}$. It is therefore the Hermite polynomial of degree $N$, in the variable $x/\sqrt\hbar$, \ie
\begin{equation}
p_N(x) = \hbar^{N/2} H_N(x/\sqrt\hbar).
\end{equation}
From it we define
\begin{equation}
\psi_N(x) = e^{\sfrac{-x^2}{4\hbar}} p_N(x).
\end{equation}
The differential equation obeyed by Hermite polynomials, $H_N''-x H'_N +N H_N=0$ implies for $\psi_N$ the equation
\begin{equation}
4 \hbar^2 \psi_N'' = (x^2-4\hbar N-2\hbar)\psi_N.
\end{equation}
The choice $N=1/\hbar$ gives
\begin{equation}
4 \hbar^2 \psi_N'' = (x^2-4-2\hbar)\psi_N.
\end{equation}
On the other hand, we have
\begin{equation}
\begin{aligned}
&\hspace*{-1cm}\bigl<\det (x-M)\bigr>\\
&= \bigl <e^{N\ln x + \operatorname{Tr} \ln(1-M/x)} \bigr> \cr
&= x^N \biggl<\exp{\int_{x'=\infty}^x \operatorname{Tr} \frac{dx'}{x'-M}-\frac{dx'}{x'}}\biggr> \cr
&= x^N \sum_{n=0}^\infty \frac{1}{n!} \int_{x'_1=\infty}^{x} \cdots \int_{x'_n=\infty}^{x} \biggl< \prod_{i=1}^n \operatorname{Tr} \frac{dx'_i}{x'_i-M}\biggr>'\cr
&= x^N \exp\biggl(\int_{x'=\infty}^{x} \Bigl< \operatorname{Tr} \frac{dx'}{x'-M}-\frac{dx'}{x'}\Bigr>\\
&\hspace*{3cm}+ \sum_{n=2}^\infty \frac{1}{n!} \int_{x'_1=\infty}^{x} \dots \int_{x'_n=\infty}^{x} \biggl< \prod_{i=1}^n \operatorname{Tr} \frac{dx'_i}{x'_i-M}\biggr>_c \biggr) \cr
&= x^N \exp{\biggl(\int_{x'=\infty}^{x'} \hspace*{-2mm}W_1(x)-\frac{Ndx'}{x'}
+ \sum_{n=2}^\infty \frac{1}{n!} \int_{x'_1=\infty}^{x} \dots \int_{x'_n=\infty}^{x} \hspace*{-2mm}W_n(x'_1,\dots,x'_n) \biggl)},
\end{aligned}
\end{equation}
where in the third line, $\langle\cdot\rangle'$ means that we should replace $dx'_i/(x'_i-M)$ by \hbox{$dx'_i/(x'_i-M)-dx'_i/x'_i$} (\ie we regularize the behavior at $\infty$, which is where the choice of the divisor matters).
In the Gaussian matrix model, all $W_n$'s have a series expansion
\begin{equation}
W_n(x_1,\dots,x_n) = \sum_{g=0}^\infty \hbar^{2g-2+n} W_{g,n}(x_1,\dots,x_n),
\end{equation}
and each coefficient $W_{g,n}$ satisfies the topological recursion with the spectral curve $4y^2=x^2-4$.

In other words, with our method we have recovered that the orthogonal polynomials associated to the Gaussian matrix model are Hermite polynomials, and our quantum spectral curve is nothing but the differential equation satisfied by Hermite polynomials.

\subsection{$xy^2 +(1-x)y-c=0$}

This curve arises in the Laguerre Random Matrix ensemble. Its quantum curve is associated to Laguerre polynomials. A parametrization is $(x,y) =\left ((c-z)/z(z-1),z \right)$.
Then $R=\{(c+ \sqrt{c^2-c}),(c- \sqrt{c^2-c})\}$.

\begin{figure}[htb]
\begin{center}
\includegraphics[width=0.2\textwidth]{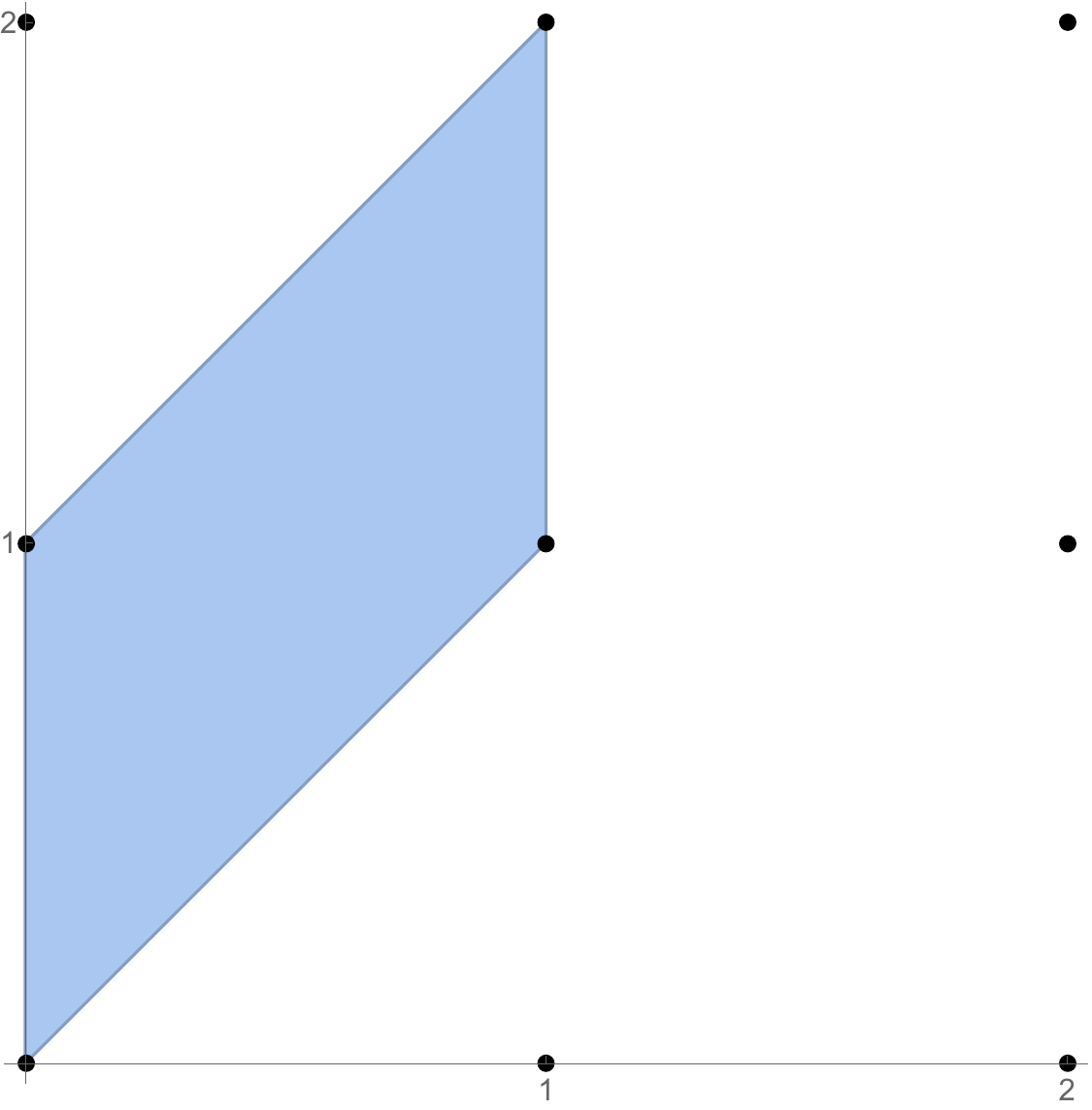}
\caption{The Newton polygon of the curve $xy^2 +(1-x)y-c=0$.}
\label{f:laguerre}
\end{center}
\end{figure}

$x$ has simple poles at $z=0,1$, and $p_0(x)=x$ vanishes at $z=c,\infty$. We can use all four of these points for the integration divisor.

Its Newton polygon is shown in Figure \ref{f:laguerre}. We see that $\lfloor \alpha_i \rfloor = 0$ for $i=0,1$, and $\lfloor \alpha_2 \rfloor = 1$.

\subsubsection*{Simple pole at $z=0$}

We choose the integration divisor $D = [z]-[0]$. We calculate
\begin{equation}
C_1 = \lim_{z_1 \to 0} \frac{p_0(z_1) y(z_1)}{x(z_1)} = \lim_{z_1 \to 0} z_1 = 0.
\end{equation}
The quantum curve is then
\begin{equation}
\Bigl(\hbar^2 \frac{d}{dx} x \frac{d}{dx} + \hbar (1-x) \frac{d}{dx} - c \Bigr) \psi =0,
\end{equation}
which is the quantization
\begin{equation}
(x,y) \mto \left(\hat{x}, \hat{y} \right) = \bigl(x, \hbar \frac{d}{dx} \bigr),
\end{equation}
with choice of ordering
\begin{equation}
\left(\hat{y} \hat{x} \hat{y} + (1 - \hat{x}) \hat{y} - c \right) \psi =0.
\end{equation}
This is indeed the Laguerre equation.

\subsubsection*{Simple pole at $z=1$}

We choose the integration divisor $D=[z]-[1]$. Evaluating the limit at $z_1 \to 1$, we get $C_1 = 1$, thus the quantum curve is
\begin{equation}
\Bigl(\hbar^2 \frac{d}{dx} x \frac{d}{dx} + \hbar (1-x) \frac{d}{dx} - c - \hbar \Bigr) \psi =0,
\end{equation}
which is equivalent to
\begin{equation}
\Bigl(\hbar^2 \frac{d}{dx} x \frac{d}{dx} + \hbar \frac{d}{dx} (1-x) - c \Bigr) \psi =0.
\end{equation}
This corresponds to the choice of ordering
\begin{equation}
\left(\hat{y} \hat{x} \hat{y} + \hat{y} (1 - \hat{x}) - c \right) \psi =0.
\end{equation}

With the more general integration divisor $D = [z] - \mu [1] - (1-\mu) [0]$, we get the family of quantum curves
\begin{equation}
\Bigl(\hbar^2 \frac{d}{dx} x \frac{d}{dx} + \hbar (1-x) \frac{d}{dx} - c - \mu \hbar \Bigr) \psi =0,
\end{equation}
which interpolates between the two quantizations.

\subsubsection*{Zero at $z = c$}

We choose the integration divisor $D = [z]-[c]$. Then we need to evaluate
\begin{equation}
\lim_{z_1 \to 1} \psi_1(x_1(z_1); D) = \psi(D) \lim_{z_1 \to c} x(z_1) y(z_1) = 0.
\end{equation}
The system becomes
\begin{equation}
\hbar \frac{d}{d x} \psi_1(x;D) = \psi_2(x;D) - \frac{1-x}{x} \psi_1(x;D) + \hbar \frac{1}{x} \psi_1(x;D).
\end{equation}
But $\psi_2(x;D) = c \psi(D)$ and $\psi_1(x;D) = x \hbar \frac{d}{dx} \psi(D)$, thus we get the quantum curve
\begin{equation}
\Bigl(\hbar^2 \frac{d}{dx} x \frac{d}{dx} + (1-x) \hbar \frac{d}{dx} - \hbar^2 \frac{d}{dx}- c \Bigr) \psi = 0,
\end{equation}
which is equivalent to
\begin{equation}
\Bigl(\hbar^2 x \frac{d^2}{dx^2} + (1-x) \hbar \frac{d}{dx}- c \Bigr) \psi = 0.
\end{equation}
This corresponds to the choice of ordering
\begin{equation}
\left(\hat{x} \hat{y}^2 +(1 - \hat{x}) \hat{y} - c \right) \psi =0.
\end{equation}

\subsubsection*{Zero at $z=\infty$}

We choose the integration divisor $D = [z]-[\infty]$. In this case we get
\begin{equation}
\lim_{z_1 \to \infty} \psi_1(x_1(z_1); D) = \psi(D) \lim_{z_1 \to \infty} x(z_1) y(z_1) = - \psi(D).
\end{equation}
The system becomes
\begin{equation}
\hbar \frac{d}{d x} \psi_1(x;D) = \psi_2(x;D) - \frac{1-x}{x} \psi_1(x;D) + \hbar \frac{1}{x} \left(\psi_1(x;D) + \psi(D) \right).
\end{equation}
But $\psi_2(x;D) = c \psi(D)$ and $\psi_1(x;D) = x \hbar \frac{d}{dx} \psi(D)$, thus we get the quantum curve
\begin{equation}
\Bigl(\hbar^2 \frac{d}{dx} x \frac{d}{dx} + (1-x) \hbar \frac{d}{dx} - \hbar^2 \frac{d}{dx}- c - \hbar \frac{1}{x} \Bigr) \psi = 0,
\end{equation}
which is equivalent to
\begin{equation}
\Bigl(\hbar^2 x \frac{d^2}{dx^2} + (1-x) \hbar \frac{d}{dx}- c - \hbar \frac{1}{x} \Bigr) \psi = 0.
\end{equation}
This is another quantization of the spectral curve, although not as straightforward.

\subsection{$x^5 y^2 + x^2 y + 1 = 0$}
This curve is an interesting example that is of higher degree in $x$. A parameterization is $(x,y) = \left(\psfrac{z-1}{z^2}, - \sfrac{z^5}{(z-1)^3} \right)$. Then $R =\{0,2\}$. The Newton polygon is shown in Figure \ref{f:deg5}. We see that $\lfloor \alpha_0 \rfloor = 0$, $\lfloor \alpha_1 \rfloor = 2$ and $\lfloor \alpha_2 \rfloor = 5$.

\begin{figure}[htb]
\begin{center}
\includegraphics[width=0.25\textwidth]{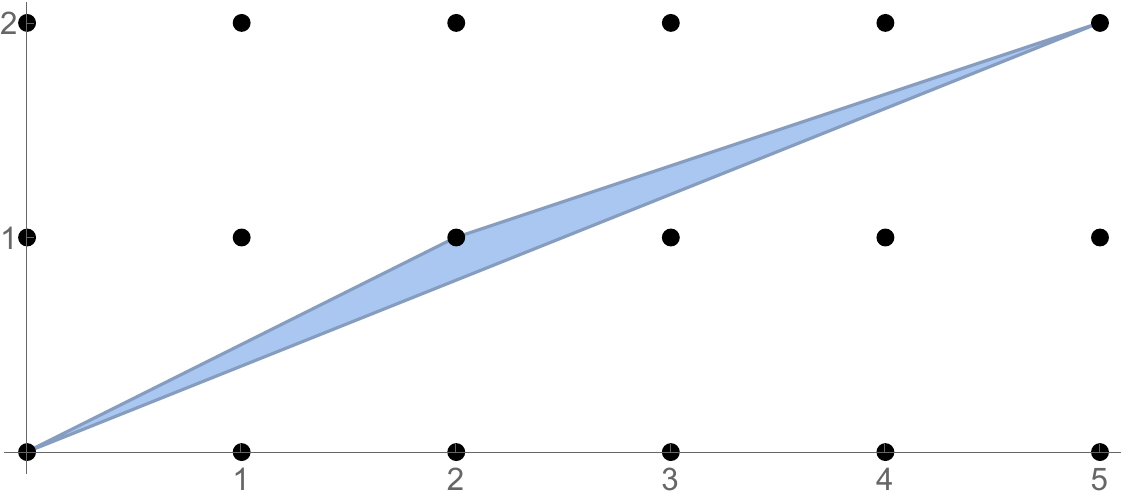}
\caption{The Newton polygon of the curve $x^5 y^2 + x^2 y + 1 = 0$.}
\label{f:deg5}
\end{center}
\end{figure}

The only pole of $x$ is at $z=0$. It is a double pole, hence is in $R$. In fact, the correlation functions have poles at $z=0$, so we cannot use it for the integration divisor since the integrals would not converge.

However, $p_0(x) = x^5$ has zeros at $z=1$ and $z=\infty$ that we can use.

\subsubsection*{Zero at $z=1$}

We choose the integration divisor $D = [z]-[1]$. We need to evaluate
\begin{equation}
\begin{aligned}
\lim_{z_1 \to 1} \psi_1(x(z_1); D)
&=\psi(D) \lim_{z_1 \to 1} \Bigl(\frac{p_0(z_1) y(z_1)}{x(z_1)^{\lfloor \alpha_1 \rfloor}} \Bigr)\\
&=- \psi(D) \lim_{z_1 \to 1} \Bigl(\frac{(z_1-1)^3}{z_1^{6}} \frac{z_1^5}{(z_1-1)^3} \Bigr)= - \psi(D).
\end{aligned}
\end{equation}
Then the differential equation becomes
\begin{equation}
\hbar \frac{d}{dx} \left(\psi_{1}(x;D)\right) = \frac{1}{x^{2}} \psi_2(x;D) - \frac{x^4}{x^7} \psi_1(x;D)
+ \hbar \frac{1}{x} \left(\psi_{1}(x;D) + \psi(D) \right).
\end{equation}
But $\psi_1(x;D) = x^3 \hbar \frac{d}{dx} \psi(D)$ and $\psi_2(x;D) = - \psi(D)$, hence we get
\begin{equation}
\Bigl[ \hbar^2 \frac{d}{dx} x^3 \frac{d}{dx} + \hbar \frac{d}{dx} - x^2 \hbar^2 \frac{d}{dx} + \frac{1}{x^2} - \frac{\hbar}{x} \Bigr] \psi(D)=0.
\end{equation}
This is equivalent to
\begin{equation}
\Bigl[ \hbar^2 x^3 \frac{d}{dx} x^2 \frac{d}{dx} + \hbar x^2 \frac{d}{dx} + 1 - \hbar x \Bigr] \psi(D)=0,
\end{equation}
which is a non-trivial quantization of the original spectral curve.

\subsubsection*{Zero at $z=\infty$}

We choose the integration divisor $D = [z]-[\infty]$. We need to evaluate
\begin{equation}
\lim_{z_1 \to \infty} \psi_1(x(z_1); D)=- \psi(D) \lim_{z_1 \to \infty} \Bigl(\frac{(z_1-1)^3}{z_1^{6}} \frac{z_1^5}{(z_1-1)^3} \Bigl)= 0.
\end{equation}
Then the differential equation becomes
\begin{equation}
\hbar \frac{d}{dx} \left(\psi_{1}(x;D)\right) = \frac{1}{x^{2}} \psi_2(x;D) - \frac{x^4}{x^7} \psi_1(x;D)
+ \hbar \frac{1}{x} \psi_{1}(x;D).
\end{equation}
But $\psi_1(x;D) = x^3 \hbar \frac{d}{dx} \psi(D)$ and $\psi_2(x;D) = - \psi(D)$, hence we get
\begin{equation}
\Bigl[ \hbar^2 \frac{d}{dx} x^3 \frac{d}{dx} + \hbar \frac{d}{dx} - x^2 \hbar^2 \frac{d}{dx} + \frac{1}{x^2} \Bigr] \psi(D)=0.
\end{equation}
This is equivalent to
\begin{equation}
\Bigl[ \hbar^2 x^3 \frac{d}{dx} x^2 \frac{d}{dx} + \hbar x^2 \frac{d}{dx} + 1 \Bigr] \psi(D)=0,
\end{equation}
which is the quantization
\begin{equation}
(x,y) \mto \left(\hat{x}, \hat{y} \right) = \bigl(x, \hbar \frac{d}{dx} \bigr),
\end{equation}
with choice of ordering
\begin{equation}
\left(\hat{x}^3 \hat{y} \hat{x}^2 \hat{y} +\hat{x}^2 \hat{y} + 1 \right) \psi = 0.
\end{equation}

\subsection{$x^4 y^4 - x^3 y^2 + x + 3 = 0$}

This is a fun curve since it is of degree $4$ in $y$ and it has non-zero $\lfloor \alpha_i \rfloor$. A parameterization is $(x,y) = \left(\ppfrac{z^4+3}{z^2-1}, \spfrac{z (z^2-1)}{z^4+3} \right)$. Then $R=\{0, i, -i, \sqrt{3}, - \sqrt{3}, \infty \}$. $x$ has poles at $z = \pm 1$ and at $\infty$. Its Newton polygon is shown in Figure \ref{f:deg4}. We see that $\lfloor \alpha_i \rfloor = i$ for $i=0,\ldots,4$.

\begin{figure}[htb]
\begin{center}
\includegraphics[width=0.2\textwidth]{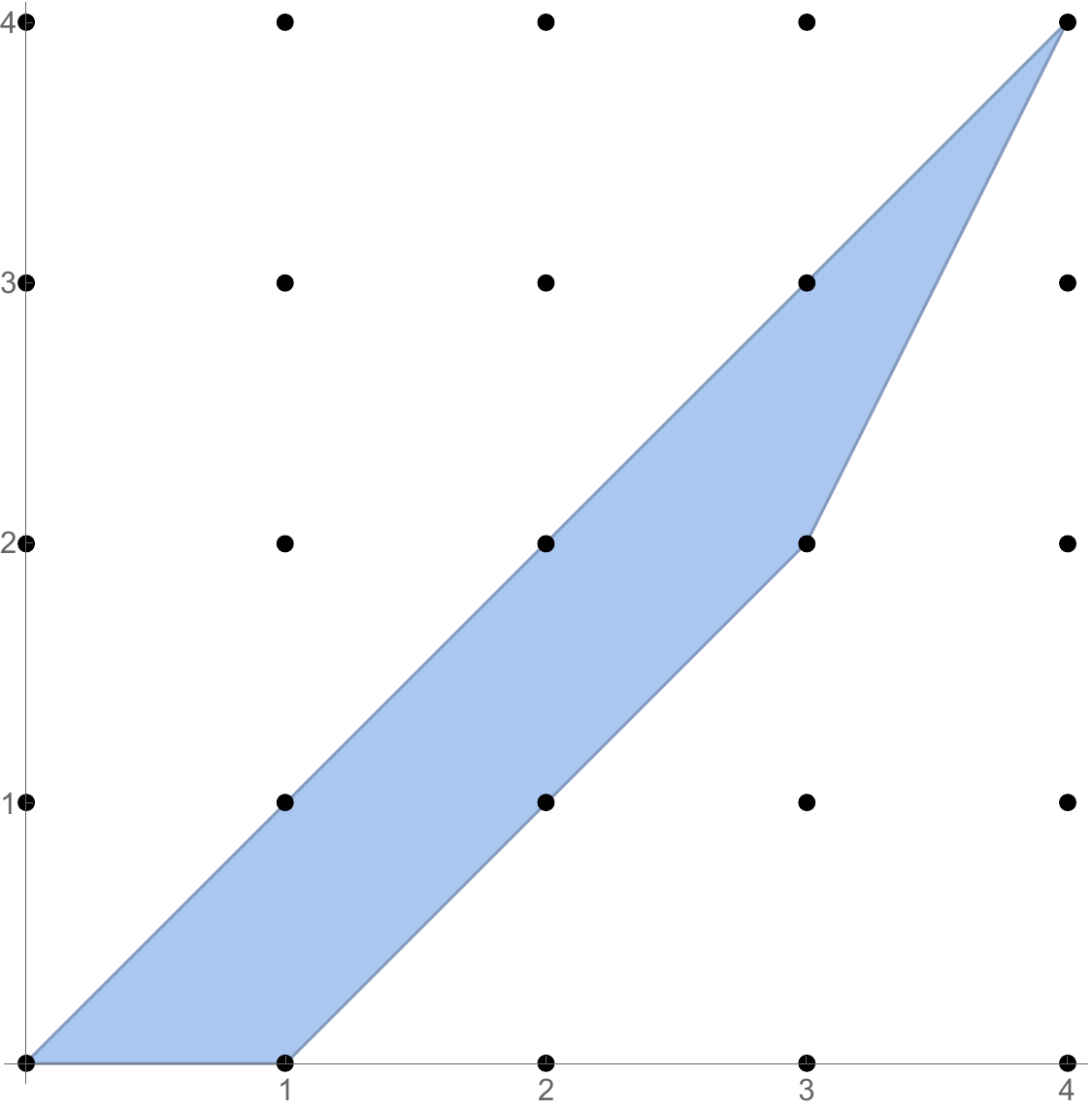}
\caption{The Newton polygon of the curve $x^4 y^4 - x^3 y^2 + x + 3 = 0$.}
\label{f:deg4}
\end{center}
\end{figure}

We can use the two poles of $x$ at $z = \pm 1$ for the integration divisor. As for the pole at $\infty$, it is in $R$, but we can still use it since the correlation functions have no pole at this point, hence the integrals converge.

\subsubsection*{Simple pole at $z=1$}

We choose the integration divisor $D = [z]-[1]$. We calculate:
\begin{equation}
C_1 = \lim_{z_1 \to 1} \frac{p_0(z_1) y(z_1)}{x(z_1)^{\lfloor \alpha_3 \rfloor + 1}}= \lim_{z_1 \to 1} y(z_1) = 0,
\end{equation}
\begin{equation}
C_2 =\lim_{z_1 \to 1} \frac{p_0(z_1) y(z_1)^2 + p_1(z_1) y(z_1)}{x(z_1)^{\lfloor \alpha_2 \rfloor + 1}}= \lim_{z_1 \to 1} x(z_1) y(z_1)^2 = 0,
\end{equation}
and
\begin{equation}
\begin{aligned}
C_3 &=\lim_{z_1 \to 1} \frac{p_0(z_1) y(z_1)^3 + p_1(z_1) y(z_1)^2 + p_2(z_1) y(z_1)}{x(z_1)^{\lfloor \alpha_1 \rfloor + 1}}\\
&= \lim_{z_1 \to 1} \left(x(z_1)^2 y(z_1)^3 - x(z_1) y(z_1) \right) = -1.
\end{aligned}
\end{equation}
The quantum curve is then
\begin{equation}
\Bigl(\hbar^4 x \frac{d}{dx} x \frac{d}{dx} x \frac{d}{dx} x \frac{d}{dx} -\hbar^2 x \frac{d}{dx} x^2 \frac{d}{dx} + x + 3 + \hbar x \Bigr) \psi =0,
\end{equation}
which is a quantization of the spectral curve.

\subsubsection*{Simple pole at $z=-1$}

The calculation is very similar for the integration divisor $D = [z]-[-1]$. Evaluating the limits at $z_1 \to -1$, we get $C_1=0$, $C_2 = 0$ and $C_3 = 1$. Therefore, the quantum curve is
\begin{equation}
\Bigl(\hbar^4 x \frac{d}{dx} x \frac{d}{dx} x \frac{d}{dx} x \frac{d}{dx} -\hbar^2 x \frac{d}{dx} x^2 \frac{d}{dx} + x + 3 - \hbar x \Bigr) \psi =0,
\end{equation}
which can be obtained from the previous quantum curve by $\hbar \to - \hbar$.

\subsubsection*{Pole at $\infty$}

This case is a little different. We choose the integration divisor $D = [z]-[\infty]$. Calculating the limits at $z_1 \to \infty$, we get $C_1 = 0$, $C_2 = 1$ and $C_3 = 0$. Therefore, in this case the quantum curve is
\begin{equation}
\Bigl(\hbar^4 x \frac{d}{dx} x \frac{d}{dx} x \frac{d}{dx} x \frac{d}{dx} -\hbar^2 x \frac{d}{dx} x^2 \frac{d}{dx} - \hbar^2 x \frac{d}{dx} x + x + 3 \Bigr) \psi =0.
\end{equation}
This can be rewritten as
\begin{equation}
\left(\hbar^4 x \frac{d}{dx} x \frac{d}{dx} x \frac{d}{dx} x \frac{d}{dx} -\hbar^2 x \frac{d}{dx} x \frac{d}{dx} x + x + 3 \right) \psi =0,
\end{equation}
which is the quantization
\begin{equation}
(x,y) \mto \left(\hat{x}, \hat{y} \right) = \bigl(x, \hbar \frac{d}{dx} \bigr),
\end{equation}
with choice of ordering
\begin{equation}
\left(\hat{x}\hat{y}\hat{x}\hat{y}\hat{x}\hat{y}\hat{x}\hat{y} - \hat{x} \hat{y}\hat{x}\hat{y}\hat{x} + \hat{x} + 3 \right) \psi =0.
\end{equation}

We can choose the more general integration divisor $D = [z] - \mu_1 [1] - \mu_2[-1] - (1-\nobreak\mu_1-\nobreak\mu_2)[\infty]$. We get the two-parameter family of quantum curves
\begin{multline}
\Bigl(\hbar^4 x \frac{d}{dx} x \frac{d}{dx} x \frac{d}{dx} x \frac{d}{dx} -\hbar^2 x \frac{d}{dx} x^2 \frac{d}{dx}\\
- (1-\mu_1-\mu_2) \hbar^2 x \frac{d}{dx} x+ x + 3 + (\mu_1 -\mu_2) \hbar x \Bigr) \psi =0,
\end{multline}
which interpolates between the three quantizations above.

\section{\texorpdfstring{$r$}{r}-spin intersection numbers and \texorpdfstring{$r$}{r}-Airy curve}

In this section we study in more detail a particularly interesting example of a spectral curve, namely:
\begin{equation}
y^r-x = 0,
\end{equation}
which we call the \emph{$r$-Airy curve}. We proved that $\psi$ as constructed in \eqref{e:psi} for the $r$-Airy curve is the WKB asymptotic solution to the differential equation
\begin{equation}
\Bigl(\hbar^r \frac{d^r}{dx^r} - x \Bigr) \psi = 0.
\end{equation}

On the one hand, this differential equation is intimately related to the $r$-KdV integrable hierarchy. On the other hand, Witten's conjecture \cite{Witten}, which was proven in \cite{FSZ}, states that certain generating functions for $r$-spin intersection numbers satisfy the $r$-KdV integrable hierarchy. From these two statements we can deduce that the meromorphic differentials constructed from the topological recursion on the $r$-Airy curve should be generating functions for $r$-spin intersection numbers. More precisely,
\begin{thm}\label{t:rspin}
Let $W_{g,n}(z_1,\ldots,z_n)$ be the meromorphic differentials constructed from the topological recursion applied to the spectral curve $y^r-x=0$, with parameterization $y=z, x=z^r$. Then
\begin{multline}
W_{g,n}(z_1,\ldots,z_n)\\[-8pt]
= (-r)^{g-1} (-1)^n \sum_{\substack{0 \leq j_1, \ldots, j_n \leq r-2\\0 \leq m_1, \ldots, m_n}} \prod_{i=1}^n \frac{c_{m_i, j_i} (r m_i + j_i + 1) d z_i}{z_i^{m_i r + j_i + 2}} \biggl \langle \prod_{i=1}^n \tau_{m_i, j_i} \biggr \rangle_g,
\end{multline}
where the $\left \langle \prod_{i=1}^n \tau_{m_i, j_i} \right \rangle_g$ are intersection numbers over the moduli space of $r$-spin curves (in standard notation) and
\begin{equation}
c_{m,j} = (-1)^m \frac{\Gamma\left(m + \psfrac{j+1}{r} \right)}{\Gamma\left(\psfrac{j+1}{r} \right)}, \quad j=0, \ldots, r-2.
\end{equation}
We note that the intersection numbers are non-vanishing only if
\begin{equation}
\sum_{i=1}^n (r m_i + j_i + 1) = (r+1) (2g-2+n).
\end{equation}
\end{thm}

This theorem was announced in September 2014 \cite{BEAIM}. A detailed proof of this theorem based on matrix model analysis will be provided in a future publication \cite{BEf}. Meanwhile, an alternative proof was presented recently in the preprint \cite{DBNOPS} by relating the topological recursion to Dubrovin's superpotential.

\subsection{Calculations}

We can use this result to calculate $r$-spin intersection numbers. Here are a few example calculations.

\subsubsection{$r=3$}

Let us list some results for $r=3$. The first few meromorphic differentials are:
\begin{align*}
W_{0,3}(z_1,z_2,z_3) &= d z_1 d z_2 d z_3 \left(\frac{2}{3 z_1^3 z_3^2 z_2^2}+\frac{2}{3 z_1^2 z_3^3
z_2^2}+\frac{2}{3 z_1^2 z_3^2 z_2^3}\right),\\
W_{0,4}(z_1,z_2,z_3,z_4) &= d z_1 d z_2 d z_3 d z_4 \Big(\frac{10}{9 z_1^6 z_3^2 z_4^2z_2^2}+\frac{8}{9 z_1^5 z_3^3
z_4^2 z_2^2}+\frac{8}{9 z_1^3
z_3^5 z_4^2 z_2^2}\\
&+\frac{10}{9z_1^2 z_3^6 z_4^2
z_2^2}+\frac{8}{9 z_1^5 z_3^2
z_4^3 z_2^2}+\frac{8}{9 z_1^2
z_3^5 z_4^3 z_2^2}+\frac{8}{9
z_1^3 z_3^2 z_4^5
z_2^2}+\frac{8}{9 z_1^2 z_3^3
z_4^5 z_2^2}\\
&+\frac{10}{9 z_1^2
z_3^2 z_4^6 z_2^2}+\frac{8}{9
z_1^5 z_3^2 z_4^2
z_2^3}+\frac{8}{9 z_1^2 z_3^5
z_4^2 z_2^3}-\frac{16}{9 z_1^3
z_3^3 z_4^3 z_2^3}+\frac{8}{9
z_1^2 z_3^2 z_4^5
z_2^3}\\
& +\frac{8}{9 z_1^3 z_3^2
z_4^2 z_2^5}+\frac{8}{9 z_1^2
z_3^3 z_4^2 z_2^5}+\frac{8}{9
z_1^2 z_3^2 z_4^3
z_2^5}+\frac{10}{9 z_1^2 z_3^2
z_4^2 z_2^6}\Big),\\
W_{1,1}(z_1)&= d z_1 \Bigl(\frac{1}{9 z_1^5} \Bigr),\\
W_{1,2}(z_1,z_2) &= d z_1 d z_2 \Bigl(\frac{7}{27 z_2^8 z_1^2}+\frac{4}{27 z_2^5
z_1^5}+\frac{7}{27 z_2^2 z_1^8} \Bigr)\\
W_{2,1} (z_1) &= 0,
\end{align*}
\begin{align*}
W_{2,2}(z_1,z_2) &= d z_1 d z_2 \Bigl(-\frac{770}{729 z_2^{15} z_1^3}-\frac{605}{729 z_2^{12}
z_1^6}-\frac{680}{729 z_2^9 z_1^9}\\
&\hspace*{5.3cm}-\frac{605}{729 z_2^6
z_1^{12}}-\frac{770}{729 z_2^3 z_1^{15}} \Bigr),\\[-5pt]
W_{3,1} (z_1) &= d z_1 \Bigl(-\frac{32725}{19683 z_1^{21}} \Bigr).
\end{align*}

From these we can extract intersection numbers for $r=3$. We get, at genus 0:
\begin{gather}
\langle \tau_{0,0}^2 \tau_{0,1} \rangle_0 = 1, \quad \langle \tau_{0,0}^3 \tau_{1,1} \rangle_0 = 1,\quad
\langle \tau_{0,0}^2 \tau_{0,1} \tau_{1,0} \rangle_0 = 1, \quad \langle \tau_{0,1}^4 \rangle_0 = \frac{1}{3}.
\end{gather}
At genus 1,
\begin{equation}
\begin{gathered}
\langle \tau_{1,0} \rangle_1 = \frac{1}{12}, \quad \langle \tau_{0,0} \tau_{2,0} \rangle_1 = \frac{1}{12}, \quad
\langle \tau_{1,0}^2 \rangle_1 = \frac{1}{12}, \quad \langle \tau_{0,0}^2 \tau_{3,0} \rangle_1 = \frac{1}{12},\\
\langle \tau_{0,1}^2 \tau_{2,1} \rangle_1 = \frac{1}{36}, \quad
\langle \tau_{0,0} \tau_{1,0} \tau_{2,0} \rangle_1 = \frac{1}{6}, \quad \langle \tau_{0,1} \tau_{1,1}^2 \rangle_1 = \frac{1}{36}, \quad \langle \tau_{1,0}^3 \rangle_1 = \frac{1}{6}.
\end{gathered}
\end{equation}
At genus 2,
\begin{gather}
\langle \tau_{0,1} \tau_{4,1} \rangle_2 = \frac{1}{864}, \quad \langle \tau_{1,1} \tau_{3,1} \rangle = \frac{11}{4320}, \quad \langle \tau_{2,1}^2 \rangle_2 = \frac{17}{4320}.
\end{gather}
And finally, at genus 3,
\begin{equation}
\langle \tau_{6,1} \rangle_3 = \frac{1}{31104}.
\end{equation}
Those results agree with known r-spin intersection numbers (see for instance \cite{BH}).

\subsubsection{$r=4$}

The first few meromorphic differentials are
\begin{align*}
W_{0,3} (z_1, z_2, z_3) &= d z_1 d z_2 d z_3 \Bigl(\frac{3}{4 z_2^4 z_3^2
z_1^2}+\frac{1}{z_2^3 z_3^3
z_1^2}+\frac{3}{4 z_2^2 z_3^4
z_1^2}+\frac{1}{z_2^3 z_3^2
z_1^3}\\
&\hspace*{6.7cm}+\frac{1}{z_2^2 z_3^3
z_1^3}+\frac{3}{4 z_2^2 z_3^2
z_1^4} \Bigr),\\[-8pt]
W_{1,1}(z_1) &= d z_1 \Bigl(\frac{5}{32 z_1^6} \Bigr),\\
W_{1,2}(z_1, z_2) &= d z_1 d z_2 \Bigl(\frac{45}{128 z_2^{10}
z_1^2}-\frac{21}{128 z_2^8
z_1^4}+\frac{25}{128 z_2^6
z_1^6}-\frac{21}{128 z_2^4
z_1^8}+\frac{45}{128 z_2^2
z_1^{10}} \Bigr),\\
W_{2,1}(z_1) &= d z_1 \Bigl(-\frac{2079}{8192 z_1^{16}} \Bigr),
\end{align*}
from which we extract the following intersection numbers.

At genus $0$,
\begin{equation}
\langle \tau_{0,0}^2 \tau_{0,2} \rangle_0 = 1, \quad \langle \tau_{0,0} \tau_{0,1}^2 \rangle_0 = 1.
\end{equation}
At genus $1$,
\begin{equation}
\langle \tau_{1,0} \rangle_1 = \frac{1}{8}, \quad \langle \tau_{0,0} \tau_{2,0} \rangle_1= \frac{1}{8}, \quad \langle \tau_{0,2} \tau_{1,2} \rangle_1 = \frac{1}{96}, \quad \langle \tau_{1,0}^2 \rangle_1= \frac{1}{8},
\end{equation}
and at genus $2$,
\begin{equation}
\langle \tau_{3,2} \rangle_2 = \frac{3}{2560}.
\end{equation}
These again match known results, such as in \cite{BH}.

We also calculated higher order meromorphic differentials and intersection numbers, and they match with known results. Calculations and results are available upon request.

\subsubsection{$r=5$}

For $r=5$, we get the following meromorphic differentials:
\begin{align*}
W_{0,3} (z_1, z_2, z_3) &= d z_1 d z_2 d z_3 \Bigl(\frac{4}{5 z_2^5 z_3^2
z_1^2}+\frac{6}{5 z_2^4 z_3^3
z_1^2}+\frac{6}{5 z_2^3 z_3^4
z_1^2}+\frac{4}{5 z_2^2 z_3^5
z_1^2}+\frac{6}{5 z_2^4 z_3^2
z_1^3}\\
&\hspace*{1.5cm}+\frac{8}{5 z_2^3 z_3^3
z_1^3}+\frac{6}{5 z_2^2 z_3^4
z_1^3}+\frac{6}{5 z_2^3 z_3^2
z_1^4}+\frac{6}{5 z_2^2 z_3^3
z_1^4}+\frac{4}{5 z_2^2 z_3^2
z_1^5}\Bigr),\\
W_{1,1}(z_1) &= d z_1 \Bigl(\frac{1}{5 z_1^7} \Bigr),\\
W_{1,2}(z_1, z_2) &= d z_1 d z_2 \Bigl(\frac{11}{25 z_2^{12}
z_1^2}-\frac{9}{25 z_2^{10}
z_1^4}-\frac{8}{25 z_2^9
z_1^5}+\frac{6}{25 z_2^7
z_1^7}\\
&\hspace*{5cm}-\frac{8}{25 z_2^5
z_1^9}-\frac{9}{25 z_2^4
z_1^{10}}+\frac{11}{25 z_2^2
z_1^{12}} \Bigr),\\
W_{2,1}(z_1) &= d z_1 \Bigl(-\frac{429}{625 z_1^{19}} \Bigr).
\end{align*}

The corresponding intersection numbers are as follows.
At genus $0$,
\begin{equation}
\langle \tau_{0,0}^2 \tau_{0,3} \rangle_0 = 1, \quad \langle \tau_{0,0} \tau_{0,1} \tau_{0,2} \rangle_0 = 1, \quad \langle \tau_{0,1}^3 \rangle_0 = 1.
\end{equation}
At genus $1$,
\begin{equation}
\begin{gathered}
\langle \tau_{1,0} \rangle_1 = \frac{1}{6}, \quad \langle \tau_{0,0} \tau_{2,0} \rangle_1= \frac{1}{6}, \\
\langle \tau_{0,2} \tau_{1,3} \rangle_1 = \frac{1}{60}, \quad
\langle \tau_{0,3} \tau_{1,2} \rangle_1= \frac{1}{60}, \quad \langle \tau_{1,0}^2 \rangle_1 = \frac{1}{6},
\end{gathered}
\end{equation}
and at genus $2$,
\begin{equation}
\langle \tau_{3,2} \rangle_2 = \frac{11}{3600}.
\end{equation}
These also match known results, such as in \cite{BH}.

We also calculated higher order meromorphic differentials and intersection numbers, and they match with known results. Calculations and results are available upon request.

\backmatter
\providecommand{\bysame}{\leavevmode\hbox to3em{\hrulefill}\thinspace}
\providecommand{\MR}{\relax\ifhmode\unskip\space\fi MR }
\providecommand{\MRhref}[2]{%
  \href{http://www.ams.org/mathscinet-getitem?mr=#1}{#2}
}
\providecommand{\href}[2]{#2}

\end{document}